\DeclarePairedDelimiterX{\ket}[1]{\lvert}{\rangle}{#1}
\DeclarePairedDelimiterX{\kket}[1]{\lvert}{\rangle\rangle}{#1}
\DeclarePairedDelimiterX{\bra}[1]{\langle}{\rvert}{#1}
\DeclarePairedDelimiterX{\bbra}[1]{\langle\langle}{\rvert}{#1}
\newcommand{\bracket}[2]{\langle #1 \vert #2 \rangle}
\newcommand{\abs}[1]{\left\lvert #1 \right\rvert} 
\newcommand{\tr}{\operatorname{Tr}} 
\newcommand{\pnorm}[2][p]{\left\lVert #2 \right\rVert_{#1}} 
\newcommand{\dnorm}[1]{\left\lVert #1 \right\rVert_{\diamond}} 
\newcommand{\cptp}{\text{CPTP}} 
\newcommand{\sepd}{\text{SepD}} 
\newcommand{\sepc}{\text{SepC}} 
\newcommand{\seppsc}{\text{SEPPSC}} 
\def\thmhead@plain#1#2#3{%
  \thmname{#1}\thmnumber{\@ifnotempty{#1}{ }\@upn{#2}}%
  \thmnote{ {\the\thm@notefont#3}}}
\let\thmhead\thmhead@plain
\theoremstyle{definition}
\newtheorem{definition}{Definition}[section]
\theoremstyle{definition}
\newtheorem{theorem}{Theorem}[section]
\newtheorem{lemma}[theorem]{Lemma}
\theoremstyle{remark}
\newtheorem{proposition}[theorem]{Proposition}
\theoremstyle{remark}
\mathchardef\mhyphen="2D
\begin{document}

\title{One-Shot Manipulation of Entanglement for Quantum Channels}

\author{Ho-Joon Kim, Soojoon Lee, Ludovico Lami, and Martin B. Plenio
\thanks{H.-J.~Kim and S.~Lee are with the Department of Mathematics, Kyung Hee University, Seoul 02447, Korea. L.~Lami and M.B.~Plenio are with the Institute of Theoretical Physics and IQST, Universit\"at Ulm, Albert-Einstein-Allee 11, D-89081 Ulm, Germany.}}


\maketitle

\begin{abstract}
We show that the dynamic resource theory of quantum entanglement can be formulated using the superchannel theory. In this formulation, we identify the separable channels and the class of free superchannels that preserve channel separability as free resources, and choose the swap channels as dynamic entanglement golden units. Our first result is that the one-shot dynamic entanglement cost of a bipartite quantum channel under the free superchannels is bounded by the standard log-robustness of channels. The one-shot distillable dynamic entanglement of a bipartite quantum channel under the free superchannels is found to be bounded by a resource monotone that we construct from the hypothesis-testing relative entropy of channels with minimization over separable channels. We also address the one-shot catalytic dynamic entanglement cost of a bipartite quantum channel under a larger class of free superchannels that could generate the dynamic entanglement which is asymptotically negligible; it is bounded by the generalized log-robustness of channels.
\end{abstract}

\section{Introduction}
The emergence of the modern development of quantum information science is tightly linked to a fundamental change of paradigm that characterizes our appreciation of fundamental traits of quantum mechanics. Rather than viewing these merely as counter-intuitive departures from our classical world view, in recent years we have come to recognize fundamental quantum features as resources that enable us to solve technological and information theoretic tasks more efficiently than classical physics would allow. The desire to investigate systematically which aspects of quantum mechanics are responsible for potential operational advantages has led to the development of quantum resource theories \cite{chitambar2019quantum}. The most basic concept that gives rise to the structure of resource theories is the concept of constraints that are imposed on our ability to operate beyond those that are already enforced by the laws of quantum mechanics. From this emerges by an elegant inevitability the concept of free states and operations. These are those that can be prepared and executed without violation of the constraints. These two main ingredients allow for the formulation of a rigorous theoretical framework in which to analyze resources quantitatively. Perhaps the most fundamental examples are represented by the theory of quantum coherence, which marks the delineation between classical and quantum physics already at the level of individual particles \cite{baumgratz2014QuantifyingCoherence,winter2016OperationalResourceCoherence,streltsov2017QuantumCoherenceResource}, and, historically having emerged first, the theory of entanglement, which explores the value of quantum correlations as opposed to classical correlations \cite{plenio2007introduction,horodecki2009quantum}. These were followed by a host of resource theories including that of superposition \cite{aberg2006QuantifyingSuperposition,theurer2017ResourceTheorySuperposition}, of reference frames \cite{gour2008resource}, of Gaussianity \cite{lami2018gaussian}, of quantum optical non-classicality \cite{kctan2017QuantifyingCoherenceCoherent,yadin2018operational,ferrari2020asymptotic}, of indistinguishable particles \cite{killoran2014extracting,killoran2016converting,morris2020entanglement},
and of thermodynamics \cite{brandao2013ResourceThermalEquilibrium,horodecki2013FundamentalLimitationsThermodynamics,brandao2015SecondLawsThermodynamics}. 

Initially, the focus of attention in entanglement theory was placed squarely on the entanglement content of quantum states, i.e.\ (i)~which states contain entanglement \cite{Peres1996,horodecki1996}, (ii)~how the entanglement of states can be transformed under local operations and classical communication \cite{bennett1996a, nielsen1999ConditionsEntanglementTransformations, jonathan1999minimal, vidal2000approximate}, (iii)~how entanglement can be verified quantitatively \cite{audenaert2006correlations,eisert2007quantitative,guhne2007estimating,lanyon2017efficient} and (iv)~how useful entanglement is in operational tasks, e.g.\ to enable the realisation of arbitrary non-local quantum operations when only local operations and classical communication is available. For example, maximally entangled states may be employed to achieve general non-local quantum gates between spatially separated parties using only local quantum operations and classical communication \cite{eisert2000OptimalLocalImplementation, collins2001nonlocal}. This example is characteristic of the early approaches to entanglement theory in particular and resource theories in general. While task (ii) concerns state-to-state transformations, task (iv) is of a somewhat different nature, as it connects static resources (states) with dynamic ones (quantum operations).

In fact, quantum states may be considered as special cases of quantum channels, and these subsume also quantum measurements and quantum dynamics~\cite{coecke2016MathematicalTheoryResources, gour2018EntropyQuantumChannel, theurer2019QuantifyingOperationsApplication}. To make this concrete, consider that quantum states and measurements can be thought of as quantum channels with trivial input and classical output, respectively. 

While such approach is legitimate based on the fact that any quantum channel can be simulated with free operations and entangled states, the quantum community is now aiming to encompass all the aspects in a unified manner by studying the properties of quantum channels with modern tools of quantum resource theories. First steps in this direction have been taken with the extension of the entropy of a quantum state to that of  quantum channels with its operational meaning given by the channel merging \cite{gour2018EntropyQuantumChannel}; the entropy of a preparation channel reduces to the usual entanglement of the state it prepares. The resource theory of the coherence of quantum dynamics have been investigated \cite{theurer2019QuantifyingOperationsApplication,zwliu2019ResourceTheoriesQuantum,ycliu2019OperationalResourceTheory,saxena2020DynamicalResourceTheory} 
The properties of a quantum channel associated with entanglement also has been investigated using the tools of resource theories \cite{gour2019EntanglementBipartiteChannel,gour2020DynamicalEntanglement,bauml2019ResourceTheoryEntanglement,xwang2020CostQuantumEntanglement,xwang2018ExactEntanglementCost,theurer2020}. Moreover, there are recent results that built various ways to construct resource monotones of dynamical resources in general which emphasize similarities and subtle differences from the quantum resources in quantum states \cite{gour2019HowQuantifyDynamical}. Alongside, operational interpretations of channel resource theories have been identified \cite{theurer2019QuantifyingOperationsApplication,zwliu2019ResourceTheoriesQuantum,yliu2020OperationalResourceTheory}.

In this paper, we will use as basic building blocks specific maximally resourceful operations that play the role that maximally entangled state played in the entanglement theory of states and explore how concatenation and combination with free operations allows us to achieve general quantum channels. As the basic element is itself an operation rather than a static state, this approach is now running under dynamic entanglement theory. As a note of caution though, this should not be equated with an even more general approach in which one indeed considers continuous in time evolution based on some generators. More specifically, we consider the problem of quantum channel manipulation in the one-shot setting, 
%
when the allowed set of free superchannels is maximal, in the sense that it comprises all transformations that map separable channels to separable channels. We choose the $K$-swap channels as dynamic entanglement resources, which play the role of $K$-maximally entangled states $ \Phi_{A_{0}B_{0}}^{K} $ in the entanglement theory of quantum states. The dynamic entanglement resource is intimately related to the static entanglement resource of quantum states in the sense that the former can generate the maximum static resource under LOCC as well as it requires the maximum static resource to simulate them under LOCC, so it bridges the dynamic entanglement to the maximum static entanglement. In order to treat the dynamic entanglement resource required for conversions between quantum channels, we define the separability-preserving superchannels and use them as the free superchannels. The dynamic entanglement resource of a bipartite quantum channel is investigated in operational ways. To evaluate the one-shot dynamic entanglement cost of a bipartite quantum channel, we ask which amount of dynamic entanglement resource is necessary to simulate the channel by means of free superchannels. To evaluate the one-shot distillable dynamic entanglement of the channel, instead, we determine how much dynamic entanglement resource can be obtained by using only free superchannels. 
We further push the notion of dynamic entanglement cost to its limits, by considering a two-fold variation on the theme: on the one hand, we allow for catalysts, while on the other we define and use a larger set of free superchannels that might generate a small amount of dynamic entanglement. We refer to the resulting modified notion as the one-shot catalytic dynamic entanglement cost of the channel.
Finally, in order to get some insight into the asymptotic scenario, we adopt the liberal smoothing to define liberal dynamic entanglement cost and show that it is given by the liberal regularized relative entropy of channels with respect to the free channels.

\section{Dynamic Entanglement Resource}
We use indexed capital letters such as $A_{0}, B_{1}$, etc.\ to denote physical systems, and juxtapose them to indicate physical composites. $\mathcal{B}(\mathcal{H}_{A_{0}})$ denotes the space of bounded operators acting on a finite dimensional Hilbert space $\mathcal{H}_{A_{0}}$. The set of linear maps from $ \mathcal{B}(\mathcal{H}_{A_{0}})$ to $\mathcal{B}(\mathcal{H}_{A_{1}})$ will be denoted with $\mathcal{L}(A)\equiv \mathcal{L}(A_{0}\to A_{1})$; quantum channels, that is, completely positive trace-preserving linear maps in $\mathcal{L}(A)$, will be collectively denoted with $\cptp(A)\equiv \cptp(A_{0}\to A_{1})$. We use calligraphic letters for quantum channels and abbreviations such as $ \mathcal{E}_{A}\equiv \mathcal{E}_{A_{0}\to A_{1}}$. As an exception, we omit indices if we take the trace map over all the input spaces such as $\tr\left(X_{A_{0}B_{0}}\right)$. We sometimes omit the identity channel for readability when there's little chance of confusion. As a distance between two quantum channels, we use the metric induced by the diamond norm denoted as $\dnorm{\cdot}$ \cite{kitaev1997QuantumComputationsAlgorithms,paulsen2003}. $ \mathbb{L}(A\to A')$ denotes the set of linear supermaps from $ \mathcal{L}(A) $ to $ \mathcal{L}(A') $. A Greek letter $ \Theta_{A\to B} $ is used to denote supermaps, whose action is expressed as $\Theta_{A\to B}[\mathcal{E}_{A}]$. We write $ \Psi_{A_{0}} $ for the density matrix of the pure state $\ket{\Psi}_{A_{0}}$, and call $\mathcal{S}(A_{0})$ and $\mathcal{D}(A_{0})$ the sets of pure and mixed states of system $A_{0}$, respectively. The set of separable states on $ A_{0}B_{0} $ is indicated with $\sepd(A_{0}:B_{0})$, while $ \sepc(A:B)$ stands for the set of separable channels from $ A_{0}B_{0} $ to $ A_{1}B_{1} $.

A $K$-swap channel $\mathcal{F}_{AB}^K$ consists in the application of the $K$-swap gate $F_{AB}^{K} = \sum_{i,j=0}^{K-1}\ket{ij}\!\bra{ji}_{A_{0}B_{0}\to A_{1}B_{1}}$; it is a typical example of a separability-preserving channel\footnote{Also called non-entangling map \cite{BrandaoPlenio2}.} that is not actually separable as a map \cite{harrow2003RobustnessQuantumGates}. We use the $K$-swap channel $ \mathcal{F}_{AB}^{K} $ as the ``golden unit'' of resource in the theory of dynamical entanglement \cite{zwliu2019OneShotOperationalQuantum}; its role is entirely analogous to that of the $K$-maximally entangled state $\ket{\Phi^{K}}_{A_{0}B_{0}} = \frac{1}{\sqrt{K}}\sum_{i=0}^{K-1}\ket{ii}_{A_{0}B_{0}}$ in the static entanglement theory.
Other reasonable choices for the golden unit are entirely equivalent to ours and therefore lead to the same results. Indeed, consider that a $K$-swap channel can generate a pair of $K$-maximally entangled states between Alice and Bob under LOCC (Fig.~\ref{fig:swap channel generating entangled pair}), while they also need two $K$-maximally entangled states to simulate a $K$-swap gate with LOCC.
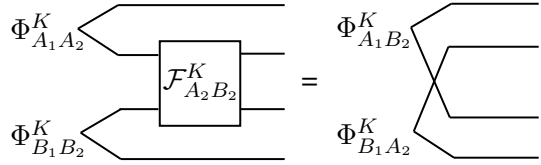
\begin{figure}
	\centering
	\tikzset{every picture/.style={line width=0.75pt}} 
	
	\begin{tikzpicture}[x=0.75pt,y=0.75pt,yscale=-0.7,xscale=0.7]
		
		\draw   (160,126) -- (218,126) -- (218,187) -- (160,187) -- cycle ;
		\draw    (130,136) -- (159,136) ;
		\draw    (131,175) -- (159,175) ;
		\draw    (103,192) -- (132,211) ;
		\draw    (103,192) -- (132,175) ;
		\draw    (218,135) -- (250,135) ;
		\draw    (130,100) -- (250,100) ;
		\draw    (132,211) -- (250,211) ;
		\draw    (101,117) -- (130,136) ;
		\draw    (101,117) -- (130,100) ;
		\draw    (218,175) -- (250,175) ;
		\draw    (368,130) -- (433,130) ;
		\draw    (343,191) -- (372,210) ;
		\draw    (343,191) -- (368,130) ;
		\draw    (370,99) -- (433,99) ;
		\draw    (372,210) -- (433,210) ;
		\draw    (341,116) -- (369,181) ;
		\draw    (341,116) -- (370,99) ;
		\draw    (369,181) -- (433,181) ;
		
		\draw (160,140) node [anchor=north west][inner sep=0.75pt]   [align=left] {$ \mathcal{F}_{A_{2}B_{2}}^{K} $};
		\draw (50,104) node [anchor=north west][inner sep=0.75pt]   [align=left] {$\Phi_{A_{1}A_{2}}^{K}$};
		\draw (50,181) node [anchor=north west][inner sep=0.75pt]   [align=left] {$\Phi_{B_{1}B_{2}}^{K}$};
		\draw (260,150) node [anchor=north west][inner sep=0.75pt]   [align=left] {=};
		\draw (285,103) node [anchor=north west][inner sep=0.75pt]   [align=left] {$\Phi_{A_{1}B_{2}}^{K}$};
		\draw (285,180) node [anchor=north west][inner sep=0.75pt]   [align=left] {$\Phi_{B_{1}A_{2}}^{K}$};
		
	\end{tikzpicture}
	\caption{Two $K$-maximally entangled states generated by the $K$-swap channel from locally prepared $K$-maximally entangled states.}
	\label{fig:swap channel generating entangled pair}
\end{figure}

A superchannel is a linear supermap that sends a quantum channel to another quantum channel \cite{chiribella2008TransformingQuantumOperations,gour2019ComparisonQuantumChannels}. A superchannel can be decomposed into pre- and post-channel with an additional memory system as in Fig.~\ref{fig: structure of a superchannel}.
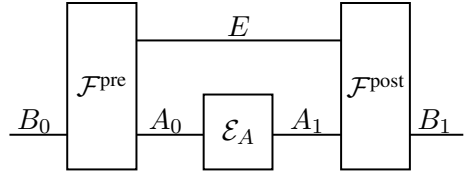
\begin{figure}
	\centering
	\tikzset{every picture/.style={line width=0.75pt}} 
	\begin{tikzpicture}[x=0.75pt,y=0.75pt,yscale=-0.7,xscale=0.7]
		
		\draw   (255.12,100) -- (305,100) -- (305,220) -- (255.12,220) -- cycle ; 
		\draw   (353.42,166) -- (403,166) -- (403,220) -- (353.42,220) -- cycle ; 
		\draw   (452.45,100) -- (502,100) -- (502,220) -- (452.45,220) -- cycle ; 
		
		\draw    (213.5,195) -- (255.12,195) ; 
		
		\draw    (502,195) -- (540.5,195) ; 
		
		\draw    (305,195) -- (353.42,195) ; 
		
		\draw    (403,195) -- (452.45,195) ; 
		
		\draw    (305,127) -- (452.45,127) ; 
		
		\draw (280.23,160) node   {$\mathcal{F}^{\text{pre}}$};
		\draw (477.56,160) node   {$\mathcal{F}^{\text{post}}$};
		\draw (378.54,193) node   {$\mathcal{E}_{A}$};
		\draw (231.44,185) node   {$B_{0}$};
		\draw (326.16,185) node   {$A_{0}$};
		\draw (428.05,185) node   {$A_{1}$};
		\draw (520.0,185) node   {$B_{1}$};
		\draw (378.54,117) node   {$E$};
	\end{tikzpicture}
	\caption{Structure of a superchannel $ \Theta_{A\to B} $ acting on a quantum channel $ \mathcal{E}_{A} $.}
	\label{fig: structure of a superchannel}
\end{figure}
Taking the separable channels as the free channels, we define the largest set of superchannels that does not generate any dynamic entanglement resource:
\begin{definition}
	A superchannel $ \Theta_{AB\to A'B'} $ is called a \emph{separability-preserving superchannel} (SEPPSC) if \begin{equation}
		\Theta_{AB\to A'B'}[ \mathcal{M}_{AB}] \in \sepc(A':B') \quad \forall \mathcal{M}_{AB} \in \sepc(A:B).
	\end{equation}
	The set of all separability-preserving superchannels from $ \mathcal{L}(AB) $ to $ \mathcal{L}(A'B') $ is denoted as SEPPSC($ A:B\to A':B' $).
\end{definition}

As dynamic entanglement monotones, we use the standard and the generalized robustness of channels. Since having been introduced for states~\cite{VidalTarrach, Steiner2003, regula2017ConvexGeometryQuantum}, these measures have found widespread applications to the quantitative analysis of operational tasks, most notably subchannel discrimination~\cite{Takagi2019, lami2020taming, regula2020operational}. The standard robustness of a bipartite quantum channel with respect to the separable channels is defined as
\begin{equation}\label{standard_rob}
	R_{s} (\mathcal{N}_{AB}) \coloneqq \min \left\{ s\ge 0 : \dfrac{\mathcal{N}_{AB} + s \mathcal{M}_{AB}}{1+s}\in \sepc(A:B), \mathcal{M}_{AB}\in \sepc(A:B) \right\},
\end{equation}
while the generalized robustness of a bipartite quantum channel with respect to the separable channels is defined as
\begin{equation}\label{generalized_rob}
	R (\mathcal{N}_{AB}) \coloneqq \min \left\{ s\ge 0 : \dfrac{\mathcal{N}_{AB} + s \mathcal{M}_{AB}}{1+s}\in \sepc(A:B), \mathcal{M}_{AB}\in \cptp(AB) \right\}.
\end{equation}
The standard log-robustness and the generalized log-robustness of channels with respect to the separable channels are given by
\begin{equation}
    LR_{s} (\mathcal{N}_{AB}) \coloneqq \log \left \{ 1+R_{s}(\mathcal{N}_{AB})\right\},\quad LR (\mathcal{N}_{AB}) \coloneqq \log \left \{ 1+R(\mathcal{N}_{AB})\right\},
\end{equation}
respectively, where the logarithm uses base 2. For $\varepsilon \ge 0$, the smooth versions of the above quantities are defined as
\begin{equation}\label{key}
	LR_{s}^{\varepsilon} (\mathcal{N}_{AB}) = \min_{\mathcal{N}_{AB}'\approx_{\varepsilon} \mathcal{N}_{AB}} LR_{s}(\mathcal{N}_{AB}'), \quad LR^{\varepsilon} (\mathcal{N}_{AB}) = \min_{\mathcal{N}_{AB}'\approx_{\varepsilon} \mathcal{N}_{AB}} LR(\mathcal{N}_{AB}'),
\end{equation}
where $ \mathcal{N}_{AB}'\approx_{\varepsilon} \mathcal{N}_{AB} $ is a shorthand for $ \frac{1}{2}\dnorm{\mathcal{N}_{AB}' - \mathcal{N}_{AB}} \le \varepsilon$.
The generalized log-robustness of channels has been found to have an operational meaning in the context of resource erasure \cite{zwliu2019ResourceTheoriesQuantum}; It also can be expressed with the max-relative entropy of channels minimized over the set of the separable channels:
\begin{equation}
    LR^{\varepsilon} (\mathcal{N}_{AB}) = \min_{\mathcal{N}_{AB}'\approx_{\varepsilon} \mathcal{N}_{AB}} \min_{\mathcal{M}_{AB}\in \sepc(A:B)} D_{\max} \left( \mathcal{N}_{AB}'\Vert \mathcal{M}_{AB} \right).
\end{equation}
Both robustnesses are monotonically nonincreasing under the free superchannels:
\begin{lemma}
	For $ \Theta_{AB\to A'B'}\in\seppsc(AB\to A'B') $, it holds that
	\begin{gather*}
		R_{s}(\Theta_{AB\to A'B'}[\mathcal{N}_{AB}])\le R_{s}(\mathcal{N}_{AB}),\\
		R(\Theta_{AB\to A'B'}[\mathcal{N}_{AB}])\le R(\mathcal{N}_{AB}).
	\end{gather*}
\end{lemma}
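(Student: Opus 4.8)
The plan is to exploit the linearity of the superchannel $\Theta_{AB\to A'B'}$ together with its separability-preserving property, following the standard template for robustness monotonicity. Rather than arguing about attainment of the minima in \eqref{standard_rob} and \eqref{generalized_rob}, I would work with an arbitrary feasible point and take the infimum only at the end, which sidesteps any compactness discussion.

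First, fix any $s\ge 0$ that is feasible for $R_{s}(\mathcal{N}_{AB})$: by definition there exists $\mathcal{M}_{AB}\in\sepc(A:B)$ such that the channel $\mathcal{L}_{AB}\coloneqq \frac{\mathcal{N}_{AB}+s\,\mathcal{M}_{AB}}{1+s}$ is itself separable, i.e.\ $\mathcal{L}_{AB}\in\sepc(A:B)$. Rearranging gives the affine identity $\mathcal{N}_{AB}=(1+s)\,\mathcal{L}_{AB}-s\,\mathcal{M}_{AB}$. I would then apply $\Theta_{AB\to A'B'}$ to both sides. Since a superchannel is by definition a linear supermap on $\mathcal{L}(AB)$, it commutes with the affine combination, yielding
\[
\frac{\Theta_{AB\to A'B'}[\mathcal{N}_{AB}]+s\,\Theta_{AB\to A'B'}[\mathcal{M}_{AB}]}{1+s}=\Theta_{AB\to A'B'}[\mathcal{L}_{AB}].
\]
Now the defining properties of a SEPPSC enter. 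Because $\mathcal{L}_{AB}$ and $\mathcal{M}_{AB}$ are separable and $\Theta$ preserves channel separability, both $\Theta[\mathcal{L}_{AB}]$ and $\Theta[\mathcal{M}_{AB}]$ lie in $\sepc(A':B')$; because $\Theta$ is a superchannel, $\Theta[\mathcal{N}_{AB}]$ is itself a legitimate quantum channel. Hence $\Theta[\mathcal{M}_{AB}]\in\sepc(A':B')$ is an admissible mixing channel and $\Theta[\mathcal{L}_{AB}]\in\sepc(A':B')$ exhibits the required separable convex combination, so the same value $s$ is feasible for $R_{s}(\Theta[\mathcal{N}_{AB}])$. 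Taking the infimum over all feasible $s$ gives $R_{s}(\Theta[\mathcal{N}_{AB}])\le R_{s}(\mathcal{N}_{AB})$.

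For the generalized robustness the argument is identical, with one change: the mixing channel $\mathcal{M}_{AB}$ is now only required to lie in $\cptp(AB)$. The sole extra point to check is that $\Theta[\mathcal{M}_{AB}]\in\cptp(A'B')$, which is guaranteed precisely because $\Theta$ maps quantum channels to quantum channels; the convex-combination channel $\Theta[\mathcal{L}_{AB}]$ remains separable by the SEPPSC property exactly as before. I do not anticipate a genuine obstacle, since the entire content is that a SEPPSC respects both the affine structure used to build the robustness decomposition and the two membership constraints (separability of $\mathcal{L}_{AB}$, and the weaker class membership of $\mathcal{M}_{AB}$). The one point deserving a line of care is the interchange of $\Theta$ with the affine combination, which is valid because superchannels are linear on the full operator-map space $\mathcal{L}(AB)$ rather than merely on the set of channels.
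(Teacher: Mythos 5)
Your proposal is correct and follows essentially the same route as the paper's proof: apply the linear superchannel to the affine decomposition witnessing the robustness, then use the SEPPSC property to keep the separable members separable (and, for the generalized case, the fact that superchannels map channels to channels). The only cosmetic difference is that you work with an arbitrary feasible $s$ and take the infimum at the end, whereas the paper argues directly at the optimal value $r=R_{s}(\mathcal{N}_{AB})$; this is a minor refinement, not a different argument.
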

\begin{proof}
	For $ r = R_{s}(\mathcal{N}_{AB}) $, there exist separable channels $ \mathcal{M}_{AB} $ and $ \mathcal{L}_{AB} $ satisfying that
	\begin{equation}\label{key}
		\mathcal{N}_{AB}+  r \mathcal{M}_{AB} = (1+r) \mathcal{L}_{AB}.
	\end{equation}
	For $ \Theta_{AB\to A'B'}\in\seppsc(AB\to A'B') $, we have that
	\begin{equation}
	\Theta_{AB\to A'B'}[\mathcal{N}_{AB}]+  r \Theta_{AB\to A'B'}[\mathcal{M}_{AB}] = (1+r) \Theta_{AB\to A'B'}[\mathcal{L}_{AB}],
	\end{equation}
	where $\Theta_{AB\to A'B'}[\mathcal{M}_{AB}]$ and $\Theta_{AB\to A'B'}[\mathcal{L}_{AB}]$ are separable channels. Hence, it follows that $R_{s}(\Theta_{AB\to A'B'}[\mathcal{N}_{AB}])\le R_{s}(\mathcal{N}_{AB})$. The analogous inequality for the generalized robustness follows along the same lines.
\end{proof}

In order to calculate the above quantities for the dynamic entanglement resource, i.e., the $K$-swap channel, we review previous results on the robustness of bipartite channels, and especially of unitary bipartite channels:

\begin{theorem}[{\cite[Theorem~5]{nielsen2003QuantumDynamicsPhysical}}]
	Let  $ U_{A_{0}B_{0}} $ be a unitary bipartite operator whose operator Schmidt decomposition reads
	\begin{equation}\label{key}
		U_{A_{0}B_{0}} = \sum_{j} u_{j} A_{j}\otimes B_{j},
	\end{equation}
	where $ A_{j}A_{j}^{\dagger} = \frac{I_{A}}{\abs{A}} $, $ B_{j}B_{j}^{\dagger} = \frac{I_{B}}{\abs{B}} $, and $ \tr A_{j}^{\dagger}A_{k} = \tr B_{j}^{\dagger} B_{k} = \delta_{jk} $. Then its robustness is given by
	\begin{equation}\label{key}
		R_{s}(\mathcal{U}_{AB}) = R(\mathcal{U}_{AB}) = \dfrac{\left( \sum_{j}u_{j} \right)^{2}}{\abs{A} \abs{B}} - 1.
	\end{equation}
\end{theorem}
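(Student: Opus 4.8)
The plan is to pass to the Choi (\jami{}) representation and squeeze both robustnesses between a witness lower bound and an explicit construction upper bound. First I would record two elementary facts: that $R(\mathcal{U}_{AB})\le R_{s}(\mathcal{U}_{AB})$ always holds (the generalized robustness minimizes over a larger set of perturbing channels), and that the normalizations $A_{j}A_{j}^{\dagger}=I_{A}/\abs A$, $B_{j}B_{j}^{\dagger}=I_{B}/\abs B$ force $V_{j}:=\sqrt{\abs A}\,A_{j}$ and $W_{j}:=\sqrt{\abs B}\,B_{j}$ to be unitaries, so that $U_{A_{0}B_{0}}=\sum_{j}\lambda_{j}\,V_{j}\otimes W_{j}$ with $\lambda_{j}:=u_{j}/\sqrt{\abs A\abs B}$. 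Since $\sum_{j}u_{j}^{2}=\tr U^{\dagger}U=\abs A\abs B$, we have $\sum_{j}\lambda_{j}^{2}=1$, so the normalized Choi state $\rho_{\mathcal U}=\frac{1}{\abs A\abs B}\kket{U}\bbra{U}$ is \emph{pure}, and the orthonormality $\tr A_{j}^{\dagger}A_{k}=\delta_{jk}$ turns $\kket{U}=\sum_{j}u_{j}\kket{A_{j}}\otimes\kket{B_{j}}$ into a genuine Schmidt decomposition across the cut $A_{1}A_{0}:B_{1}B_{0}$ with Schmidt coefficients $\lambda_{j}$. Writing $N:=\sum_{j}\lambda_{j}=(\sum_{j}u_{j})/\sqrt{\abs A\abs B}$, the target value is exactly $N^{2}-1$.

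For the lower bound it suffices to argue at the level of the generalized robustness, i.e.\ to show $R(\mathcal{U}_{AB})\ge N^{2}-1$. Every separable channel has a separable Choi operator, so any $\mathcal{L}\in\sepc(A:B)$ feasible for the channel robustness yields, after normalization, a separable state feasible for the state robustness of $\rho_{\mathcal U}$ with the same multiplicative ratio; dropping the trace-preservation constraint only enlarges the feasible set, hence $1+R(\mathcal{U}_{AB})\ge 1+R_{\mathrm{state}}(\rho_{\mathcal U})$. For a pure state the generalized robustness is read off from the witness $W=\kket{\Phi}\bbra{\Phi}$ built from the unnormalized maximally entangled vector in the Schmidt bases: it evaluates to $N^{2}$ on $\rho_{\mathcal U}$ and to at most $1$ on every product state (by Cauchy--Schwarz), whence $R_{\mathrm{state}}(\rho_{\mathcal U})=N^{2}-1$ and therefore $R(\mathcal{U}_{AB})\ge N^{2}-1$.

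The crux is the matching upper bound $R_{s}(\mathcal{U}_{AB})\le N^{2}-1$, which I would obtain by an explicit \emph{phase-averaging} construction of separable channels. For a phase vector $\vec\phi$ set $G_{\vec\phi}:=\sum_{j}\sqrt{\lambda_{j}}\,e^{i\phi_{j}}V_{j}$ and $H_{\vec\phi}:=\sum_{j}\sqrt{\lambda_{j}}\,e^{i\phi_{j}}W_{j}$, and define the product-form (hence separable) map $\mathcal{S}(\cdot):=\int d\vec\phi\,(G_{\vec\phi}\otimes H_{-\vec\phi})(\cdot)(G_{\vec\phi}\otimes H_{-\vec\phi})^{\dagger}$. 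A routine phase integral kills all terms except those with matched indices and shows $\mathcal{S}=\mathcal{U}_{AB}+(N^{2}-1)\mathcal{M}$, where $\mathcal{M}=\frac{1}{N^{2}-1}\sum_{j\ne k}\lambda_{j}\lambda_{k}\,(V_{j}\otimes W_{k})(\cdot)(V_{j}\otimes W_{k})^{\dagger}$ is manifestly a separable channel (a convex mixture of product unitary channels). The one nontrivial point --- and the place where unitarity of $U_{A_{0}B_{0}}$ is indispensable --- is that $\mathcal{L}:=\mathcal{S}/N^{2}$ is trace preserving: computing $\mathcal{S}^{\ast}(I)=\int d\vec\phi\,G_{\vec\phi}^{\dagger}G_{\vec\phi}\otimes H_{-\vec\phi}^{\dagger}H_{-\vec\phi}$, the index-swapped cross terms assemble into $U^{\dagger}U=I$, leaving $\mathcal{S}^{\ast}(I)=N^{2}I$. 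Thus $\mathcal{L}\in\sepc(A:B)$, and $\frac{\mathcal{U}_{AB}+(N^{2}-1)\mathcal{M}}{N^{2}}=\mathcal{L}$ with $\mathcal{M}\in\sepc(A:B)$ certifies $R_{s}(\mathcal{U}_{AB})\le N^{2}-1$. Chaining the bounds, $N^{2}-1\le R(\mathcal{U}_{AB})\le R_{s}(\mathcal{U}_{AB})\le N^{2}-1$, forces equality throughout and yields $R_{s}(\mathcal{U}_{AB})=R(\mathcal{U}_{AB})=\frac{(\sum_{j}u_{j})^{2}}{\abs A\abs B}-1$. I expect the trace preservation of $\mathcal{L}$ to be the main obstacle: it is precisely what separates the channel robustness from the bare state robustness of the Choi state, and it is the unitarity-driven collapse $U^{\dagger}U=I$ that rescues the construction.
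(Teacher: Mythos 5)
You should first note a point of procedure: the paper never proves this statement. It is imported verbatim, hypotheses included, from \cite[Theorem~5]{nielsen2003QuantumDynamicsPhysical}, so there is no internal proof to compare yours against; your attempt has to be judged on its own merits. On those merits it is correct and self-contained. The lower bound is sound: a separable channel has a separable Choi state, so any feasible triple for $R(\mathcal{U}_{AB})$ descends to a feasible triple for the state robustness of the normalized Choi state $\rho_{\mathcal{U}}$, which is pure with Schmidt coefficients $\lambda_j = u_j/\sqrt{\abs{A}\abs{B}}$ across the $A_0A_1\!:\!B_0B_1$ cut thanks to $\tr A_j^\dagger A_k = \tr B_j^\dagger B_k = \delta_{jk}$; the unnormalized witness $\ket{\Phi}\!\bra{\Phi}$, $\ket{\Phi}=\sum_j \ket{a_j}\ket{b_j}$, evaluates to $N^2$ on $\rho_{\mathcal{U}}$ (with $N=\sum_j\lambda_j$) and to at most $1$ on separable states by Cauchy--Schwarz --- the same mechanism as Proposition~\ref{thm: max fidelity btw. ent. sep. states} --- forcing $s\ge N^2-1$. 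The upper bound is also sound: the phase average annihilates every term with $\{j,m\}\neq\{k,l\}$, leaving exactly $\mathcal{S} = \mathcal{U}_{AB} + \sum_{j\neq k}\lambda_j\lambda_k\, (V_j\otimes W_k)(\cdot)(V_j\otimes W_k)^\dagger$, and $\sum_{j\neq k}\lambda_j\lambda_k = N^2-1$.

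Three refinements you should make. (i) Your ``crux'' is misplaced: once $\mathcal{S}=\mathcal{U}_{AB}+(N^2-1)\mathcal{M}$ is established with $\mathcal{M}$ a convex mixture of product-unitary channels, trace preservation of $\mathcal{L}=\mathcal{S}/N^2$ is automatic, since $\mathcal{U}_{AB}$ and $\mathcal{M}$ are both trace preserving; your adjoint computation $\mathcal{S}^{\ast}(I)=N^2 I$ is a correct but redundant consistency check. Where the hypothesis $A_jA_j^\dagger = I_A/\abs{A}$ (equivalently, unitarity of $V_j,W_j$) is genuinely indispensable is in making $\mathcal{M}$ a channel, and a separable one, at all; for a generic bipartite unitary, whose Schmidt factors are not proportional to unitaries, the construction collapses, which is exactly why the theorem carries this hypothesis. (ii) To exhibit $\mathcal{L}$ as a separable channel with a bona fide (finite) product Kraus decomposition, replace the continuous phase integral by a finite average in which each $\phi_j$ ranges independently over multiples of $2\pi/3$: the integer exponents occurring in your computation all lie in $\{-2,\dots,2\}$, so third roots of unity already kill every unwanted term. (iii) Handle the degenerate case $N=1$ separately: there your $\mathcal{M}$ is a $0/0$ expression, but the claim is trivial because $\mathcal{U}_{AB}$ is then itself a product-unitary, hence separable, channel.
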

The swap operator $ F_{AB}^{K} $ acting on $ K $-dimensional subsystems can be written as
\begin{equation}\label{key}
	F_{AB}^{K} = \sum_{i=1}^{K^{2}} G_{i}\otimes G_{i}^{\dagger}
\end{equation}
for any orthonormal operator basis $ \{G_{i}\}_{i=1}^{K^{2}} $ such that $ \tr G_{i}^{\dag}G_{j} = \delta_{ij} $ \cite{wolf2012QuantumChannelsOperations}. Using an orthonormal unitary basis, e.g., the discrete Weyl basis\footnote{The discrete Weyl basis is composed by the $d^2$ unitary operators $ U_{kl} = \sum_{s = 0}^{d - 1}e^{\frac{2\pi i}{d}s l}\ket{k+s}\!\bra{s}$, where $k, l = 0, 1,\dots, d - 1$.}, the operator Schmidt decomposition of the swap gate is given by
\begin{equation}\label{key}
	F_{AB}^{K} = \sum_{i = 1}^{K^{2}} \dfrac{U_{i}}{\sqrt{K}}\otimes \dfrac{U_{i}^{\dag}}{\sqrt{K}}.
\end{equation}
Hence, the robustness of the $K$-swap channel is given as follows:
\begin{equation}\label{key}
	R_{s}(\mathcal{F}_{AB}^{K}) = R(\mathcal{F}_{AB}^{K}) = K^{2} - 1.
\end{equation}
The following well-known fact concerning separability of the isotropic states will be used afterwards:
\begin{theorem}[{\cite{horodecki1999reduction}}]\label{thm: isotropic state separabilitiy}
	The isotropic state
	\begin{equation}\label{key}
		p\Phi_{A_{0}B_{0}}^{K} + (1-p) \dfrac{I_{A_{0}B_{0}} - \Phi_{A_{0}B_{0}}^{K}}{K^{2}-1}\quad ( 0\le p \le 1 )
	\end{equation}
	is separable if and only if $ p\le \dfrac{1}{K} $.
\end{theorem}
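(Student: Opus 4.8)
The plan is to treat the two implications separately: necessity via an upper bound on the overlap of separable states with $\ket{\Phi^K}_{A_0B_0}$, and sufficiency via an explicit symmetrization (twirling) argument; throughout write $\rho_p$ for the isotropic state displayed in the statement. For necessity, I would first note that $\Phi_{A_0B_0}^K$ is the rank-one projector onto $\ket{\Phi^K}_{A_0B_0}$, so it annihilates $I_{A_0B_0} - \Phi_{A_0B_0}^K$ and hence $\bra{\Phi^K}\rho_p\ket{\Phi^K} = p$. It therefore suffices to establish that every separable state $\sigma \in \sepd(A_0:B_0)$ satisfies $\bra{\Phi^K}\sigma\ket{\Phi^K} \le \tfrac1K$. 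Since $\sigma \mapsto \bra{\Phi^K}\sigma\ket{\Phi^K}$ is linear and $\sepd(A_0:B_0)$ is the convex hull of product states, it is enough to check a product state $\sigma = (\ket{\alpha}\!\bra{\alpha})\otimes(\ket{\beta}\!\bra{\beta})$, for which a one-line computation gives $\abs{\bracket{\Phi^K}{\alpha\beta}}^2 = \tfrac1K \abs{\bracket{\bar\alpha}{\beta}}^2 \le \tfrac1K$ by Cauchy--Schwarz. Combined with $\bra{\Phi^K}\rho_p\ket{\Phi^K}=p$, this forces $p \le \tfrac1K$ whenever $\rho_p$ is separable.

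For the sufficiency direction I would invoke the isotropic twirl $\mathcal{T}(\cdot) = \int dU\,(U\otimes\bar U)(\cdot)(U\otimes\bar U)^{\dagger}$, where the integral is over the Haar measure. Two facts drive the argument. First, $\mathcal{T}$ preserves separability: conjugating a product state by $U\otimes\bar U$ returns a product state, and the Haar average is a continuous convex combination of such states, hence separable. Second, a direct check using $\sum_i U\ket{i}\otimes\bar U\ket{i} = \sqrt{K}\ket{\Phi^K}$ shows that $\ket{\Phi^K}_{A_0B_0}$ is fixed by every $U\otimes\bar U$; consequently $\Phi_{A_0B_0}^K$ commutes with the whole representation, whose commutant is the two-dimensional span of $I_{A_0B_0}$ and $\Phi_{A_0B_0}^K$ (the representation splits into the trivial subrepresentation $\mathrm{span}\{\ket{\Phi^K}\}$ and an irreducible $(K^2-1)$-dimensional complement). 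Hence $\mathcal{T}(\rho)$ always has the form $a I_{A_0B_0} + b\,\Phi_{A_0B_0}^K$, and matching the trace together with the invariant overlap $\bra{\Phi^K}\mathcal{T}(\rho)\ket{\Phi^K} = \bra{\Phi^K}\rho\ket{\Phi^K}$ identifies $\mathcal{T}(\rho)$ with $\rho_F$ for $F = \bra{\Phi^K}\rho\ket{\Phi^K}$.

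With these facts in place I would conclude by realizing each target $p \in [0,\tfrac1K]$ as the overlap of a separable state. The product state $(\ket{\alpha}\!\bra{\alpha})\otimes(\ket{\bar\alpha}\!\bra{\bar\alpha})$ attains overlap $\tfrac1K$, the orthogonal product state $(\ket{0}\!\bra{0})\otimes(\ket{1}\!\bra{1})$ attains overlap $0$, and their convex combination $\mu\,(\ket{\alpha}\!\bra{\alpha})\otimes(\ket{\bar\alpha}\!\bra{\bar\alpha}) + (1-\mu)(\ket{0}\!\bra{0})\otimes(\ket{1}\!\bra{1})$ attains overlap $\mu/K$, covering all of $[0,\tfrac1K]$ as $\mu$ ranges over $[0,1]$. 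Twirling such a separable state yields $\rho_p$ by the fixed-point computation above, and since $\mathcal{T}$ preserves separability, $\rho_p$ is separable for every $p \le \tfrac1K$. The main obstacle is the representation-theoretic step in the sufficiency proof, namely pinning down the commutant of $\{U\otimes\bar U\}$ as $\mathrm{span}\{I_{A_0B_0},\Phi_{A_0B_0}^K\}$ so that the twirl of an arbitrary state collapses onto the isotropic family; the remaining ingredients (Cauchy--Schwarz, invariance of $\ket{\Phi^K}$, and the interpolation of overlaps by separable states) are elementary.
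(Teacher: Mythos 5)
Your proof is correct. There is, however, nothing in the paper to compare it against: the paper does not prove this theorem at all, but imports it as a known fact from \cite{horodecki1999reduction}, so what you have written is essentially a reconstruction of the original Horodecki--Horodecki argument. Two remarks on how your argument relates to the rest of the paper. First, your necessity direction (every separable state has overlap at most $1/K$ with $\ket{\Phi^K}$, by convexity plus Cauchy--Schwarz on product states) is exactly the content of Proposition~\ref{thm: max fidelity btw. ent. sep. states}, which the paper proves in its appendix and uses repeatedly (e.g.\ in Theorem~\ref{thm: one-shot dyn. ent. cost}); you could simply invoke it rather than reprove it. Second, your sufficiency direction via the $U\otimes\bar{U}$ twirl is the standard route, and the representation-theoretic step you flag as the main obstacle is sound: the representation $U\otimes\bar{U}$ of the unitary group decomposes into the trivial one-dimensional piece $\mathrm{span}\{\ket{\Phi^K}\}$ and the $(K^2-1)$-dimensional adjoint representation, which is irreducible and inequivalent to the trivial one, so by Schur's lemma the commutant is the two-dimensional span of the two projectors $\Phi^K$ and $I-\Phi^K$, forcing every twirled state onto the isotropic family. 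With that in place, your explicit separable states interpolating all overlaps in $[0,1/K]$, together with the facts that the twirl preserves separability, trace, and the overlap with $\ket{\Phi^K}$, close the argument; the paper's later use of this theorem (separability of the Choi matrix $q\,J^{\mathcal{F}^{K}} + (1-q)\,(I-J^{\mathcal{F}^{K}})/(K^4-1)$ for $q\le 1/K^2$ in the distillation proof) is consistent with the bound you derive.
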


\section{One-Shot Dynamic Entanglement Cost of a Bipartite Quantum Channel}

\begin{figure}[tbph]
	\centering	
	\tikzset{every picture/.style={line width=0.75pt}} 
	\begin{tikzpicture}[x=0.75pt,y=0.75pt,yscale=-0.7,xscale=0.7]
	
	\draw   (142,236) -- (213.5,236) -- (213.5,300) -- (142,300) -- cycle ;
	\draw    (33.5,215) -- (61.5,215) ;
	\draw    (33.5,298) -- (61.5,298) ;
	\draw    (290.5,297) -- (315.5,297) ;
	\draw    (290.5,214) -- (315.5,214) ;
	\draw   (289.5,315) -- (253.5,314) -- (253.5,220) -- (97.5,218) -- (97.5,318) -- (61.5,318) -- (61.5,175) -- (289.5,174) -- cycle ;
	\draw    (98.5,247) -- (142.5,247) ;
	\draw    (98.5,288) -- (142.5,288) ;
	\draw    (213.5,247) -- (253.5,247) ;
	\draw    (213.5,288) -- (253.5,288) ;
	\draw   (472,195) -- (542,195) -- (542,319) -- (472,319) -- cycle ;
	\draw    (541.5,215) -- (580.5,215) ;
	\draw    (541.5,300) -- (580.5,300) ;
	\draw    (432.5,215) -- (471.5,215) ;
	\draw    (431.5,300) -- (470.5,300) ;
	
	\draw (177.75,268) node   [align=left] {$ \mathcal{F}_{A'B'}^{K} $};
	\draw (16,214) node   [align=left] {$ A_{0} $};
	\draw (16,298) node   [align=left] {$ B_{0} $};
	\draw (332,213) node   [align=left] {$ A_{1} $};
	\draw (331,298) node   [align=left] {$ B_{1} $};
	\draw (121,236) node   [align=left] {$ A_{0}' $};
	\draw (122,275) node   [align=left] {$ B_{0}' $};
	\draw (233,235) node   [align=left] {$ A_{1}' $};
	\draw (234,273) node   [align=left] {$ B_{1}' $};
	\draw (171,193) node   [align=left] {$\Theta_{A'B'\to AB} $};
	\draw (507,257) node   [align=left] {$ \mathcal{N}_{AB} $};
	\draw (414,214) node   [align=left] {$ A_{0} $};
	\draw (413,298) node   [align=left] {$ B_{0} $};
	\draw (602,214) node   [align=left] {$ A_{1} $};
	\draw (603,300) node   [align=left] {$ B_{1} $};
	\draw (372,249) node    {$\approx_{\varepsilon} $};
	
	\end{tikzpicture}
	\caption{The one-shot dynamic entanglement cost of a bipartite quantum channel $ \mathcal{N}_{AB} $ under a free superchannel $ \Theta_{A'B'\to AB} $.}
	\label{fig:entanglement_cost}
\end{figure}
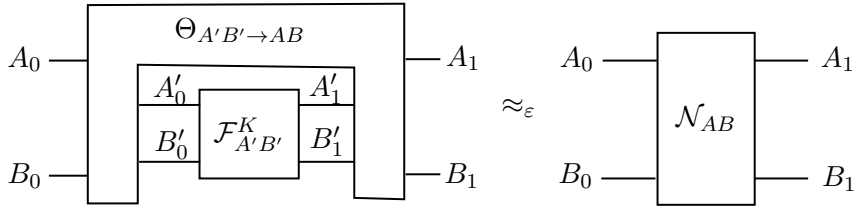

The first operational task we investigate consists in simulating a single instance of a known channel $\mathcal{N}_{AB}$ using a $K$-swap channel --- with $K$ as small as possible --- together with free superchannels, as depicted in Fig.~\ref{fig:entanglement_cost}. One might call this task dynamic entanglement dilution, in analogy to the entanglement dilution task for quantum states. We can thus give the following formal definition.

\begin{definition}
	Given $ \varepsilon\ge 0 $, the one-shot dynamic entanglement cost of a bipartite quantum channel $ \mathcal{N}_{AB} $ under SEPPSC is defined as follows:
	\begin{align*}
	E_{C,\seppsc}^{(1), \varepsilon} (\mathcal{N}_{AB})\coloneqq \min &\left\{\log K^{2} : \dfrac{1}{2}\dnorm{\Theta_{A'B'\to AB}[\mathcal{F}_{A'B'}^{K}] - \mathcal{N}_{AB}} \le \varepsilon,\right .\\
	&\left . \quad\Theta_{A'B'\to AB}\in \seppsc(A'\!:\!B'\to A\!:\!B),\ K\in \mathbb{N}_{0}\right\}.
	\end{align*}
\end{definition}

We now present our first main result. It is a two-fold bound that connects the one-shot dynamic entanglement cost with the smooth standard log-robustness, thus, providing an operational meaning of the latter quantity. \\


\begin{theorem}\label{thm: one-shot dyn. ent. cost}
	Given $ \varepsilon\ge 0 $, the one-shot dynamic entanglement cost of a bipartite quantum channel $ \mathcal{N}_{AB} $ under SEPPSC is bounded as
	\begin{equation}\label{key}
	LR_{s}^{\varepsilon}(\mathcal{N}_{AB}) \le E_{C,\seppsc}^{(1), \varepsilon} (\mathcal{N}_{AB}) \le LR_{s}^{\varepsilon}(\mathcal{N}_{AB}) + 2.
	\end{equation}
\end{theorem}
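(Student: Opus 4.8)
The plan is to establish the two inequalities separately, treating the lower bound (achievability of the robustness as a necessary cost) and the upper bound (a protocol achieving cost close to the robustness) as independent arguments. Both rely on the monotonicity of $R_s$ under SEPPSC established in the preceding lemma, together with the exact value $R_s(\mathcal{F}_{A'B'}^K) = K^2 - 1$ computed above.

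For the lower bound $LR_s^{\varepsilon}(\mathcal{N}_{AB}) \le E_{C,\seppsc}^{(1),\varepsilon}(\mathcal{N}_{AB})$, I would start from an optimal dilution protocol: suppose $\Theta_{A'B'\to AB} \in \seppsc$ and $K$ achieve the cost, so that $\mathcal{N}_{AB}' \coloneqq \Theta_{A'B'\to AB}[\mathcal{F}_{A'B'}^K]$ satisfies $\frac{1}{2}\dnorm{\mathcal{N}_{AB}' - \mathcal{N}_{AB}} \le \varepsilon$. By monotonicity, $R_s(\mathcal{N}_{AB}') \le R_s(\mathcal{F}_{A'B'}^K) = K^2 - 1$, hence $LR_s(\mathcal{N}_{AB}') \le \log K^2$. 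Since $\mathcal{N}_{AB}'$ is a valid $\varepsilon$-approximant of $\mathcal{N}_{AB}$, the definition of the smoothed quantity gives $LR_s^{\varepsilon}(\mathcal{N}_{AB}) \le LR_s(\mathcal{N}_{AB}') \le \log K^2 = E_{C,\seppsc}^{(1),\varepsilon}(\mathcal{N}_{AB})$. This direction is essentially a direct consequence of monotonicity and should be routine.

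The upper bound is where the real work lies, and I expect it to be the main obstacle. The strategy is to fix an $\varepsilon$-approximant $\mathcal{N}_{AB}'$ attaining $LR_s^{\varepsilon}(\mathcal{N}_{AB})$, write $r \coloneqq R_s(\mathcal{N}_{AB}')$ so $\log(1+r) = LR_s^{\varepsilon}(\mathcal{N}_{AB})$, and then construct an explicit SEPPSC superchannel $\Theta$ that maps a $K$-swap channel to $\mathcal{N}_{AB}'$ for $K$ the smallest integer with $K^2 \ge 2(1+r)$, which accounts for the additive $+2$. The key identity is the robustness decomposition $\mathcal{N}_{AB}' + r\mathcal{M}_{AB} = (1+r)\mathcal{L}_{AB}$ with $\mathcal{M}_{AB}, \mathcal{L}_{AB}$ separable; I would combine this with the fact (from Theorem~\ref{thm: isotropic state separabilitiy}) that a $K$-swap channel, after an appropriate twirling-type SEPPSC operation, decomposes into a convex mixture whose ``resourceful'' weight is exactly $1/K^2$ relative to a separable background, mirroring the isotropic-state threshold $p \le 1/K$ lifted to the channel level. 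The construction of $\Theta$ should implement: deterministically output $\mathcal{N}_{AB}'$ on the resourceful branch and output a suitable separable channel on the separable branch, arranged so that separable inputs map to separable outputs (verifying the SEPPSC membership is the delicate step).

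The main difficulty will be the explicit construction of the superchannel $\Theta$ and the verification that it is genuinely separability-preserving while exactly reproducing $\mathcal{N}_{AB}'$ from $\mathcal{F}_{A'B'}^K$. Unlike the state case, where one can post-select or measure a shared entangled state, here the superchannel must act on the channel $\mathcal{F}_{A'B'}^K$ via the pre/post-channel-with-memory structure of Fig.~\ref{fig: structure of a superchannel}, so I would need to encode the dilution protocol into that Stinespring-type form. I anticipate matching the budget $K^2 \ge 2(1+r)$ to the isotropic separability threshold requires care: the factor of $2$ (equivalently the additive constant in the log) should emerge from rounding $K$ up to an integer together with the factor relating the swap-channel resource $K^2-1$ to the $1/K$ separability threshold. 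Once $\Theta[\mathcal{F}_{A'B'}^K] = \mathcal{N}_{AB}'$ is established with $\Theta \in \seppsc$, the chain $E_{C,\seppsc}^{(1),\varepsilon}(\mathcal{N}_{AB}) \le \log K^2 \le \log(2(1+r)) + 1 = LR_s^{\varepsilon}(\mathcal{N}_{AB}) + 2$ closes the bound, after absorbing the integer-rounding slack into the additive constant.
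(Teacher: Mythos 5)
Your lower bound is correct and is exactly the paper's argument: optimal protocol, monotonicity of $R_{s}$ under SEPPSC, and $R_{s}(\mathcal{F}_{A'B'}^{K})=K^{2}-1$. Your upper-bound strategy is also the same in spirit as the paper's --- fix an optimal approximant $\mathcal{N}_{AB}'$ with $r=R_{s}(\mathcal{N}_{AB}')$, use the decomposition $\mathcal{N}_{AB}'+r\mathcal{M}_{AB}=(1+r)\mathcal{L}_{AB}$ with $\mathcal{M}_{AB},\mathcal{L}_{AB}$ separable, and build a ``measure-and-prepare'' superchannel that tests the input channel against $\mathcal{F}_{A'B'}^{K}$ and outputs $\mathcal{N}_{AB}'$ or $\mathcal{M}_{AB}$ --- but the parameter choice you commit to breaks the final bound. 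You take $K$ to be the smallest integer with $K^{2}\ge 2(1+r)$ and then claim $\log K^{2}\le \log\{2(1+r)\}+1$. This fails: for $1+r=2.1$ one has $2(1+r)=4.2$, so $K=3$ (since $2^{2}=4<4.2$), and $\log K^{2}=\log 9 > \log 8.4 = \log(1+r)+2$. The factor of $2$ inside your budget misattributes where the additive constant comes from. The separability verification needs only $K^{2}\ge 1+r$: for separable input $\mathcal{E}_{A'B'}$ the weight on the resourceful branch is $q'=\tr\bigl( J^{\mathcal{F}_{A'B'}^{K}}J^{\mathcal{E}_{A'B'}}\bigr)\le 1/K^{2}$, and the output $q'\mathcal{N}_{AB}'+(1-q')\mathcal{M}_{AB}$ can be rewritten as $q\,\frac{\mathcal{N}_{AB}'+r\mathcal{M}_{AB}}{1+r}+(1-q)\mathcal{M}_{AB}$ with $q=q'(1+r)$, which is separable as soon as $q\le 1$, i.e.\ $K^{2}\ge 1+r$. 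The paper therefore sets $K=\lceil\sqrt{1+r}\,\rceil$, and the entire $+2$ comes from integer rounding alone, via $\lceil x\rceil\le 2x$ for $x\ge 1$, giving $K^{2}\le 4(1+r)$. With your inflated budget $K=\lceil\sqrt{2(1+r)}\,\rceil$, the same rounding argument only yields $+3$, so as written your chain does not close to the stated constant.

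Two further points on the part you flagged as the main obstacle. First, the key quantitative lemma is not the isotropic-state separability threshold (Theorem~\ref{thm: isotropic state separabilitiy}) after a twirl; no twirling is needed here. What is needed is that $J^{\mathcal{F}_{A'B'}^{K}}_{A_{0}B_{0}\widetilde{A}_{1}\widetilde{B}_{1}}=\Phi_{A_{0}\widetilde{B}_{1}}^{K}\otimes\Phi_{\widetilde{A}_{1}B_{0}}^{K}$ is a maximally entangled state of rank $K^{2}$ across the $A\!:\!B$ cut, so its overlap with any separable Choi matrix is at most $1/K^{2}$ by Proposition~\ref{thm: max fidelity btw. ent. sep. states}; the isotropic-state theorem is instead used in the distillation proof. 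Second, the construction you anticipate as delicate is in fact a one-line formula,
\begin{equation*}
\Theta_{A'B'\to AB}[\mathcal{E}_{A'B'}]=\tr\left( J^{\mathcal{F}_{A'B'}^{K}}J^{\mathcal{E}_{A'B'}}\right)\mathcal{N}_{AB}'+\tr\left\{\left(I-J^{\mathcal{F}_{A'B'}^{K}}\right)J^{\mathcal{E}_{A'B'}}\right\}\mathcal{M}_{AB},
\end{equation*}
which manifestly satisfies $\Theta_{A'B'\to AB}[\mathcal{F}_{A'B'}^{K}]=\mathcal{N}_{AB}'$ and is a legitimate superchannel in the pre/post form of Fig.~\ref{fig: structure of a superchannel}: feed halves of maximally entangled states into the input channel, measure the resulting Choi state with the two-outcome POVM $\{J^{\mathcal{F}_{A'B'}^{K}},\,I-J^{\mathcal{F}_{A'B'}^{K}}\}$, and prepare the corresponding output channel. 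With this construction, the overlap bound above, and $K=\lceil\sqrt{1+r}\,\rceil$, your argument goes through and coincides with the paper's proof.
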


\begin{proof} We break down the argument into separate proofs of the two bounds.
\begin{enumerate}[(i)]
    \item For the lower bound, let $ \Theta_{A'B'\to AB} \in \seppsc(A':B'\to A:B) $ be a superchannel that achieves $ E_{C,\seppsc}^{(1), \varepsilon} (\mathcal{N}_{AB}) $ with $ \mathcal{F}_{AB}^{K} $, that is, $ \Theta_{A'B'\to AB} [ \mathcal{F}_{A'B'}^{K}] \approx_{\varepsilon} \mathcal{N}_{AB}$. Then we have that
	\begin{align*}
	LR_{s}^{\varepsilon} (\mathcal{N}_{AB}) &\le LR_{s}\left( \Theta_{A'B'\to AB}[\mathcal{F}_{A'B'}^{K}] \right)\\
	&\le LR_{s}(\mathcal{F}_{A'B'}^{K})=\log K^{2} = E_{C,\seppsc}^{(1), \varepsilon} (\mathcal{N}_{AB}). 
	\end{align*}
	
	\item For the upper bound, let $ \mathcal{N}_{AB}^{\varepsilon} $ be a channel such that
	\begin{equation}\label{key}
	LR_{s}^{\varepsilon} (\mathcal{N}_{AB}) = LR_{s}(\mathcal{N}_{AB}^{\varepsilon}) =  \log (1+r),
	\end{equation}
	where $ r = R_{s} (\mathcal{N}_{AB}^{\varepsilon})$. There exists a separable channel $ \mathcal{M}_{AB} $ such that
	\begin{equation}\label{key}
	\dfrac{\mathcal{N}_{AB}^{\varepsilon} + r \mathcal{M}_{AB}}{1+r} \in \sepc(A:B).
	\end{equation}
	Let $ J_{A_{0}B_{0}\widetilde{A}_{1}\widetilde{B}_{1}}^{\mathcal{E}_{AB}} \coloneqq \mathsf{id}_{AB}\otimes \mathcal{E}_{\widetilde{A}\widetilde{B}}\left( \Phi_{A_{0}\widetilde{A}_{0}}^{K} \otimes \Phi_{B_{0}\widetilde{B}_{0}}^{K}\right)$ be the (normalized) Choi matrix for a quantum channel $ \mathcal{E}_{AB} $, where $ \ket{\Phi^{K}}_{A_{0}B_{0}} = \frac{1}{\sqrt{K}}\sum_{i=0}^{K-1}\ket{ii}_{A_{0}B_{0}} $ is the maximally entangled state. Setting $ K = \lceil \sqrt{1+r} \rceil $, we construct a SEPPSC $ \Theta_{A'B'\to AB} $ that simulates $ \mathcal{N}_{AB}^{\varepsilon} $, that is, $ \Theta_{A'B'\to AB}[\mathcal{F}_{A'B'}^{K}]  = \mathcal{N}_{AB}^{\varepsilon}$ as follows:
	\begin{align*}\label{key}
		\Theta_{A'B'\to AB}[\mathcal{E}_{A'B'}] =& \tr \left( J_{A_{0}B_{0}\widetilde{A}_{1}\widetilde{B}_{1}}^{\mathcal{F}_{A'B'}^{K}} J_{A_{0}B_{0}\widetilde{A}_{1}\widetilde{B}_{1}}^{\mathcal{E}_{A'B'}} \right)  \mathcal{N}_{AB}^{\varepsilon}\\
		& + \tr \left\{ \left(I_{A_{0}B_{0}\widetilde{A}_{1}\widetilde{B}_{1}} - J_{A_{0}B_{0}\widetilde{A}_{1}\widetilde{B}_{1}}^{\mathcal{F}_{A'B'}^{K}} \right) J_{A_{0}B_{0}\widetilde{A}_{1}\widetilde{B}_{1}}^{\mathcal{E}_{A'B'}} \right\} \mathcal{M}_{AB}.
	\end{align*}
	While $ \Theta_{A'B'\to AB}[\mathcal{F}_{A'B'}^{K}]  = \mathcal{N}_{AB}^{\varepsilon}$ is apparent from the trace terms, we show that $ \Theta_{A'B'\to AB} $ is a SEPPSC. Note that the Choi matrix of a separable channel $ \mathcal{E}_{A'B'} $ is a separable state and $ J_{A_{0}B_{0}\widetilde{A}_{1}\widetilde{B}_{1}}^{\mathcal{F}_{A'B'}^{K}} = \Phi_{A_{0}\widetilde{B}_{1}}^{K} \otimes \Phi_{\widetilde{A}_{1}B_{0}}^{K} $, which leads to
	\begin{equation}\label{key}
	\tr \left( J_{A_{0}B_{0}\widetilde{A}_{1}\widetilde{B}_{1}}^{\mathcal{F}_{A'B'}^{K}} J_{A_{0}B_{0}\widetilde{A}_{1}\widetilde{B}_{1}}^{\mathcal{E}_{A'B'}} \right) \le \dfrac{1}{K^{2}}
	\end{equation}
	for any $ \mathcal{E}_{A'B'}\in \sepc(A':B') $.\footnote{This follows directly from a well-known result that is reported in the Appendix as Proposition \ref{thm: max fidelity btw. ent. sep. states}.} Therefore, when $ \mathcal{E}_{A'B'} \in \sepc(A':B') $, we have that
	\begin{align*}
	\Theta_{A'B'\to AB}[\mathcal{E}_{A'B'}] &=\tr \left( J_{A_{0}B_{0}\widetilde{A}_{1}\widetilde{B}_{1}}^{\mathcal{F}_{A'B'}^{K}} J_{A_{0}B_{0}\widetilde{A}_{1}\widetilde{B}_{1}}^{\mathcal{E}_{A'B'}} \right)  \mathcal{N}_{AB}^{\varepsilon}\\
	&\quad + \tr \left\{ \left(I_{A_{0}B_{0}\widetilde{A}_{1}\widetilde{B}_{1}} - J_{A_{0}B_{0}\widetilde{A}_{1}\widetilde{B}_{1}}^{\mathcal{F}_{A'B'}^{K}} \right) J_{A_{0}B_{0}\widetilde{A}_{1}\widetilde{B}_{1}}^{\mathcal{E}_{A'B'}} \right\} \mathcal{M}_{AB}\\
	&= q' \mathcal{N}_{AB}^{\varepsilon} + (1-q') \mathcal{M}_{AB}\\
	&= q\left( \dfrac{\mathcal{N}_{AB}^{\varepsilon} + r \mathcal{M}_{AB}}{1+r} \right) + (1-q) \mathcal{M}_{AB} 
	\in \sepc(A:B),
	\end{align*}
	where $ q= q' (1+r) \le 1 $ due to $ q'\le \frac{1}{K^{2}}=\frac{1}{\lceil \sqrt{1+r} \rceil^{2}} $. We conclude that 
	\begin{align*}
	E_{C,\seppsc}^{(1), \varepsilon} (\mathcal{N}_{AB}) &\le \log K^{2}\\
	&= 2 \log \lceil \sqrt{1+r}\rceil\\
	&\le 2 \log (2\sqrt{1+r})\\
    &= \log (1+r) + 2 \\
	&\le LR_{s}^{\varepsilon}(\mathcal{N}_{AB}) + 2,
	\end{align*}
    where in the third line we observed that $\lceil x \rceil \leq 2x$ for all $x\geq 1$. This concludes the proof.
\end{enumerate}
\end{proof}

\section{One-Shot Distillable Dynamic Entanglement of a Bipartite Quantum Channel}

\begin{figure}[tbhp]
	\centering	
	\tikzset{every picture/.style={line width=0.75pt}} 
	\begin{tikzpicture}[x=0.75pt,y=0.75pt,yscale=-0.7,xscale=0.7]
	
	\draw   (142,236) -- (213.5,236) -- (213.5,300) -- (142,300) -- cycle ;
	\draw    (33.5,215) -- (61.5,215) ;
	\draw    (33.5,298) -- (61.5,298) ;
	\draw    (290.5,297) -- (315.5,297) ;
	\draw    (290.5,214) -- (315.5,214) ;
	\draw   (289.5,315) -- (253.5,314) -- (253.5,220) -- (97.5,218) -- (97.5,318) -- (61.5,318) -- (61.5,175) -- (289.5,174) -- cycle ;
	\draw    (98.5,247) -- (142.5,247) ;
	\draw    (98.5,288) -- (142.5,288) ;
	\draw    (213.5,247) -- (253.5,247) ;
	\draw    (213.5,288) -- (253.5,288) ;
	\draw   (472,195) -- (542,195) -- (542,319) -- (472,319) -- cycle ;
	\draw    (541.5,215) -- (580.5,215) ;
	\draw    (541.5,300) -- (580.5,300) ;
	\draw    (432.5,215) -- (471.5,215) ;
	\draw    (431.5,300) -- (470.5,300) ;
	
	\draw (177.75,268) node   [align=left] {$ \mathcal{N}_{AB} $};
	\draw (16,214) node   [align=left] {$ A_{0}' $};
	\draw (16,298) node   [align=left] {$ B_{0}' $};
	\draw (332,213) node   [align=left] {$ A_{1}' $};
	\draw (331,298) node   [align=left] {$ B_{1}' $};
	\draw (121,236) node   [align=left] {$ A_{0} $};
	\draw (122,275) node   [align=left] {$ B_{0} $};
	\draw (233,235) node   [align=left] {$ A_{1} $};
	\draw (234,273) node   [align=left] {$ B_{1} $};
	\draw (171,193) node   [align=left] {$\Theta_{AB\to A'B'} $};
	\draw (507,257) node   [align=left] {$ \mathcal{F}_{A'B'}^{K} $};
	\draw (414,214) node   [align=left] {$ A_{0}' $};
	\draw (413,298) node   [align=left] {$ B_{0}' $};
	\draw (602,214) node   [align=left] {$ A_{1}' $};
	\draw (603,300) node   [align=left] {$ B_{1}' $};
	\draw (372,249) node    {$\approx_{\varepsilon} $};
	
	\end{tikzpicture}
	\caption{The one-shot distillable dynamic entanglement of a bipartite quantum channel $ \mathcal{N}_{AB} $ under a free superchannel $ \Theta_{AB\to A'B'} $.}
	\label{fig:distillable_entanglement}
\end{figure}
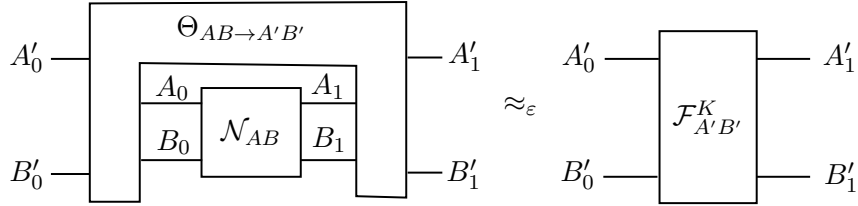

The converse task to dynamic entanglement dilution is dynamic entanglement distillation. In our setting, this can be thought of as the task of simulating a $K$-swap channel --- with $K$ as large as possible --- using a noisy channel as a dynamic entanglement resource together with free superchannels. We give a pictorial representation of the process in Fig.~\ref{fig:distillable_entanglement}. We can capture this notion through the following formal definition.

\begin{definition} \label{1-shot_distillable_def}
	Given $ \varepsilon\ge 0 $, the one-shot distillable dynamic entanglement of a bipartite quantum channel $ \mathcal{N}_{AB} $ under SEPPSC is defined as
	\begin{align*}\label{key}
	E_{D, \seppsc}^{(1), \varepsilon} (\mathcal{N}_{AB}) \coloneqq \max &\left\{ \log K^{2} : \dfrac{1}{2}\pnorm[\diamond]{\Theta_{AB\to A'B'}[\mathcal{N}_{AB}] - \mathcal{F}_{A'B'}^{K}}\le \varepsilon,\right .\\
	&\left . \quad \Theta_{AB\to A'B'} \in \seppsc(A\!:\!B\to A'\!:\!B'),\ K\in \mathbb{N}_{0} \right\}.
	\end{align*}
\end{definition}

We propose to bound the above operational quantity with a measure that is inspired by the one-shot distillable entanglement of a quantum state \cite{brandao2011OneShotRatesEntanglement}.
It is obtained from the hypothesis-testing relative entropy of channels by means of an additional minimization over the set of separable channels \cite{cooney2016StrongConverseExponents, xyuan2019HypothesisTestingEntropies}:
\begin{definition} \label{hypothesis_testing_def}
	Given $ \varepsilon\ge 0 $, we define the hypothesis-testing relative entropy of dynamic entanglement of a bipartite quantum channel $ \mathcal{N}_{AB}$ by
	\begin{align*}
	E_{H}^{\varepsilon}(\mathcal{N}_{AB}) \coloneqq &  \max_{\Psi_{A_{0}B_{0}R_{0}}}  \sup_{\substack{0\le Q_{A_{1}B_{1}R_{0}} \le I_{A_{1}B_{1}R_{0}}\\ \tr \left\{ Q_{A_{1}B_{1}R_{0}}\cdot \mathcal{N}_{AB}\otimes \mathsf{id}_{R_{0}}(\Psi_{A_{0}B_{0}R_{0}}) \right\}\ge 1-\varepsilon}}\\
	& \min_{\mathcal{M}_{AB}\in \sepc(A:B)} \left\{ -\log \tr \left( Q_{A_{1}B_{1}R_{0}} \cdot\mathcal{M}_{AB}\otimes \mathsf{id}_{R_{0}}(\Psi_{A_{0}B_{0}R_{0}}) \right) \right\}.
	\end{align*}
\end{definition}

We remark that the above quantity is monotonic in $\varepsilon$ from the definition, implying in particular that $ E_{H}^{\varepsilon}(\mathcal{N}_{AB}) \ge E_{H}^{\varepsilon/2}(\mathcal{N}_{AB}) $. Moreover, the hypothesis-testing relative entropy of dynamic entanglement does not increase under SEPPSC:
\begin{proposition}
	For a bipartite quantum channel $ \mathcal{N}_{AB} $, and $ \varepsilon\ge 0 $, it holds that
	\begin{equation}\label{key}
	E_{H}^{\varepsilon}(\Theta_{AB\to A'B'}[\mathcal{N}_{AB}]) \le E_{H}^{\varepsilon}(\mathcal{N}_{AB}) \quad \forall\ \Theta_{AB\to A'B'} \in \seppsc(AB\to A'B').
	\end{equation}
\end{proposition}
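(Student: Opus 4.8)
The plan is to establish monotonicity through a data-processing argument that exploits the realization of any superchannel as a pre-channel, followed by the input channel acting alongside a memory system, followed by a post-channel, exactly as in Fig.~\ref{fig: structure of a superchannel}. Concretely, I would write $\Theta_{AB\to A'B'}[\mathcal{E}_{AB}] = \mathcal{F}^{\text{post}}\circ\left(\mathcal{E}_{AB}\otimes\mathsf{id}_{E}\right)\circ\mathcal{F}^{\text{pre}}$, where $\mathcal{F}^{\text{pre}}_{A'_0 B'_0\to A_0 B_0 E}$ and $\mathcal{F}^{\text{post}}_{A_1 B_1 E\to A'_1 B'_1}$ are quantum channels and $E$ is the memory system. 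I would then start from a probe-and-test pair $(\Psi_{A'_0 B'_0 R_0}, Q_{A'_1 B'_1 R_0})$ attaining $E_{H}^{\varepsilon}(\Theta_{AB\to A'B'}[\mathcal{N}_{AB}])$ and turn it into a feasible pair for $E_{H}^{\varepsilon}(\mathcal{N}_{AB})$ of at least the same value, which suffices since the latter is a supremum.

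First I would absorb the pre-channel into the probe state. Dilating $\mathcal{F}^{\text{pre}}$ to an isometry with environment $E'$ and applying it to $\Psi_{A'_0 B'_0 R_0}$ produces a pure state $\hat{\Psi}$ on $A_0 B_0 E E' R_0$, whose marginal $\sigma \coloneqq \mathcal{F}^{\text{pre}}\otimes\mathsf{id}_{R_0}(\Psi)$ lives on $A_0 B_0 E R_0$; I take $R_0'\coloneqq E E' R_0$ as the enlarged reference, so that $\hat{\Psi}$ is an admissible pure input for the channel $\mathcal{N}_{AB}$ acting on $A_0 B_0$. Next I would absorb the post-channel into the effect operator by passing to its Heisenberg-picture adjoint, setting $\hat{Q}\coloneqq(\mathcal{F}^{\text{post}})^{\dagger}(Q_{A'_1 B'_1 R_0})\otimes I_{E'}$ on $A_1 B_1 R_0'$. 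Because $\mathcal{F}^{\text{post}}$ is trace-preserving, its adjoint is unital and completely positive, so $0\le Q_{A'_1 B'_1 R_0}\le I$ forces $0\le\hat{Q}\le I$, making $\hat{Q}$ a legitimate test.

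The crux is the identity $\tr\left(\hat{Q}\cdot\mathcal{E}_{AB}\otimes\mathsf{id}_{R_0'}(\hat{\Psi})\right)=\tr\left(Q\cdot\Theta_{AB\to A'B'}[\mathcal{E}_{AB}]\otimes\mathsf{id}_{R_0}(\Psi)\right)$, valid for \emph{every} channel $\mathcal{E}_{AB}$; it follows by tracing out $E'$, commuting $\mathcal{E}_{AB}$ past that partial trace, and using the adjoint relation to move $\mathcal{F}^{\text{post}}$ off the state and onto the effect. Taking $\mathcal{E}_{AB}=\mathcal{N}_{AB}$ shows that the first-kind success probability is preserved, so the constraint $\tr(\hat{Q}\cdot\mathcal{N}_{AB}\otimes\mathsf{id}_{R_0'}(\hat{\Psi}))\ge 1-\varepsilon$ is inherited and $(\hat{\Psi},\hat{Q})$ is feasible for $E_{H}^{\varepsilon}(\mathcal{N}_{AB})$; taking $\mathcal{E}_{AB}=\mathcal{M}_{AB}$ for $\mathcal{M}_{AB}\in\sepc(A:B)$ gives $\tr(\hat{Q}\cdot\mathcal{M}_{AB}\otimes\mathsf{id}_{R_0'}(\hat{\Psi}))=\tr(Q\cdot\Theta_{AB\to A'B'}[\mathcal{M}_{AB}]\otimes\mathsf{id}_{R_0}(\Psi))$.

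The decisive use of the hypothesis that $\Theta_{AB\to A'B'}$ is a SEPPSC — and the step I expect to require the most care to state cleanly — is the inclusion $\left\{\Theta_{AB\to A'B'}[\mathcal{M}_{AB}]:\mathcal{M}_{AB}\in\sepc(A:B)\right\}\subseteq\sepc(A':B')$. By the identity above, the inner minimization for $\mathcal{N}_{AB}$ therefore ranges over a subset of the separable channels available to $\Theta_{AB\to A'B'}[\mathcal{N}_{AB}]$, and since minimizing over a smaller set can only raise the value,
\[
\min_{\mathcal{M}_{AB}\in\sepc(A:B)}\left\{-\log\tr\left(\hat{Q}\cdot\mathcal{M}_{AB}\otimes\mathsf{id}_{R_0'}(\hat{\Psi})\right)\right\}\ \ge\ \min_{\mathcal{M}'_{A'B'}\in\sepc(A':B')}\left\{-\log\tr\left(Q\cdot\mathcal{M}'_{A'B'}\otimes\mathsf{id}_{R_0}(\Psi)\right)\right\}.
\]
The right-hand side equals $E_{H}^{\varepsilon}(\Theta_{AB\to A'B'}[\mathcal{N}_{AB}])$ by optimality of $(\Psi,Q)$, while the left-hand side lower-bounds $E_{H}^{\varepsilon}(\mathcal{N}_{AB})$ by feasibility of $(\hat{\Psi},\hat{Q})$, yielding the claimed inequality. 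The only genuinely delicate point is the bookkeeping of the memory and purifying systems $E$ and $E'$, in particular verifying that $\hat{Q}$ acts as the identity on $E'$ so that the partial trace underlying the key identity goes through unchanged.
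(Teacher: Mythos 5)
Your proof is correct and follows essentially the same route as the paper's: both take the optimal probe--test pair for $\Theta_{AB\to A'B'}[\mathcal{N}_{AB}]$, pull it back through the pre/post realization of the superchannel (the probe through the pre-processing, the test through the Heisenberg adjoint of the post-processing, with the memory system absorbed into the reference), and then invoke the SEPPSC property to restrict the separable-channel minimization to the image $\Theta_{AB\to A'B'}[\sepc(A:B)]\subseteq\sepc(A':B')$. The only cosmetic difference is that you dilate a general pre-channel to an isometry with environment $E'$, whereas the paper directly uses a realization with isometric pre-processing.
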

\begin{proof}
	Let $\Psi_{A_{0}'B_{0}'R_{0}}^{\ast}$ and $Q_{A_{1}'B_{1}'R_{0}}^{\ast}$ be optimal arguments of $E_{H}^{\varepsilon}(\Theta_{AB\to A'B'}[\mathcal{N}_{AB}])$, so that
	\begin{equation}
		E_{H}^{\varepsilon}(\Theta_{AB\to A'B'}[\mathcal{N}_{AB}]) = \min_{\mathcal{M}_{A'B'}\in \sepc(A':B')}\\
		 \left \{ -\log \tr Q_{A_{1}'B_{1}'R_{0}}^{\ast} \cdot \mathcal{M}_{A'B'}\otimes \mathsf{id}_{R_{0}}\left (\Psi_{A_{0}'B_{0}'R_{0}}^{\ast}\right ) \right \} ,
	\end{equation}
	where $ 0\le Q_{A_{1}'B_{1}'R_{0}}^{\ast} \le I_{A_{1}'B_{1}'R_{0}} $ and
	\[
	\tr \left\{ Q_{A_{1}'B_{1}'R_{0}}^{\ast}\cdot \Theta_{AB\to A'B'}[\mathcal{N}_{AB}]\otimes \mathsf{id}_{R_{0}}\left (\Psi_{A_{0}'B_{0}'R_{0}}^{\ast}\right ) \right\}\ge 1-\varepsilon.\] Then using the structure of the superchannel $ \Theta_{AB\to A'B'}[\mathcal{E}_{AB}] = \mathcal{U}_{A_{1}B_{1}E_{0}\to A_{1}'B_{1}'} \circ \mathcal{E}_{AB}\circ \mathcal{W}_{A_{0}'B_{0}'\to A_{0}B_{0}E_{0}}$ with isometries $ \mathcal{U}_{A_{1}'B_{1}'\to A_{1}B_{1}E_{0}}^{\dag} $ and $ \mathcal{W}_{A_{0}'B_{0}'\to A_{0}B_{0}E_{0}} $, we observe that
	\begin{align*}
	E_{H}^{\varepsilon}(\Theta_{AB\to A'B'}[\mathcal{N}_{AB}]) & = \min_{\mathcal{M}_{A'B'}\in \sepc(A':B')} \left \{ -\log \tr Q_{A_{1}'B_{1}'R_{0}}^{\ast} \cdot \mathcal{M}_{A'B'}\otimes \mathsf{id}_{R_{0}}\left (\Psi_{A_{0}'B_{0}'R_{0}}^{\ast}\right ) \right\}\\
	& \le \min_{\mathcal{\widetilde{M}}_{AB}\in \sepc(A:B)} \left \{  -\log \tr Q_{A_{1}'B_{1}'R_{0}}^{\ast} \cdot \Theta_{AB\to A'B'}\left [\mathcal{\widetilde{M}}_{AB}\right ]\otimes \mathsf{id}_{R_{0}}\left (\Psi_{A_{0}'B_{0}'R_{0}}^{\ast}\right ) \right\}\\
	& = \min_{\mathcal{\widetilde{M}}_{AB}\in \sepc(A:B)} \left [ -\log \tr \left \{ \mathcal{U}_{A_{1}'B_{1}'\to  A_{1}B_{1}E_{0}}^{\dag}\left (Q_{A_{1}'B_{1}'R_{0}}^{\ast} \right ) \cdot  \right . \right .\\
	&\qquad\qquad \qquad\qquad \qquad\qquad \qquad \left .\left .\mathcal{\widetilde{M}}_{AB}\otimes \mathsf{id}_{E_{0}R_{0}}\left (\mathcal{W}_{A_{0}'B_{0}'\to A_{0}B_{0}E_{0}}\left (\Psi_{A_{0}'B_{0}'R_{0}}^{\ast}\right )\right ) \right \} \right]\\
	&= \min_{\mathcal{\widetilde{M}}_{AB}\in \sepc(A:B)} \left \{ -\log \tr \widetilde{Q}_{A_{1}B_{1}E_{0}R_{0}}^{\ast} \cdot  \mathcal{\widetilde{M}}_{AB}\otimes \mathsf{id}_{E_{0}R_{0}}\left (\widetilde{\Psi}_{A_{0}B_{0}E_{0}R_{0}}^{\ast}\right ) \right\}\\
	& \le \max_{\Psi_{A_{0}B_{0}E_{0}R_{0}}}  \sup_{\substack{0\le Q_{A_{1}B_{1}E_{0}R_{0}} \le I_{A_{1}B_{1}E_{0}R_{0}}\\ \tr \left\{ Q_{A_{1}B_{1}E_{0}R_{0}}\cdot \mathcal{N}_{AB}\otimes \mathsf{id}_{E_{0}R_{0}}(\Psi_{A_{0}B_{0}E_{0}R_{0}}) \right\}\ge 1-\varepsilon}} \min_{\mathcal{\widetilde{M}}_{AB}\in \sepc(A:B)} \\
	&\qquad \qquad \left\{ -\log \tr \left( Q_{A_{1}B_{1}E_{0}R_{0}} \cdot\mathcal{\widetilde{M}}_{AB}\otimes \mathsf{id}_{E_{0}R_{0}}\left (\Psi_{A_{0}B_{0}E_{0}R_{0}}\right ) \right) \right\}\\
	&= E_{H}^{\varepsilon}(\mathcal{N}_{AB}),
	\end{align*}
	where $ \widetilde{Q}_{A_{1}B_{1}E_{0}R_{0}}^{\ast}=  \mathcal{U}_{A_{1}'B_{1}'\to A_{1}B_{1}E_{0}}^{\dag}(Q_{A_{1}'B_{1}'R_{0}}^{\ast} )$ and $ \widetilde{\Psi}_{A_{0}B_{0}E_{0}R_{0}}^{\ast}= \mathcal{W}_{A_{0}'B_{0}'\to A_{0}B_{0}E_{0}}(\Psi_{A_{0}'B_{0}'R_{0}}^{\ast})$. The last inequality holds since $ 0\le \widetilde{Q}_{A_{1}B_{1}E_{0}R_{0}}^{\ast}\le I_{A_{1}B_{1}E_{0}R_{0}} $ and
	\begin{equation}\label{key}
	\tr \left\{ \widetilde{Q}_{A_{1}B_{1}E_{0}R_{0}}^{\ast}\cdot \mathcal{N}_{AB}\otimes \mathsf{id}_{E_{0}R_{0}}\left (\widetilde{\Psi}_{A_{0}B_{0}E_{0}R_{0}}^{\ast}\right ) \right\}\ge 1-\varepsilon.
	\end{equation}
	This completes the proof.
\end{proof}

Our second main result connects the two notions identified in Definitions~\ref{1-shot_distillable_def} and~\ref{hypothesis_testing_def}.

\begin{theorem}
	Given $\varepsilon\ge 0$ and a bipartite quantum channel $ \mathcal{N}_{AB} $, if $\left \lfloor E_{H}^{\varepsilon} (\mathcal{N}_{AB}) \right \rfloor$ is even, the one-shot distillable dynamic entanglement from a bipartite quantum channel $\mathcal{N}_{AB}$ under SEPPSC is bounded as
	\begin{equation}\label{key}
	\left \lfloor E_{H}^{\varepsilon} (\mathcal{N}_{AB}) \right \rfloor \le E_{D, \seppsc}^{(1), \varepsilon} (\mathcal{N}_{AB}) \le E_{H}^{2 \varepsilon}(\mathcal{N}_{AB}).
	\end{equation}
	If $\left \lfloor E_{H}^{\varepsilon} (\mathcal{N}_{AB}) \right \rfloor$ is odd, then we have instead that
	\begin{equation}\label{key}
	E_{H}^{\varepsilon} (\mathcal{N}_{AB}) -1 \le E_{D, \seppsc}^{(1), \varepsilon} (\mathcal{N}_{AB}) \le E_{H}^{2 \varepsilon}(\mathcal{N}_{AB}).
	\end{equation}
\end{theorem}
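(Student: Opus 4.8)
The plan is to establish the two inequalities separately: the upper bound via the monotonicity of $E_H^{\varepsilon}$ together with a short smoothing argument, and the lower bound via an explicit distilling superchannel of the measure-and-prepare type already used in the proof of Theorem~\ref{thm: one-shot dyn. ent. cost}.

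For the upper bound I would first record that the golden unit saturates the monotone, i.e.\ $E_H^{\varepsilon}(\mathcal F_{A'B'}^{K})\ge \log K^{2}$ for every $\varepsilon\ge 0$. This is seen by feeding in $\Psi=\Phi_{A_0'\widetilde A_0'}^{K}\otimes\Phi_{B_0'\widetilde B_0'}^{K}$ and using $Q=J^{\mathcal F_{A'B'}^{K}}=\Phi_{A_0'\widetilde B_1'}^{K}\otimes\Phi_{\widetilde A_1'B_0'}^{K}$: then $\tr\{Q\cdot\mathcal F_{A'B'}^{K}\otimes\mathsf{id}(\Psi)\}=1$, while $\tr\{Q\cdot\mathcal M\otimes\mathsf{id}(\Psi)\}\le 1/K^{2}$ for all $\mathcal M\in\sepc(A':B')$ by the fidelity bound of Proposition~\ref{thm: max fidelity btw. ent. sep. states}, so the objective is at least $\log K^{2}$. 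Now let $\Theta_{AB\to A'B'}$ be a SEPPSC attaining $E_{D,\seppsc}^{(1),\varepsilon}(\mathcal N_{AB})=\log K^{2}$ with $\tfrac12\dnorm{\Theta_{AB\to A'B'}[\mathcal N_{AB}]-\mathcal F_{A'B'}^{K}}\le\varepsilon$. Since $0\le Q\le I$ and the two channel outputs have equal trace, any test feasible for $\mathcal F_{A'B'}^{K}$ at level $\varepsilon$ remains feasible for $\Theta_{AB\to A'B'}[\mathcal N_{AB}]$ at level $2\varepsilon$, whence $E_H^{2\varepsilon}(\Theta_{AB\to A'B'}[\mathcal N_{AB}])\ge E_H^{\varepsilon}(\mathcal F_{A'B'}^{K})\ge\log K^{2}$. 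Combining this with the monotonicity $E_H^{2\varepsilon}(\mathcal N_{AB})\ge E_H^{2\varepsilon}(\Theta_{AB\to A'B'}[\mathcal N_{AB}])$ from the preceding Proposition closes the upper bound.

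For the lower bound I would reuse the template of Theorem~\ref{thm: one-shot dyn. ent. cost}. Let $\Psi^{\ast}$ and $Q^{\ast}$ be optimal for $E_H^{\varepsilon}(\mathcal N_{AB})$ and set $p(\mathcal E_{AB}):=\tr\{Q^{\ast}\cdot\mathcal E_{AB}\otimes\mathsf{id}_{R_0}(\Psi^{\ast})\}$. Fixing an integer $K$ with $\log K^{2}\le E_H^{\varepsilon}(\mathcal N_{AB})$, let $\mathcal S_{A'B'}$ be the isotropic channel whose Choi matrix is $\frac{I-J^{\mathcal F_{A'B'}^{K}}}{K^{4}-1}$, and define
\[
\Theta_{AB\to A'B'}[\mathcal E_{AB}]:=p(\mathcal E_{AB})\,\mathcal F_{A'B'}^{K}+(1-p(\mathcal E_{AB}))\,\mathcal S_{A'B'},
\]
a valid measure-and-prepare superchannel. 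Feasibility is immediate, as $p(\mathcal N_{AB})\ge 1-\varepsilon$ forces $\tfrac12\dnorm{\Theta_{AB\to A'B'}[\mathcal N_{AB}]-\mathcal F_{A'B'}^{K}}=(1-p(\mathcal N_{AB}))\,\tfrac12\dnorm{\mathcal S_{A'B'}-\mathcal F_{A'B'}^{K}}\le\varepsilon$.

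The crux is to verify that $\Theta_{AB\to A'B'}$ is separability preserving. For $\mathcal M_{AB}\in\sepc(A:B)$, optimality of $(\Psi^{\ast},Q^{\ast})$ gives $p(\mathcal M_{AB})\le 2^{-E_H^{\varepsilon}(\mathcal N_{AB})}\le 1/K^{2}$, and it then suffices to show that $p\,\mathcal F_{A'B'}^{K}+(1-p)\,\mathcal S_{A'B'}\in\sepc(A':B')$ whenever $p\le 1/K^{2}$: since the Choi matrix of $\mathcal F_{A'B'}^{K}$ is the maximally entangled state $\Phi_{A_0'\widetilde B_1'}^{K}\otimes\Phi_{\widetilde A_1'B_0'}^{K}$ of local dimension $K^{2}$ and that of $\mathcal S_{A'B'}$ is its isotropic complement, this is exactly Theorem~\ref{thm: isotropic state separabilitiy} applied with local dimension $K^{2}$ (equivalently it follows from $R_{s}(\mathcal F_{A'B'}^{K})=K^{2}-1$ by writing the mixture as a convex combination of two separable channels). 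Hence $E_{D,\seppsc}^{(1),\varepsilon}(\mathcal N_{AB})\ge 2\log K$ for the largest admissible $K=\lfloor 2^{E_H^{\varepsilon}(\mathcal N_{AB})/2}\rfloor$. In the even case $\lfloor E_H^{\varepsilon}\rfloor=2m$ one simply takes $K=2^{m}$, for which $K^{2}=2^{2m}\le 2^{E_H^{\varepsilon}}$, giving $E_{D,\seppsc}^{(1),\varepsilon}\ge 2m=\lfloor E_H^{\varepsilon}\rfloor$; the odd case is obtained by the analogous count of the largest such $K$. The step I expect to be most delicate is precisely this final rounding: the golden units are available only in the discrete sizes $2\log K$, so the attainable rate lives on the lattice $\{2\log K:K\in\mathbb N\}$, and reconciling this lattice with the threshold $1/K^{2}$ forced by $R_{s}(\mathcal F^{K})=K^{2}-1$ is what produces—and complicates—the even/odd dichotomy.
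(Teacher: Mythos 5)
Your proposal follows essentially the same route as the paper on both bounds. For the upper bound you combine monotonicity of $E_H^{2\varepsilon}$ under SEPPSC with the same ansatz $\Psi = \Phi_{A_{0}'\widetilde{A}_{0}'}^{K}\otimes\Phi_{B_{0}'\widetilde{B}_{0}'}^{K}$, $Q = J^{\mathcal{F}_{A'B'}^{K}}$, and with Proposition~\ref{thm: max fidelity btw. ent. sep. states}; the only difference is that you certify feasibility at level $2\varepsilon$ through the operator bound $\abs{\tr\{Q(\rho-\sigma)\}}\le\frac{1}{2}\pnorm[1]{\rho-\sigma}$ rather than through the Fuchs--van de Graaf step used in the paper --- both yield $1-2\varepsilon$, and your variant is marginally cleaner since it does not use that $\mathcal{F}_{A'B'}^{K}\otimes\mathsf{id}(\Psi)$ is pure. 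Your lower-bound construction (measure $Q^{\ast}$ on $\mathcal{E}_{AB}(\Psi^{\ast})$, prepare $\mathcal{F}_{A'B'}^{K}$ or the isotropic-complement channel, separability via Theorem~\ref{thm: isotropic state separabilitiy} in local dimension $K^{2}$, feasibility from $p(\mathcal{N}_{AB})\ge 1-\varepsilon$) is exactly the paper's construction, and your even case with $K=2^{m}$ matches it verbatim.

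The one genuine gap is the odd case, which you leave as ``the analogous count of the largest such $K$.'' That count does not deliver the stated bound: with $K=\lfloor 2^{E_H^{\varepsilon}/2}\rfloor$ the achieved rate is $2\log\lfloor 2^{E_H^{\varepsilon}/2}\rfloor$, which can fall strictly below $E_H^{\varepsilon}-1$; for instance $E_H^{\varepsilon}=1.9$ (so $\lfloor E_H^{\varepsilon}\rfloor=1$, odd) forces $K=1$, i.e.\ rate $0$, while $E_H^{\varepsilon}-1=0.9$. What the construction genuinely yields when $\lfloor E_H^{\varepsilon}\rfloor$ is odd is $\lfloor E_H^{\varepsilon}\rfloor-1$, obtained by taking $K=2^{(\lfloor E_H^{\varepsilon}\rfloor-1)/2}$; this equals $E_H^{\varepsilon}-1$ only when $E_H^{\varepsilon}$ is an integer. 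In fairness, the paper's own proof has the same soft spot: it invokes the inequality $\lfloor q/p\rfloor\ge (q+1)/p-1$, which is valid for integers $q,p$, with $q=E_H^{\varepsilon}$, a quantity that need not be an integer. So your instinct that the rounding is the delicate step is well placed, but as written your sketch (and, strictly speaking, the paper's argument too) only establishes the odd-case lower bound with $E_H^{\varepsilon}-1$ replaced by $\lfloor E_H^{\varepsilon}\rfloor-1$.
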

\begin{proof}
We break down the argument into separate proofs of the two inequalities.
\begin{enumerate}[(i)]
\item For the upper bound, let $ \Theta_{AB\to A'B'} $ be an optimal SEPPSC such that $ \Theta_{AB\to A'B'}[\mathcal{N}_{AB}]\approx_{\varepsilon}\mathcal{F}_{A'B'}^{K} $ with $ E_{D, \seppsc}^{(1), \varepsilon} (\mathcal{N}_{AB}) =  \log K^{2}$. From the above two propositions we have that
\begin{align*}
E_{H}^{2 \varepsilon}(\mathcal{N}_{AB})&\ge E_{H}^{2 \varepsilon}(\Theta_{AB\to A'B'}[\mathcal{N}_{AB}]) \\
& = \max_{\Psi_{A_{0}'B_{0}'R_{0}}}  \sup_{\substack{0\le Q_{A_{1}'B_{1}'R_{0}} \le I_{A_{1}'B_{1}'R_{0}}\\ \tr \left\{ Q_{A_{1}'B_{1}'R_{0}}\cdot \Theta_{AB\to A'B'}[\mathcal{N}_{AB}]\otimes \mathsf{id}_{R_{0}}\left (\Psi_{A_{0}'B_{0}'R_{0}}\right ) \right\}\ge 1-2 \varepsilon}} \\
&\qquad  \min_{\mathcal{M}_{A'B'}\in \sepc(A':B')} \left[ -\log \tr \left\{ Q_{A_{1}'B_{1}'R_{0}} \cdot\mathcal{M}_{A'B'}\otimes \mathsf{id}_{R_{0}}\left (\Psi_{A_{0}'B_{0}'R_{0}}\right ) \right\} \right]\\
&\ge \min_{\mathcal{M}_{A'B'}\in \sepc(A':B')} \left \{ -\log \tr \mathcal{F}_{A'B'}^{K}\otimes \mathsf{id}_{\widetilde{A}_{0}'\widetilde{B}_{0}'}\left( \Phi_{A_{0}'\widetilde{A}_{0}'}^{K}\otimes \Phi_{B_{0}'\widetilde{B}_{0}'}^{K} \right) \right .\\
&\qquad \left .\cdot\mathcal{M}_{A'B'}\otimes \mathsf{id}_{\widetilde{A}_{0}'\widetilde{B}_{0}'}\left( \Phi_{A_{0}'\widetilde{A}_{0}'}^{K}\otimes \Phi_{B_{0}'\widetilde{B}_{0}'}^{K} \right)\right \}\\
&= \log K^{2}\\
&= E_{D, \seppsc}^{(1), \varepsilon} (\mathcal{N}_{AB}),
\end{align*}
where the second inequality has been derived by making the ansatz $\Psi_{A_{0}'B_{0}'R_{0}} = \Phi_{A_{0}'\widetilde{A}_{0}'}^{K}\otimes \Phi_{B_{0}'\widetilde{B}_{0}'}^{K}$, and the fourth line follows from Proposition~\ref{thm: max fidelity btw. ent. sep. states}. That this is a valid choice is confirmed by the fact that
\begin{align*}
\dfrac{1}{2}\pnorm[1]{\Theta_{AB\to A'B'}[\mathcal{N}_{AB}](\Psi_{A_{0}'B_{0}'R_{0}}) - \mathcal{F}_{A'B'}^{K}(\Psi_{A_{0}'B_{0}'R_{0}})} \le \dfrac{1}{2}\pnorm[\diamond]{\Theta_{AB\to A'B'}[\mathcal{N}_{AB}] - \mathcal{F}_{A'B'}^{K}} \le \varepsilon
\end{align*}
for any (pure) state $ \Psi_{A_{0}'B_{0}'R_{0}} $, in turn implying that\footnote{Here we are making use of the Fuchs--van de Graaf inequalities~\cite{Fuchs1999}. They establish the relations $1-\sqrt{F(\rho,\sigma)} \le \dfrac{1}{2} \pnorm[1]{\rho - \sigma} \le \sqrt{1- F(\rho, \sigma)}$ between trace distance and quantum fidelity.}
\begin{align*}\label{key}
&F\left (\Theta_{AB\to A'B'}[\mathcal{N}_{AB}](\Psi_{A_{0}'B_{0}'R_{0}}), \mathcal{F}_{A'B'}^{K}(\Psi_{A_{0}'B_{0}'R_{0}})\right )\\
&\ge \left( 1 - \dfrac{1}{2}\pnorm[1]{\Theta_{AB\to A'B'}[\mathcal{N}_{AB}](\Psi_{A_{0}'B_{0}'R_{0}}) - \mathcal{F}_{A'B'}^{K}(\Psi_{A_{0}'B_{0}'R_{0}})} \right)^{2}\\
&\ge \left (1-\varepsilon\right )^{2}\\
&\ge 1 - 2\varepsilon.
\end{align*}

\item For the lower bound, let $ \Psi_{A_{0}B_{0}R_{0}}^{\ast} $ and $ Q_{A_{1}B_{1}R_{0}}^{\ast} $ be optimal arguments of $ E_{H}^{\varepsilon}(\mathcal{N}_{AB}) $, which satisfy that
\begin{align*}
2^{-E_{H}^{\varepsilon}(\mathcal{N}_{AB})} &= \max_{\mathcal{M}_{AB}\in \sepc(A:B)} \left\{ \tr \left( Q_{A_{1}B_{1}R_{0}}^{\ast} \cdot\mathcal{M}_{AB}\otimes \mathsf{id}_{R_{0}}\left (\Psi_{A_{0}B_{0}R_{0}}^{\ast}\right ) \right) \right\}.
\end{align*}

Setting $ K=2^{\frac{1}{2} \left \lfloor E_{H}^{\varepsilon}(\mathcal{N}_{AB}) \right \rfloor} $ for $ \left \lfloor E_{H}^{\varepsilon} (\mathcal{N}_{AB}) \right \rfloor $ even, and $ K=2^{\left \lfloor \frac{1}{2}E_{H}^{\varepsilon}(\mathcal{N}_{AB}) \right \rfloor} $ otherwise, we can construct a SEPPSC $\Theta_{AB\to A'B'}$ as follows:
\begin{align*}
\Theta_{AB\to A'B'}[\mathcal{E}_{AB}] \coloneqq &\tr \left \{Q_{A_{1}B_{1}R_{0}}^{\ast}\mathcal{E}_{AB}(\Psi_{A_{0}B_{0}R_{0}}^{\ast})\right \} \mathcal{F}_{A'B'}^{K} \\
&+ \tr \left \{\left( I_{A_{1}B_{1}R_{0}} - Q_{A_{1}B_{1}R_{0}}^{\ast} \right) \mathcal{E}_{AB}(\Psi_{A_{0}B_{0}R_{0}}^{\ast}) \right \} \mathcal{G}_{A'B'}^{K},
\end{align*}
where $\mathcal{G}_{A'B'}^{K}$ is the quantum channel corresponding to the following (normalized) Choi operator:
\begin{equation}
J_{A_{0}'B_{0}'\widetilde{A}_{1}'\widetilde{B}_{1}'}^{\mathcal{G}_{A'B'}^{K}} = \dfrac{I_{A_{0}'B_{0}'\widetilde{A}_{1}'\widetilde{B}_{1}'} - J_{A_{0}'B_{0}'\widetilde{A}_{1}'\widetilde{B}_{1}'}^{\mathcal{F}_{A'B'}^{K}}}{K^{4}-1}=\dfrac{I_{A_{0}'B_{0}'\widetilde{A}_{1}'\widetilde{B}_{1}'}-\Phi_{A_{0}'\widetilde{B}_{1}'}^{K} \otimes \Phi_{\widetilde{A}_{1}'B_{0}'}^{K}}{K^{4}-1}\in \sepd(A_{0}'\widetilde{A}_{1}':B_{0}'\widetilde{B}_{1}'),
\end{equation}
which implies that $ \mathcal{G}_{A'B'}^{K}\in \sepc(A':B') $.
For $ \mathcal{M}_{AB}\in \sepc(A:B) $, we observe that
\begin{align*}
\Theta_{AB\to A'B'}[\mathcal{M}_{AB}] &\coloneqq \tr \left \{Q_{A_{1}B_{1}R_{0}}^{\ast}\mathcal{M}_{AB}(\Psi_{A_{0}B_{0}R_{0}}^{\ast})\right \} \mathcal{F}_{A'B'}^{K} \\
&\quad + \tr \left \{\left( I_{A_{1}B_{1}R_{0}} - Q_{A_{1}B_{1}R_{0}}^{\ast} \right) \mathcal{M}_{AB}(\Psi_{A_{0}B_{0}R_{0}}^{\ast}) \right \} \mathcal{G}_{A'B'}^{K}\\
&= q \mathcal{F}_{A'B'}^{K} + (1-q) \mathcal{G}_{A'B'}^{K}\\
&\in \sepc(A':B')
\end{align*}
because of $ q=\tr \left \{Q_{A_{1}B_{1}R_{0}}^{\ast}\mathcal{M}_{AB}(\Psi_{A_{0}B_{0}R_{0}}^{\ast})\right \}\le \frac{1}{K^{2}} $ and Theorem~\ref{thm: isotropic state separabilitiy} regarding the Choi matrix. Denoting
 \begin{equation}\label{key}
 	q^{\ast} =  \tr \left \{Q_{A_{1}B_{1}R_{0}}^{\ast}\mathcal{N}_{AB}(\Psi_{A_{0}B_{0}R_{0}}^{\ast})\right \} \ge 1-\varepsilon,
 \end{equation}
 we have that
\begin{align*}
\dfrac{1}{2}\pnorm[\diamond]{\Theta_{AB\to A'B'}[\mathcal{N}_{AB}] - \mathcal{F}_{A'B'}^{K}}&= \dfrac{1}{2}\pnorm[\diamond]{q^{\ast} \mathcal{F}_{A'B'}^{K} + (1-q^{\ast}) \mathcal{G}_{A'B'}^{K} - \mathcal{F}_{A'B'}^{K}}\\
&\le \dfrac{1}{2}\pnorm[\diamond]{(1-q^{\ast}) \mathcal{F}_{A'B'}^{K}} + \dfrac{1}{2}\pnorm[\diamond]{(1-q^{\ast}) \mathcal{G}_{A'B'}^{K} }\\
&= 1-q^{\ast}\\
&\le \varepsilon,
\end{align*}
where we used that $ \pnorm[\diamond]{\mathcal{E}} = 1 $ for $ \mathcal{E}\in \cptp(AB) $ \cite{watrous2018TTQI}.
Therefore, we conclude that, for $ \left \lfloor E_{H}^{\varepsilon} (\mathcal{N}_{AB}) \right \rfloor $ even, \begin{equation*}
E_{D, \seppsc}^{(1), \varepsilon} (\mathcal{N}_{AB}) \ge \log K^{2} = \left \lfloor E_{H}^{\varepsilon}(\mathcal{N}_{AB})\right  \rfloor .
\end{equation*}
When $\left \lfloor E_{H}^{\varepsilon} (\mathcal{N}_{AB}) \right \rfloor$ is odd, noticing that it holds that $ \left \lfloor \dfrac{q}{p}\right \rfloor \ge \dfrac{q+1}{p}-1 $ for integers $q,p\in \mathbb{N}$, we obtain that
\begin{align*}
E_{D, \seppsc}^{(1), \varepsilon} (\mathcal{N}_{AB}) &\ge \log K^{2} = 2\left  \lfloor \frac{1}{2}E_{H}^{\varepsilon}(\mathcal{N}_{AB})\right  \rfloor \\
&\ge 2\left( \dfrac{E_{H}^{\varepsilon}(\mathcal{N}_{AB})+1}{2}-1 \right)\\
&= E_{H}^{\varepsilon}(\mathcal{N}_{AB})-1.
\end{align*}
This concludes the proof.
\end{enumerate}
\end{proof}

\section{One-Shot Catalytic Dynamic Entanglement Cost of a Bipartite Quantum Channel}

The third operational task we consider is a variation on the theme of dynamic entanglement cost. We push this notion further by introducing two tweaks: (i)~we allow an additional dynamic entanglement resource that could be used as a catalyst while simulating a bipartite channel, with the stipulation that the catalyst channel be returned intact after the task; and (ii)~we introduce a class of superchannels that might generate a small amount of dynamic entanglement when acting on separable channels.
\begin{definition} \label{delta_seppsc_def}
	For $ \delta\ge 0 $, a superchannel $ \Theta_{AB\to A'B'} $ is called $ \delta $-separability-preserving superchannel ($ \delta $-SEPPSC) if
	\begin{equation}\label{delta_seppsc}
	R \left( \Theta_{AB\to A'B'}[ \mathcal{M}_{AB}] \right)\le \delta \quad \forall \mathcal{M}_{AB} \in \sepc(A:B),
	\end{equation}
	where $ R (\mathcal{N}_{AB}) = \min \left\{ s\ge 0 : \mathcal{N}_{AB}\le (1+s) \mathcal{M}_{AB}, \mathcal{M}_{AB}\in \sepc(A:B) \right\}$ is the generalized robustness with respect to the separable channels.
\end{definition}

The choice of the generalized robustness to quantify the maximum amount of entanglement generation allowed in the above definition may seem rather arbitrary. A compelling reason why this is in fact a reasonable and natural choice comes from the study of entanglement theory for states. Indeed, it is known that a condition analogous to \eqref{delta_seppsc} leads to a universally reversible theory of entanglement manipulation~\cite{BrandaoPlenio1, BrandaoPlenio2}. The role of the generalized robustness in this context is quite unique, in the sense that using alternative measures --- such as the trace norm distance from the set of separable states --- is known to trivialize the problem~\cite[Section~V]{BrandaoPlenio2}. In light of this, Definition~\ref{delta_seppsc_def} identifies a good candidate for a useful enlargement of the set of free superchannels.

As expected, the generalized log-robustness and its smooth version
 might increase under a $ \delta $-separability-preserving superchannel as the following results show:
\begin{proposition}
	Let $ \Theta_{AB\to A'B'} $ be a $ \delta $-SEPPSC. For any bipartite quantum channel $ \mathcal{N}_{AB} $, the following holds:
	\begin{equation}\label{key}
	LR(\Theta_{AB\to A'B'}[\mathcal{N}_{AB}] ) \le LR (\mathcal{N}_{AB}) +\log (1+\delta).
	\end{equation}
\end{proposition}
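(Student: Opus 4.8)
The plan is to mirror the structure of the earlier monotonicity lemma for the robustnesses, but now keeping track of the extra entanglement that a $\delta$-SEPPSC is permitted to create. I would start from the formulation of the generalized robustness recorded in Definition~\ref{delta_seppsc_def}: setting $r = R(\mathcal{N}_{AB})$, there exists a separable channel $\mathcal{M}_{AB}\in\sepc(A:B)$ with $\mathcal{N}_{AB}\le (1+r)\mathcal{M}_{AB}$, where the inequality is understood at the level of Choi operators, i.e.\ $(1+r)\mathcal{M}_{AB}-\mathcal{N}_{AB}$ is completely positive.

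Next I would apply $\Theta_{AB\to A'B'}$ to this inequality. The crucial ingredient is that a superchannel preserves the completely-positive ordering: using its dilation $\Theta_{AB\to A'B'}[\mathcal{E}_{AB}] = \mathcal{F}^{\text{post}}\circ(\mathcal{E}_{AB}\otimes\mathsf{id}_E)\circ\mathcal{F}^{\text{pre}}$ with CPTP pre- and post-processing channels (Fig.~\ref{fig: structure of a superchannel}), one sees that it sends completely positive maps to completely positive maps. By linearity it therefore maps the CP object $(1+r)\mathcal{M}_{AB}-\mathcal{N}_{AB}$ to the CP object $(1+r)\Theta_{AB\to A'B'}[\mathcal{M}_{AB}]-\Theta_{AB\to A'B'}[\mathcal{N}_{AB}]$, which yields $\Theta_{AB\to A'B'}[\mathcal{N}_{AB}]\le (1+r)\,\Theta_{AB\to A'B'}[\mathcal{M}_{AB}]$. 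Since $\mathcal{M}_{AB}$ is separable and $\Theta_{AB\to A'B'}$ is a $\delta$-SEPPSC, we have $R(\Theta_{AB\to A'B'}[\mathcal{M}_{AB}])\le\delta$, so there is a separable channel $\mathcal{L}_{A'B'}\in\sepc(A':B')$ with $\Theta_{AB\to A'B'}[\mathcal{M}_{AB}]\le (1+\delta)\mathcal{L}_{A'B'}$.

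Chaining the two inequalities gives $\Theta_{AB\to A'B'}[\mathcal{N}_{AB}]\le (1+r)(1+\delta)\mathcal{L}_{A'B'}$ with $\mathcal{L}_{A'B'}$ separable, so that $1+R(\Theta_{AB\to A'B'}[\mathcal{N}_{AB}])\le (1+r)(1+\delta)$. Taking base-$2$ logarithms and recalling $r = R(\mathcal{N}_{AB})$ then produces $LR(\Theta_{AB\to A'B'}[\mathcal{N}_{AB}])\le \log(1+r)+\log(1+\delta) = LR(\mathcal{N}_{AB})+\log(1+\delta)$, which is the claim. For the smooth version one would run the same argument on a channel $\mathcal{N}_{AB}'\approx_\varepsilon\mathcal{N}_{AB}$ optimal for $LR^\varepsilon$, noting that $\Theta_{AB\to A'B'}[\mathcal{N}_{AB}']\approx_\varepsilon\Theta_{AB\to A'B'}[\mathcal{N}_{AB}]$ by non-increase of the diamond norm under superchannels.

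The one point I would be most careful to justify is the order-preservation step, namely that the superchannel maps the CP map $(1+r)\mathcal{M}_{AB}-\mathcal{N}_{AB}$ to a CP map. This is the main (and only) subtlety: because $(1+r)\mathcal{M}_{AB}-\mathcal{N}_{AB}$ is not trace preserving, one cannot invoke the mere defining property that a superchannel sends channels to channels, but must appeal to the stronger fact that it is completely-positive preserving on all of $\mathcal{L}(AB)$, which follows from the dilation picture. Everything else is a routine composition of two robustness bounds followed by a logarithm.
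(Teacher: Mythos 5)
Your proof is correct, and its skeleton is the same as the paper's: decompose $\mathcal{N}_{AB}$ via its generalized robustness with a separable channel $\mathcal{M}_{AB}$, push the decomposition through $\Theta_{AB\to A'B'}$, invoke the $\delta$-SEPPSC property to bound $R(\Theta_{AB\to A'B'}[\mathcal{M}_{AB}])\le\delta$, chain the two bounds, and take logarithms. The one place you diverge is the form of the robustness you manipulate: you use the operator-inequality (CP-order) characterization $\mathcal{N}_{AB}\le(1+r)\mathcal{M}_{AB}$ from Definition~\ref{delta_seppsc_def}, which forces you to prove that a superchannel preserves complete positivity of non-trace-preserving maps, a fact you correctly extract from the dilation $\Theta[\mathcal{E}] = \mathcal{F}^{\text{post}}\circ(\mathcal{E}\otimes\mathsf{id}_E)\circ\mathcal{F}^{\text{pre}}$. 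The paper instead works with the exact convex-mixture form, writing $\mathcal{N}_{AB}+r\mathcal{E}_{AB}=(1+r)\mathcal{M}_{AB}$ with $\mathcal{E}_{AB}\in\cptp(AB)$ and likewise $\Theta[\mathcal{M}_{AB}]+r'\mathcal{G}_{A'B'}=(1+r')\mathcal{M}'_{A'B'}$ with $r'\le\delta$; combining these two identities only ever requires applying $\Theta$ to genuine quantum channels together with linearity, so the defining property of a superchannel suffices and no appeal to the dilation theorem is needed. Both routes are sound; the paper's is marginally more self-contained, while yours isolates (and properly justifies) the single structural fact --- CP-order preservation --- that makes the inequality form usable, which is exactly the subtlety you flagged.
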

\begin{proof}
	Let $ r \equiv R(\mathcal{N}_{AB}) $ such that
	\begin{equation}\label{key}
	\mathcal{N}_{AB} + r \mathcal{E}_{AB} = (1+r) \mathcal{M}_{AB},
	\end{equation}
	where $ \mathcal{M}_{AB}\in \sepc(A:B) $.  It follows that 
	\begin{equation}\label{key}
	\Theta_{AB\to A'B'}[\mathcal{N}_{AB}] + r \Theta_{AB\to A'B'}[\mathcal{E}_{AB}] = (1+r) \Theta_{AB\to A'B'}[\mathcal{M}_{AB}].
	\end{equation}
	Also, we have that
	\begin{equation}\label{key}
	\Theta_{AB\to A'B'}[\mathcal{M}_{AB}] + r' \mathcal{G}_{A'B'} = (1+r') \mathcal{M}_{A'B'}',
	\end{equation}
	where $ r'\equiv R(\Theta_{AB\to A'B'}[\mathcal{M}_{AB}])\le \delta $, and $ \mathcal{M}_{A'B'}'\in \sepc(A':B') $. From these two equations, it follows that
	\begin{equation}\label{key}
	\Theta_{AB\to A'B'}[\mathcal{N}_{AB}] + r \Theta_{AB\to A'B'}[\mathcal{E}_{AB}] + (1+r) r' \mathcal{G}_{A'B'} = (1+r) (1+r') \mathcal{M}_{A'B'}',
	\end{equation}
	which implies that $ 1+R(\Theta_{AB\to A'B'}[\mathcal{N}_{AB}]) \le (1+r)(1+r') $.
\end{proof}
\begin{lemma}
	For $ \Theta_{AB\to A'B'} \in \delta\mhyphen\seppsc(A:B\to A':B') $, we have that
	\begin{equation}\label{key}
		LR^{\epsilon} (\Theta_{AB\to A'B'}[\mathcal{N}_{AB}]) \le LR^{\epsilon} (\mathcal{N}_{AB}) + \log (1+\delta).
	\end{equation}
\end{lemma}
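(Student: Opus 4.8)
The plan is to bootstrap the smooth inequality from the non-smooth bound proved in the preceding Proposition, by transporting the optimal smoothing channel through $\Theta_{AB\to A'B'}$ and controlling the resulting diamond-norm error. Concretely, I would first pick an optimizer $\mathcal{N}_{AB}'$ for the smooth quantity on the right-hand side, i.e.\ a channel with $\mathcal{N}_{AB}' \approx_{\epsilon} \mathcal{N}_{AB}$ and $LR(\mathcal{N}_{AB}') = LR^{\epsilon}(\mathcal{N}_{AB})$. The natural candidate smoothing channel for the image $\Theta_{AB\to A'B'}[\mathcal{N}_{AB}]$ is then $\Theta_{AB\to A'B'}[\mathcal{N}_{AB}']$.

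The crucial step is to check that this candidate is genuinely $\epsilon$-close to $\Theta_{AB\to A'B'}[\mathcal{N}_{AB}]$, i.e.\ that superchannels do not increase diamond-norm distances. Using the decomposition $\Theta_{AB\to A'B'}[\mathcal{E}_{AB}] = \mathcal{F}^{\text{post}} \circ (\mathcal{E}_{AB} \otimes \mathsf{id}_E) \circ \mathcal{F}^{\text{pre}}$ from Fig.~\ref{fig: structure of a superchannel} and linearity of $\Theta_{AB\to A'B'}$, I obtain
\begin{equation*}
\Theta_{AB\to A'B'}[\mathcal{N}_{AB}'] - \Theta_{AB\to A'B'}[\mathcal{N}_{AB}] = \mathcal{F}^{\text{post}} \circ \left( (\mathcal{N}_{AB}' - \mathcal{N}_{AB}) \otimes \mathsf{id}_E \right) \circ \mathcal{F}^{\text{pre}},
\end{equation*}
so that
\begin{equation*}
\dfrac{1}{2}\dnorm{\Theta_{AB\to A'B'}[\mathcal{N}_{AB}'] - \Theta_{AB\to A'B'}[\mathcal{N}_{AB}]} \le \dfrac{1}{2}\dnorm{\mathcal{N}_{AB}' - \mathcal{N}_{AB}} \le \epsilon,
\end{equation*}
where the first inequality combines the monotonicity of the diamond norm under pre- and post-composition with the channels $\mathcal{F}^{\text{pre}}, \mathcal{F}^{\text{post}}$ with its invariance under tensoring with $\mathsf{id}_E$. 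Hence $\Theta_{AB\to A'B'}[\mathcal{N}_{AB}']$ is a feasible point in the definition of $LR^{\epsilon}(\Theta_{AB\to A'B'}[\mathcal{N}_{AB}])$.

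It then remains to chain the estimates. By feasibility, $LR^{\epsilon}(\Theta_{AB\to A'B'}[\mathcal{N}_{AB}]) \le LR(\Theta_{AB\to A'B'}[\mathcal{N}_{AB}'])$, and applying the preceding Proposition (the non-smooth monotonicity bound under a $\delta$-SEPPSC) to the channel $\mathcal{N}_{AB}'$ gives $LR(\Theta_{AB\to A'B'}[\mathcal{N}_{AB}']) \le LR(\mathcal{N}_{AB}') + \log(1+\delta)$. Since $LR(\mathcal{N}_{AB}') = LR^{\epsilon}(\mathcal{N}_{AB})$ by the choice of $\mathcal{N}_{AB}'$, combining the two displays yields the claimed bound.

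The only real obstacle is the diamond-norm contractivity of the superchannel. This is a standard property of completely bounded trace norms, but it must be invoked with care: I rely on the physical realization of $\Theta_{AB\to A'B'}$ as a pre-channel, followed by tensoring with the memory identity, followed by a post-channel, together with the facts that CPTP maps are trace-norm contractive and that the diamond norm is stable under tensoring with identities. Notably, neither $\delta$ nor separability plays any role in this step; the entire $\delta$-dependence is inherited unchanged from the non-smooth Proposition.
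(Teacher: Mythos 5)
Your proof is correct and follows essentially the same route as the paper: choose an optimizer $\mathcal{N}_{AB}'$ for $LR^{\epsilon}(\mathcal{N}_{AB})$, note that $\Theta_{AB\to A'B'}[\mathcal{N}_{AB}']$ is a feasible smoothing of $\Theta_{AB\to A'B'}[\mathcal{N}_{AB}]$, and then apply the non-smooth proposition. In fact you are slightly more careful than the paper, which silently uses the diamond-norm contractivity of superchannels in its first inequality while you justify it explicitly via the pre/post-channel decomposition.
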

\begin{proof}
	Let $ \mathcal{N}_{AB}^{\varepsilon} $ be a quantum channel satisfying that $ LR^{\epsilon} (\mathcal{N}_{AB}) = LR (\mathcal{N}_{AB}^{\varepsilon}) $. We have that
	\begin{align*}
		LR^{\epsilon} (\Theta_{AB\to A'B'}[\mathcal{N}_{AB}]) &\le LR(\Theta_{AB\to A'B'}[\mathcal{N}_{AB}^{\varepsilon}])\\
		&\le LR (\mathcal{N}_{AB}^{\varepsilon}) +\log (1+\delta)\\
		& = LR^{\epsilon} (\mathcal{N}_{AB}) + \log (1+\delta),
	\end{align*}
	concluding the proof.
\end{proof}
\begin{figure}
	\centering
	\tikzset{every picture/.style={line width=0.75pt}} 
	\begin{tikzpicture}[x=0.75pt,y=0.75pt,yscale=-0.7,xscale=0.7]
		
		\draw   (205,91) -- (264,91) -- (264,139) -- (205,139) -- cycle ;
		\draw   (204,157) -- (264,157) -- (264,205) -- (204,205) -- cycle ;
		\draw    (165,101) -- (204,101) ;
		\draw    (166,130) -- (204,130) ;
		\draw    (166,167) -- (204,168) ;
		\draw    (166,196) -- (203,197) ;
		\draw    (263,100) -- (298,100) ;
		\draw    (264,132) -- (298,132) ;
		\draw    (263,166) -- (297,166) ;
		\draw    (264,195) -- (297,195) ;
		\draw    (166,59) -- (166,205) ;
		\draw    (166,59) -- (298,59) ;
		\draw    (298,59) -- (298,205) ;
		\draw    (132,205) -- (166,205) ;
		\draw    (132,20) -- (132,205) ;
		\draw    (331,20) -- (331,205) ;
		\draw    (331,205) -- (298,205) ;
		\draw    (331,20) -- (132,20) ;
		\draw    (99,101) -- (131,101) ;
		\draw    (99,130) -- (131,130) ;
		\draw    (100,167) -- (132,167) ;
		\draw    (99,197) -- (131,197) ;
		\draw    (331,100) -- (360,100) ;
		\draw    (330,129) -- (359,129) ;
		\draw    (331,166) -- (360,166) ;
		\draw    (331,195) -- (360,195) ;
		\draw   (505,88) -- (564,88) -- (564,136) -- (505,136) -- cycle ;
		\draw   (504,154) -- (564,154) -- (564,202) -- (504,202) -- cycle ;
		\draw    (470,98) -- (505,98) ;
		\draw    (470,127) -- (505,127) ;
		\draw    (470,164) -- (505,165) ;
		\draw    (470,193) -- (505,194) ;
		\draw    (563,97) -- (598,97) ;
		\draw    (564,129) -- (598,129) ;
		\draw    (563,163) -- (597,163) ;
		\draw    (564,192) -- (597,192) ;
		
		\draw (207,103) node [anchor=north west][inner sep=0.75pt]   [align=left] {$ \mathcal{F}_{A'B'}^{K} $};
		\draw (211,169) node [anchor=north west][inner sep=0.75pt]   [align=left] {$ \mathcal{F}_{CD}^{L} $};
		\draw (172,105) node [anchor=north west][inner sep=0.75pt]   [align=left] {$ B_{0}' $};
		\draw (171,75) node [anchor=north west][inner sep=0.75pt]   [align=left] {$ A_{0}' $};
		\draw (171,174) node [anchor=north west][inner sep=0.75pt]   [align=left] {$ D_{0} $};
		\draw (172,144) node [anchor=north west][inner sep=0.75pt]   [align=left] {$ C_{0} $};
		\draw (268,105) node [anchor=north west][inner sep=0.75pt]   [align=left] {$ B_{1}' $};
		\draw (268,75) node [anchor=north west][inner sep=0.75pt]   [align=left] {$ A_{1}' $};
		\draw (268,173) node [anchor=north west][inner sep=0.75pt]   [align=left] {$ D_{1} $};
		\draw (268,143) node [anchor=north west][inner sep=0.75pt]   [align=left] {$ C_{1} $};
		\draw (160,30) node [anchor=north west][inner sep=0.75pt]   [align=left] {$\Theta_{A'B'CD\to ABCD}$};
		\draw (365,115) node [anchor=north west][inner sep=0.75pt]   [align=left] {$ B_{1} $};
		\draw (366,85) node [anchor=north west][inner sep=0.75pt]   [align=left] {$ A_{1} $};
		\draw (368,181) node [anchor=north west][inner sep=0.75pt]   [align=left] {$ D_{1} $};
		\draw (367,149) node [anchor=north west][inner sep=0.75pt]   [align=left] {$ C_{1} $};
		\draw (72,86) node [anchor=north west][inner sep=0.75pt]   [align=left] {$ A_{0} $};
		\draw (69,154) node [anchor=north west][inner sep=0.75pt]   [align=left] {$ C_{0} $};
		\draw (71,117) node [anchor=north west][inner sep=0.75pt]   [align=left] {$ B_{0} $};
		\draw (70,183) node [anchor=north west][inner sep=0.75pt]   [align=left] {$ D_{0} $};
		\draw (405,115) node [anchor=north west][inner sep=0.75pt]   [align=left] {$ \approx_{\varepsilon} $};
		\draw (513,100) node [anchor=north west][inner sep=0.75pt]   [align=left] {$ \mathcal{N}_{AB} $};
		\draw (511,166) node [anchor=north west][inner sep=0.75pt]   [align=left] {$ \mathcal{F}_{CD}^{L} $};
		\draw (440,115) node [anchor=north west][inner sep=0.75pt]   [align=left] {$ B_{0} $};
		\draw (440,85) node [anchor=north west][inner sep=0.75pt]   [align=left] {$ A_{0} $};
		\draw (440,179) node [anchor=north west][inner sep=0.75pt]   [align=left] {$ D_{0} $};
		\draw (440,151) node [anchor=north west][inner sep=0.75pt]   [align=left] {$ C_{0} $};
		\draw (602,116) node [anchor=north west][inner sep=0.75pt]   [align=left] {$ B_{1} $};
		\draw (601,85) node [anchor=north west][inner sep=0.75pt]   [align=left] {$ A_{1} $};
		\draw (604,179) node [anchor=north west][inner sep=0.75pt]   [align=left] {$ D_{1} $};
		\draw (603,150) node [anchor=north west][inner sep=0.75pt]   [align=left] {$ C_{1} $};
		\end{tikzpicture}
	\caption{One-shot catalytic dynamic entanglement cost of a bipartite quantum channel $ \mathcal{N}_{AB} $ under $ \delta\mhyphen\seppsc $ $ \Theta_{A'B'CD\to ABCD} $.}
\end{figure}
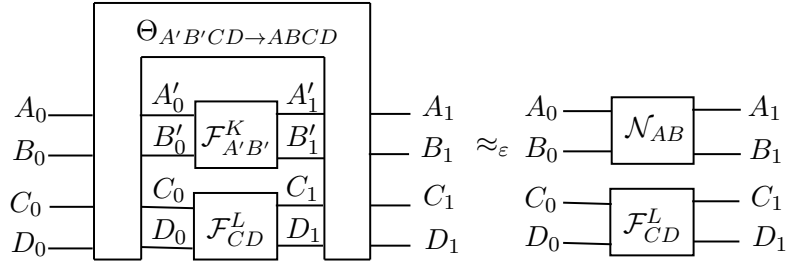
With these tools, we give the formal definition of the one-shot catalytic dynamic entanglement cost of a bipartite channel as follows:
\begin{definition} \label{1-shot_catalytic_cost_def}
	Given $\delta > 0$ and $ \varepsilon\ge 0 $, the one-shot catalytic dynamic entanglement cost of a bipartite quantum channel $ \mathcal{N}_{AB} $ under $ \delta\mhyphen\seppsc $ is defined as
	\begin{align*}
	\widetilde{E}_{C,\delta\mhyphen\seppsc}^{(1), \varepsilon}(\mathcal{N}_{AB}) \coloneqq & \min \left\{ \log K^{2}: \Theta_{A'B'CD\to ABCD}[\mathcal{F}_{A'B'}^{K}\otimes \mathcal{F}_{CD}^{L}] = \mathcal{N}_{AB}'\otimes \mathcal{F}_{CD}^{L},\right.\\
	&\; \Theta_{A'B'CD\to ABCD}\in \delta\mhyphen\seppsc(A'C:B'D\to AC:BD),\\
	&\;\left. \dfrac{1}{2}\dnorm{\mathcal{N}_{AB}' - \mathcal{N}_{AB}}\le \varepsilon, K, L\in \mathbb{N}_{0} \right\}.
	\end{align*}
\end{definition}
In order to bound the one-shot catalytic dynamic entanglement cost of a bipartite channel, the following lemma uses a twisted twirling superchannel that separates the $K$-swap channel from the others.
\begin{lemma}\label{lem.: catalytic channel construction}
	For a bipartite quantum channel $ \mathcal{N}_{AB} $ and $ \varepsilon \ge 0  $, there is a quantum channel $ \mathcal{M}_{ABCD}^{\varepsilon} $ given by
	\begin{equation}\label{key}
	\mathcal{M}_{ABCD}^{\varepsilon} =  p \mathcal{N}_{AB}^{\varepsilon} \otimes \mathcal{F}_{CD}^{L} + (1-p) \mathcal{L}_{ABCD},
	\end{equation}
	where $\mathcal{L}_{ABCD}$ is a quantum channel, $ \frac{1}{2}\pnorm[\diamond]{\mathcal{N}_{AB}^{\varepsilon} - \mathcal{N}_{AB}}\le \abs{A_{0}}\abs{B_{0}}\sqrt{2\varepsilon} $, and $ p\ge 1-2\varepsilon $. It also satisfies that
	\begin{equation}\label{key}
	LR(\mathcal{M}_{ABCD}^{\varepsilon})\le LR^{\varepsilon} (\mathcal{N}_{AB}\otimes \mathcal{F}_{CD}^{L}).
	\end{equation}
\end{lemma}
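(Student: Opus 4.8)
The plan is to produce $\mathcal{M}_{ABCD}^{\varepsilon}$ by feeding a near-optimal smoothing of $\mathcal{N}_{AB}\otimes\mathcal{F}_{CD}^{L}$ into a carefully designed free superchannel, namely the twisted twirl alluded to in the statement. To begin, I would fix a channel $\mathcal{N}_{ABCD}'$ attaining the smooth generalized log-robustness, so that $LR(\mathcal{N}_{ABCD}')=LR^{\varepsilon}(\mathcal{N}_{AB}\otimes\mathcal{F}_{CD}^{L})$ while $\frac{1}{2}\dnorm{\mathcal{N}_{ABCD}'-\mathcal{N}_{AB}\otimes\mathcal{F}_{CD}^{L}}\le\varepsilon$. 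The channel $\mathcal{M}_{ABCD}^{\varepsilon}$ will then be defined as the image of $\mathcal{N}_{ABCD}'$ under the twirl, and the whole argument reduces to designing the twirl so that (a) it does not increase the generalized log-robustness, (b) it leaves $\mathcal{N}_{AB}\otimes\mathcal{F}_{CD}^{L}$ invariant, and (c) its output has the announced product-plus-remainder form.

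The key object is a twisted twirling superchannel $\mathcal{T}$ that acts only on the catalyst systems $CD$ and trivially on $AB$. It averages over Haar-random local unitaries applied to the two maximally entangled pairs appearing in the Choi operator $J^{\mathcal{F}_{CD}^{L}}=\Phi_{C_{0}\widetilde{D}_{1}}^{L}\otimes\Phi_{\widetilde{C}_{1}D_{0}}^{L}$ of the swap channel: concretely, one twirls the pair $(C_{0},\widetilde{D}_{1})$ with $U\otimes\bar{U}$ and the pair $(\widetilde{C}_{1},D_{0})$ with $V\otimes\bar{V}$ using independent $U,V$. The ``twist'' — entangling the $C$-input with the $D$-output and the $D$-input with the $C$-output — is exactly the pairing dictated by the intertwining relation $F_{CD}^{L}(U\otimes U)=(U\otimes U)F_{CD}^{L}$ of the swap gate, and it makes $\mathcal{F}_{CD}^{L}$ a fixed point of $\mathcal{T}$. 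Since the unitary acting on $C$ sits on the $AC$ side and the one acting on $D$ on the $BD$ side of the $AC:BD$ cut, every term in the average is local across that cut, so $\mathcal{T}$ maps separable channels to separable channels and is a genuine SEPPSC; the monotonicity proposition with $\delta=0$ then gives $LR(\mathcal{M}_{ABCD}^{\varepsilon})=LR(\mathcal{T}[\mathcal{N}_{ABCD}'])\le LR(\mathcal{N}_{ABCD}')=LR^{\varepsilon}(\mathcal{N}_{AB}\otimes\mathcal{F}_{CD}^{L})$, which is the robustness inequality claimed.

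Next I would read off the decomposition. By Schur-type averaging, the twirl projects the $CD$-block of any Choi operator onto the commutant of the twirling group, whose only component proportional to a genuine swap Choi is $P_{CD}=\Phi_{C_{0}\widetilde{D}_{1}}^{L}\otimes\Phi_{\widetilde{C}_{1}D_{0}}^{L}$; collecting the remaining (mutually orthogonal) components into a single channel yields $\mathcal{M}_{ABCD}^{\varepsilon}=p\,\mathcal{N}_{AB}^{\varepsilon}\otimes\mathcal{F}_{CD}^{L}+(1-p)\mathcal{L}_{ABCD}$, where $\mathcal{N}_{AB}^{\varepsilon}$ is the normalized $AB$-channel carried by the good component, $p$ its weight, and $\mathcal{L}_{ABCD}$ the normalized remainder, whose $CD$-Choi is orthogonal to $P_{CD}$. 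The weight is the overlap $p=\tr(P_{CD}\,\cdot\,)$ of the twirled channel's Choi with the swap projector; since $\mathcal{N}_{ABCD}'$ is $\varepsilon$-close to $\mathcal{N}_{AB}\otimes\mathcal{F}_{CD}^{L}$, whose $CD$-Choi is exactly $P_{CD}$, the Fuchs--van de Graaf inequalities used earlier convert diamond closeness into $p\ge(1-\varepsilon)^{2}\ge1-2\varepsilon$.

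For the marginal bound I would use that $\mathcal{T}$ fixes $\mathcal{N}_{AB}\otimes\mathcal{F}_{CD}^{L}$, so by contractivity of superchannels $\frac{1}{2}\dnorm{\mathcal{M}_{ABCD}^{\varepsilon}-\mathcal{N}_{AB}\otimes\mathcal{F}_{CD}^{L}}\le\varepsilon$; evaluating on the Choi-defining input and invoking Fuchs--van de Graaf gives fidelity at least $1-2\varepsilon$ between the two Choi operators. Because $\mathcal{N}_{AB}\otimes\mathcal{F}_{CD}^{L}$ lives entirely in the good ($P_{CD}$) block while $\mathcal{L}_{ABCD}$ is supported on the orthogonal complement, this fidelity reduces via the block structure to $p\,F(J^{\mathcal{N}_{AB}^{\varepsilon}},J^{\mathcal{N}_{AB}})\ge1-2\varepsilon$, whence $F(J^{\mathcal{N}_{AB}^{\varepsilon}},J^{\mathcal{N}_{AB}})\ge1-2\varepsilon$; one more application of Fuchs--van de Graaf followed by the diamond-to-Choi conversion — which costs the input dimension $\abs{A_{0}}\abs{B_{0}}$ — produces $\frac{1}{2}\dnorm{\mathcal{N}_{AB}^{\varepsilon}-\mathcal{N}_{AB}}\le\abs{A_{0}}\abs{B_{0}}\sqrt{2\varepsilon}$. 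I expect the main obstacle to lie in the construction and analysis of the twisted twirl itself: verifying that the crossed pairing really fixes $\mathcal{F}_{CD}^{L}$, that the good branch emerges as a clean product $\mathcal{N}_{AB}^{\varepsilon}\otimes\mathcal{F}_{CD}^{L}$ with all other components orthogonal to $P_{CD}$ on $CD$, and tracking the dimensional factor through the final diamond-norm estimate.
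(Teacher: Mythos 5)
Your proposal follows essentially the same route as the paper's proof: you pick an optimizer of $LR^{\varepsilon}(\mathcal{N}_{AB}\otimes\mathcal{F}_{CD}^{L})$, apply the twisted twirling superchannel on the catalyst systems $CD$ (which is LOCC across the $AC\!:\!BD$ cut, fixes $\mathcal{F}_{CD}^{L}$, and hence cannot increase the generalized log-robustness), read off the isotropic-type block decomposition of the twirled Choi operator to get $p\,\mathcal{N}_{AB}^{\varepsilon}\otimes\mathcal{F}_{CD}^{L}+(1-p)\mathcal{L}_{ABCD}$, and then convert diamond-norm closeness into the bounds on $p$ and on $\frac{1}{2}\dnorm{\mathcal{N}_{AB}^{\varepsilon}-\mathcal{N}_{AB}}$ via Fuchs--van de Graaf, the orthogonality of the blocks, and the Choi-to-diamond conversion with its $\abs{A_{0}}\abs{B_{0}}$ factor. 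This is the paper's argument in both structure and detail (the paper phrases the robustness step as $\Omega_{CD}[\widetilde{\mathcal{M}}^{\varepsilon}]\le 2^{l}\,\Omega_{CD}[\Sigma]$ rather than citing the monotonicity lemma, but that is the same reasoning).
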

\begin{proof}
	Let $ \widetilde{\mathcal{M}}_{ABCD}^{\varepsilon} $ be a quantum channel satisfying
	\begin{equation}\label{key}
	LR^{\varepsilon}(\mathcal{N}_{AB}\otimes \mathcal{F}_{CD}^{L}) = LR(\widetilde{\mathcal{M}}_{ABCD}^{\varepsilon})\equiv l,
	\end{equation}
	which implies the existence of a separable channel $ \Sigma_{ABCD}\in \sepc(AC:BD) $ such that
	\begin{equation}\label{key}
	\widetilde{\mathcal{M}}_{ABCD}^{\varepsilon}\le 2^{l} \Sigma_{ABCD}.
	\end{equation}
	Since $ \widetilde{\mathcal{M}}_{ABCD}^{\varepsilon} $ is $ \varepsilon $-close to $ \mathcal{N}_{AB}\otimes \mathcal{F}_{CD}^{L} $ by definition, we expect to have $ \mathcal{M}_{ABCD}^{\varepsilon} $ by properly pinching it. We try the following twisted twirling superchannel, which can be performed via LOCC:
	\begin{align*}
	\Omega_{AB}[\mathcal{E}_{AB}] \coloneqq \iint \mathcal{U}_{A_{1}}\otimes \mathcal{V}_{B_{1}}\circ \mathcal{E}_{AB}\circ \mathcal{V}_{A_{0}}^{\dag}\otimes \mathcal{U}_{B_{0}}^{\dag}.
	\end{align*}
	For a quantum channel $ \mathcal{E}_{AB} $, the twisted twirling superchannel turns its (normalized) Choi matrix into a structured form:
	\begin{align*}
	J_{A_{0}B_{0}\widetilde{A}_{1}\widetilde{B}_{1}}^{\Omega_{AB}[\mathcal{E}_{AB}]}&=\iint \overline{\mathcal{V}}_{A_{0}}\otimes \overline{ \mathcal{U}}_{B_{0}}\otimes \mathcal{U}_{\widetilde{A}_{1}}\otimes \mathcal{V}_{\widetilde{B}_{1}} \left( J_{A_{0}B_{0}\widetilde{A}_{1}\widetilde{B}_{1}}^{\mathcal{E}_{AB}} \right)\\
	&= p_{0}\, \Phi_{A_{0}\widetilde{B}_{1}}^{K}\otimes \Phi_{\widetilde{A}_{1}B_{0}}^{K} + p_{1}\, \Phi_{A_{0}\widetilde{B}_{1}}^{K}\otimes \dfrac{I-\Phi_{\widetilde{A}_{1}B_{0}}^{K}}{K^{2}-1}\\
	&\quad + p_{2}\,\dfrac{I-\Phi_{A_{0}\widetilde{B}_{1}}^{K}}{K^{2}-1}\otimes \Phi_{\widetilde{A}_{1}B_{0}}^{K} + p_{3}\,\dfrac{I-\Phi_{A_{0}\widetilde{B}_{1}}^{K}}{K^{2}-1}\otimes\dfrac{I-\Phi_{\widetilde{A}_{1}B_{0}}^{K}}{K^{2}-1}\\
	&= p_{0} J_{A_{0}B_{0}\widetilde{A}_{1}\widetilde{B}_{1}}^{\mathcal{F}_{AB}^{K}} + (1-p_{0}) J_{A_{0}B_{0}\widetilde{A}_{1}\widetilde{B}_{1}}^{\mathcal{Q}_{AB}}.
	\end{align*}
	Note that $ \tr \left( J_{A_{0}B_{0}\widetilde{A}_{1}\widetilde{B}_{1}}^{\mathcal{F}_{AB}^{K}}J_{A_{0}B_{0}\widetilde{A}_{1}\widetilde{B}_{1}}^{\mathcal{Q}_{AB}} \right) = 0 $.
	Applying the twisted twirling superchannel $ \Omega_{CD} $ on $ \widetilde{\mathcal{M}}_{ABCD}^{\varepsilon} $, we devise $ \mathcal{M}_{ABCD}^{\varepsilon} $ by  as follows:
	\begin{align*}
	\mathcal{M}_{ABCD}^{\varepsilon} &= \Omega_{CD}\left [\widetilde{\mathcal{M}}_{ABCD}^{\varepsilon}\right ] \\
	&= p \mathcal{N}_{AB}^{\varepsilon} \otimes \mathcal{F}_{CD}^{L} + (1-p) \mathcal{L}_{ABCD}.
	\end{align*}
	We show that $ \mathcal{M}_{ABCD}^{\varepsilon} $ satisfies the insisted properties. Firstly, by construction,
	\begin{equation}\label{key}
	\mathcal{M}_{ABCD}^{\varepsilon}=\Omega_{CD}\left [\widetilde{\mathcal{M}}_{ABCD}^{\varepsilon}\right ]\le 2^{l} \Omega_{CD}\left[ \Sigma_{ABCD} \right].
	\end{equation}
	Since $ \Omega_{CD} $ can be done by LOCC, we have $ \Omega_{CD}\left[ \Sigma_{ABCD} \right] \in \sepc(AC:BD)$. Therefore, we have
	\begin{equation}\label{key}
	LR(\mathcal{M}_{ABCD}^{\varepsilon})\le LR^{\varepsilon} (\mathcal{N}_{AB}\otimes \mathcal{F}_{CD}^{L}).
	\end{equation}
	
	From the contractivity of the diamond distance under a superchannel, it follows that
	\begin{align*}
	\varepsilon &\ge \dfrac{1}{2}\pnorm[\diamond]{\widetilde{\mathcal{M}}_{ABCD}^{\varepsilon} - \mathcal{N}_{AB}\otimes \mathcal{F}_{CD}^{L}}\\
	&\ge  \dfrac{1}{2}\pnorm[\diamond]{\Omega_{CD}\left [\widetilde{\mathcal{M}}_{ABCD}^{\varepsilon}\right ] - \mathcal{N}_{AB}\otimes \mathcal{F}_{CD}^{L}},
	\end{align*}
	where we used $ \Omega_{CD}[\mathcal{F}_{CD}^{L}] = \mathcal{F}_{CD}^{L}$. Using Theorem \ref{thm: diamond dist. to fidelity}, we get to
	\begin{align*}
	1-2\varepsilon &\le F\left(J^{\Omega_{CD}\left [\widetilde{\mathcal{M}}_{ABCD}^{\varepsilon}\right ] }, J^{\mathcal{N}_{AB}\otimes \mathcal{F}_{CD}^{L}}  \right)\\
	&= F\left(p J^{\mathcal{N}_{AB}^{\varepsilon} \otimes \mathcal{F}_{CD}^{L} }+ (1-p) J^{\mathcal{L}_{ABCD}}, J^{\mathcal{N}_{AB}\otimes \mathcal{F}_{CD}^{L}}  \right)\\
	&\le p F\left(J^{\mathcal{N}_{AB}^{\varepsilon} \otimes \mathcal{F}_{CD}^{L} }, J^{\mathcal{N}_{AB}\otimes \mathcal{F}_{CD}^{L}}  \right)+ (1-p) F\left(J^{\mathcal{L}_{ABCD}}, J^{\mathcal{N}_{AB}\otimes \mathcal{F}_{CD}^{L}}  \right)\\
	&\le pF\left(J^{\mathcal{N}_{AB}^{\varepsilon} \otimes \mathcal{F}_{CD}^{L} }, J^{\mathcal{N}_{AB}\otimes \mathcal{F}_{CD}^{L}}  \right)\\
	&=pF\left(J^{\mathcal{N}_{AB}^{\varepsilon} }, J^{\mathcal{N}_{AB}}  \right),
	\end{align*}
	where the second inequality follows from the joint concavity of the fidelity, and the third from the orthogonality of the Choi matrices. From the above, we read that $ p\ge 1-2\varepsilon $ and $ F\left(J^{\mathcal{N}_{AB}^{\varepsilon} }, J^{\mathcal{N}_{AB}}  \right) \ge 1-2\varepsilon $ due to $ p\le 1 $ and $ F\left(J^{\mathcal{N}_{AB}^{\varepsilon} }, J^{\mathcal{N}_{AB}}  \right)\le 1 $. Furthermore, $ F\left(J^{\mathcal{N}_{AB}^{\varepsilon} }, J^{\mathcal{N}_{AB}}  \right) \ge 1-2\varepsilon $ together with Theorem \ref{thm: diamond distance - Choi trace distance} and the Fuchs-van der Graaf inequality implies the following:
	\begin{align*}
	\dfrac{1}{2}\pnorm[\diamond]{\mathcal{N}_{AB}^{\varepsilon} - \mathcal{N}_{AB}} &\le \abs{A_{0}}\abs{B_{0}}\, \dfrac{1}{2}\pnorm[1]{J^{\mathcal{N}_{AB}^{\varepsilon}} - J^{\mathcal{N}_{AB}}}\\
	&\le \abs{A_{0}}\abs{B_{0}}\sqrt{1 - F\left( J^{\mathcal{N}_{AB}^{\varepsilon}}, J^{\mathcal{N}_{AB}} \right)}\\
	&\le \abs{A_{0}}\abs{B_{0}} \sqrt{2 \varepsilon}.
	\end{align*}
	This completes the proof.
\end{proof}

We bound the one-shot catalytic dynamic entanglement cost of a bipartite channel as follows:
\begin{theorem}
	Given $ \delta > 0 $, $ \varepsilon \ge 0 $, there exists $ L\in \mathbb{N} $ such that $ L^{2}\ge 1+ \frac{1}{\delta}$, and the one-shot catalytic dynamic entanglement cost for any bipartite quantum channel $ \mathcal{N}_{AB} $ is bounded as
	\begin{align*}
	LR^{\varepsilon}(\mathcal{N}_{AB}&\otimes \mathcal{F}_{CD}^{L})  - \log L^{2} -\log (1+\delta)  \\
	&\le \widetilde{E}_{C,\delta\mhyphen\seppsc}^{(1), \varepsilon}(\mathcal{N}_{AB}) \\
	& \qquad\qquad \le LR^{\varepsilon'}(\mathcal{N}_{AB}\otimes \mathcal{F}_{CD}^{L}) - \log L^{2}-\log(1-2\varepsilon') + 2,
	\end{align*}
	where $ \varepsilon' = \varepsilon^{2}/\left( 2\abs{A_{0}}^{2}\abs{B_{0}}^{2} \right) $.
\end{theorem}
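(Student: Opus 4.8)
The plan is to establish the two bounds independently: the converse (lower) bound rests on monotonicity of the generalized log-robustness under $\delta\mhyphen\seppsc$, whereas the achievability (upper) bound is obtained by turning Lemma~\ref{lem.: catalytic channel construction} into an explicit near-optimal free superchannel, closely paralleling the dilution construction of Theorem~\ref{thm: one-shot dyn. ent. cost}.

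For the lower bound I fix an optimal protocol, that is, a superchannel $\Theta_{A'B'CD\to ABCD}\in\delta\mhyphen\seppsc(A'C:B'D\to AC:BD)$ and integers $K,L$ with $\Theta[\mathcal{F}_{A'B'}^{K}\otimes\mathcal{F}_{CD}^{L}]=\mathcal{N}_{AB}'\otimes\mathcal{F}_{CD}^{L}$, $\tfrac12\dnorm{\mathcal{N}_{AB}'-\mathcal{N}_{AB}}\le\varepsilon$, and $\log K^{2}=\widetilde{E}_{C,\delta\mhyphen\seppsc}^{(1),\varepsilon}(\mathcal{N}_{AB})$. The key structural observation is that, on the bipartition $A'C:B'D$, the two golden units merge into a single swap channel $\mathcal{F}_{A'B'}^{K}\otimes\mathcal{F}_{CD}^{L}=\mathcal{F}^{KL}$, so that $LR(\mathcal{F}_{A'B'}^{K}\otimes\mathcal{F}_{CD}^{L})=\log(KL)^{2}=\log K^{2}+\log L^{2}$. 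Feeding the defining identity into the monotonicity of $LR$ under $\delta\mhyphen\seppsc$ proved above gives $LR(\mathcal{N}'\otimes\mathcal{F}^{L})\le\log K^{2}+\log L^{2}+\log(1+\delta)$. Since $\mathcal{F}_{CD}^{L}$ is unitary, $\dnorm{(\mathcal{N}'-\mathcal{N})\otimes\mathcal{F}^{L}}=\dnorm{\mathcal{N}'-\mathcal{N}}$, hence $\mathcal{N}'\otimes\mathcal{F}^{L}\approx_{\varepsilon}\mathcal{N}\otimes\mathcal{F}^{L}$ and therefore $LR^{\varepsilon}(\mathcal{N}\otimes\mathcal{F}^{L})\le LR(\mathcal{N}'\otimes\mathcal{F}^{L})$ directly from the definition of the smoothing. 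Rearranging the resulting chain produces the stated lower bound.

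For the upper bound I first apply Lemma~\ref{lem.: catalytic channel construction} with the smoothing parameter $\varepsilon'=\varepsilon^{2}/(2\abs{A_{0}}^{2}\abs{B_{0}}^{2})$, obtaining a channel $\mathcal{M}_{ABCD}^{\varepsilon'}=p\,\mathcal{N}_{AB}^{\varepsilon'}\otimes\mathcal{F}_{CD}^{L}+(1-p)\mathcal{L}_{ABCD}$ with $p\ge1-2\varepsilon'$, $LR(\mathcal{M}^{\varepsilon'})\le LR^{\varepsilon'}(\mathcal{N}\otimes\mathcal{F}^{L})$, and $\tfrac12\dnorm{\mathcal{N}^{\varepsilon'}-\mathcal{N}}\le\abs{A_{0}}\abs{B_{0}}\sqrt{2\varepsilon'}=\varepsilon$; the last equality is precisely why $\varepsilon'$ is defined this way, as it certifies that $\mathcal{N}^{\varepsilon'}$ is an admissible target $\mathcal{N}'$. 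Writing $1+r=2^{LR(\mathcal{M}^{\varepsilon'})}$ and picking a separable $\Sigma_{0}\in\sepc(AC:BD)$ with $\mathcal{M}^{\varepsilon'}\le(1+r)\Sigma_{0}$, whence $\mathcal{N}^{\varepsilon'}\otimes\mathcal{F}^{L}\le\frac{1+r}{p}\Sigma_{0}$, I would build $\Theta$ in the style of Theorem~\ref{thm: one-shot dyn. ent. cost}: it reads out the overlap of its input with the combined golden unit $\mathcal{F}^{KL}$, emits $\mathcal{N}^{\varepsilon'}\otimes\mathcal{F}^{L}$ on that golden unit (returning the catalyst intact) and routes the orthogonal, separable inputs onto $\Sigma_{0}$ together with the isotropic complement of the catalyst. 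Taking $L$ with $L^{2}\ge1+\tfrac1\delta$ and $K=\big\lceil\tfrac1L\sqrt{(1+r)/p}\,\big\rceil$ forces $(KL)^{2}\ge(1+r)/p$, and the ceiling estimate $\lceil x\rceil\le2x$ for $x\ge1$ then gives $\log K^{2}\le\log\tfrac{1+r}{p}-\log L^{2}+2\le LR^{\varepsilon'}(\mathcal{N}\otimes\mathcal{F}^{L})-\log L^{2}-\log(1-2\varepsilon')+2$, matching the right-hand side.

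The step I expect to be genuinely delicate --- and the heart of the argument --- is certifying that this $\Theta$ really is a $\delta\mhyphen\seppsc$, i.e.\ that $R(\Theta[\mathcal{M}_{A'B'CD}])\le\delta$ for every separable input. Because the exact-return constraint forces the golden-unit image $\mathcal{N}^{\varepsilon'}\otimes\mathcal{F}^{L}$ to be strictly entangling, a crude estimate only controls the generated robustness by the input--golden-unit overlap times $R(\mathcal{N}^{\varepsilon'}\otimes\mathcal{F}^{L})$, which is not below $\delta$ on its own; the catalyst must do the work. The plan is to use that a separable input has fidelity at most $1/(KL)^{2}$ with $\mathcal{F}^{KL}$ (Proposition~\ref{thm: max fidelity btw. ent. sep. states}) and that the catalyst carries a maximally entangled isotropic component of effective dimension $L^{2}$, so that the induced isotropic channel on $CD$ stays below the separability threshold of Theorem~\ref{thm: isotropic state separabilitiy} up to a slack governed by $\tfrac{1}{L^{2}-1}\le\delta$, which is exactly the content of the hypothesis $L^{2}\ge1+\tfrac1\delta$. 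Concretely, I would rewrite $\Theta[\mathcal{M}]$ as a convex mixture of $\Sigma_{0}$, a separable isotropic catalyst channel, and a residual entangling term whose weight is at most $\delta$, thereby certifying $R(\Theta[\mathcal{M}])\le\delta$ and closing the construction.
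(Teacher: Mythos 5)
Your lower bound is correct and is essentially the paper's own argument: monotonicity of $LR$ under $\delta\mhyphen\seppsc$ applied to the defining identity, $LR(\mathcal{F}_{A'B'}^{K}\otimes\mathcal{F}_{CD}^{L})=\log K^{2}+\log L^{2}$, and the observation that $\mathcal{N}_{AB}'\otimes\mathcal{F}_{CD}^{L}$ is admissible in the smoothing (your unitarity remark makes explicit a step the paper leaves implicit). The skeleton of your upper bound also matches the paper: the use of Lemma~\ref{lem.: catalytic channel construction} at smoothing level $\varepsilon'$, the read-out/emit form of $\Theta$, the choice $K=\lceil\sqrt{(1+r)/p}/L\rceil$, and the ceiling arithmetic. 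The gap sits exactly where you flagged it: the certification that $\Theta\in\delta\mhyphen\seppsc$, and your proposed resolution does not work.

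You route the component orthogonal to the golden unit onto separable channels ($\Sigma_{0}$ and an isotropic catalyst channel), and then claim that writing $\Theta[\mathcal{M}]$ as separable pieces plus ``a residual entangling term whose weight is at most $\delta$'' certifies $R(\Theta[\mathcal{M}])\le\delta$. It does not: convexity of the generalized robustness bounds $R(\Theta[\mathcal{M}])$ by $(\text{weight})\times R(\text{residual})$, and here the residual is $\mathcal{N}^{\varepsilon}\otimes\mathcal{F}_{CD}^{L}$, whose robustness is unbounded (it is at least $L^{2}-1\ge 1/\delta$, and grows with the entanglement of $\mathcal{N}$). Quantitatively, with overlap $q'\le 1/(KL)^{2}\le 1/r$ and $r\ge 1+R(\mathcal{N}^{\varepsilon}\otimes\mathcal{F}_{CD}^{L})$, all you get is $R(\Theta[\mathcal{M}])\le q'\,R(\mathcal{N}^{\varepsilon}\otimes\mathcal{F}_{CD}^{L})\le R/(1+R)<1$, which tends to $1$, not to $\delta$, for strongly entangling targets --- this is precisely the ``crude estimate'' you yourself rejected, restated. (Theorem~\ref{thm: isotropic state separabilitiy} is also off-target here; it is the mechanism of the distillation proof, not of this one.) If you insist on a separable residual, you can repair the bound only by inflating $K^{2}$ to roughly $r$ instead of $r/L^{2}$, which destroys the $-\log L^{2}$ term and hence fails to prove the stated inequality.

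The missing idea --- how the paper makes ``the catalyst do the work'' --- is to route the orthogonal component onto the generalized-robustness \emph{witness} of the target, not onto a separable channel. From $\mathcal{N}^{\varepsilon}\otimes\mathcal{F}_{CD}^{L}\le r\,\Sigma'$ with $\Sigma'\in\sepc(AC\!:\!BD)$, define the channel $\mathcal{R}_{ABCD}$ by $\mathcal{N}^{\varepsilon}\otimes\mathcal{F}_{CD}^{L}+(r-1)\mathcal{R}_{ABCD}=r\,\Sigma'$. Then for any separable input the output $q'(\mathcal{N}^{\varepsilon}\otimes\mathcal{F}_{CD}^{L})+(1-q')\mathcal{R}_{ABCD}$ rebalances as $q'r\,\Sigma'+(1-q'r)\mathcal{R}_{ABCD}$, which is legitimate because $q'r\le r/(KL)^{2}\le 1$; convexity now gives $R(\Theta[\mathcal{M}])\le R(\mathcal{R}_{ABCD})$, the large-robustness term having been absorbed into a separable channel. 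Finally, the same identity yields $\mathcal{R}_{ABCD}\le\frac{r}{r-1}\Sigma'$, hence
\begin{equation*}
R(\mathcal{R}_{ABCD})\ \le\ \frac{1}{r-1}\ \le\ \frac{1}{R(\mathcal{N}^{\varepsilon}\otimes\mathcal{F}_{CD}^{L})}\ \le\ \frac{1}{R(\mathcal{F}_{CD}^{L})}\ =\ \frac{1}{L^{2}-1}\ \le\ \delta ,
\end{equation*}
and this is the only place where the catalyst enters: tensoring the target with $\mathcal{F}_{CD}^{L}$ forces its robustness above $L^{2}-1$, which caps the witness's robustness below $\delta$. Without this step your $\Theta$ is not a $\delta\mhyphen\seppsc$ for the stated $K$, and the upper bound does not follow.
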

\begin{proof}
We break down the argument into separate proofs of the two bounds.
\begin{enumerate}[(i)]
    \item For the upper bound, let $ \mathcal{M}_{ABCD}^{\varepsilon} $ be a quantum channel as in Lemma \ref{lem.: catalytic channel construction} satisfying that
	\begin{align*}
	LR (\mathcal{M}_{ABCD}^{\varepsilon'}) &\le LR^{\varepsilon'} (\mathcal{N}_{AB}\otimes \mathcal{F}_{CD}^{L}),\\
	\mathcal{M}_{ABCD}^{\varepsilon'}&= p \mathcal{N}_{AB}^{\varepsilon} \otimes \mathcal{F}_{CD}^{L} + (1-p) \mathcal{L}_{ABCD},\\
	\dfrac{1}{2}\pnorm[\diamond]{\mathcal{N}_{AB}^{\varepsilon} - \mathcal{N}_{AB}}&\le \varepsilon,\\
	p&\ge 1-2\varepsilon',
	\end{align*}
	where $ \varepsilon' = \varepsilon^{2}/\left( 2\abs{A_{0}}^{2}\abs{B_{0}}^{2} \right) $.
	From the first and the second equation above, it follows that
	\begin{equation*}
	\mathcal{M}_{ABCD}^{\varepsilon'}= p \mathcal{N}_{AB}^{\varepsilon} \otimes \mathcal{F}_{CD}^{L} + (1-p) \mathcal{L}_{ABCD}\le 2^{LR^{\varepsilon'} (\mathcal{N}_{AB}\otimes \mathcal{F}_{CD}^{L})}\Sigma_{ABCD},
	\end{equation*}
	where $ \Sigma_{ABCD}\in\sepc(AC:BD) $. Since $ \mathcal{L}_{ABCD}\ge 0 $, we have that
	\begin{align*}
	\mathcal{N}_{AB}^{\varepsilon} \otimes \mathcal{F}_{CD}^{L}&\le 2^{LR^{\varepsilon'} (\mathcal{N}_{AB}\otimes \mathcal{F}_{CD}^{L}) -\log p}\Sigma_{ABCD}\\
	&\le 2^{LR^{\varepsilon'} (\mathcal{N}_{AB}\otimes \mathcal{F}_{CD}^{L}) -\log (1-2\varepsilon')}\Sigma_{ABCD},
	\end{align*}
	which leads to the existence of a quantum channel $ \mathcal{R}_{ABCD} $ such that
	\begin{align}\label{key}
	&\mathcal{N}_{AB}^{\varepsilon} \otimes \mathcal{F}_{CD}^{L} + \left\{ 2^{LR^{\varepsilon'} (\mathcal{N}_{AB}\otimes \mathcal{F}_{CD}^{L}) -\log (1-2\varepsilon')}-1 \right\} \mathcal{R}_{ABCD}\\
	=\;&\mathcal{N}_{AB}^{\varepsilon} \otimes \mathcal{F}_{CD}^{L} + ( r - 1) \mathcal{R}_{ABCD} \label{eq: robustness of N times F}\\
	\propto\;& \sepc(AC:BD),
	\end{align}
	where we denote $ r = 2^{LR^{\varepsilon'} (\mathcal{N}_{AB}\otimes \mathcal{F}_{CD}^{L}) -\log (1-2\varepsilon')}$.
	With the insight gained above, we construct a superchannel $ \Theta_{A'B'CD\to ABCD} \in\delta\mhyphen\seppsc (A'C:B'D\to AC:BD)$ as follows:
	\begin{align*}
	\Theta_{A'B'CD\to ABCD}[\mathcal{E}_{A'B'CD}]&\coloneqq \tr \left( J^{\mathcal{F}_{A'B'}^{K}\otimes \mathcal{F}_{CD}^{L}} J^{\mathcal{E}_{A'B'CD}} \right) \mathcal{N}_{AB}^{\varepsilon}\otimes \mathcal{F}_{CD}^{L}\\
	&\quad + \tr \left\{ \left( I - J^{\mathcal{F}_{A'B'}^{K}\otimes \mathcal{F}_{CD}^{L}}\right) J^{\mathcal{E}_{A'B'CD}}  \right\} \mathcal{R}_{ABCD},
	\end{align*}
	where we use the (normalized) Choi matrix and the identity matrix omitting some of the indices for brevity. It is obvious that $ \Theta_{A'B'CD\to ABCD}[\mathcal{F}_{A'B'}^{K}\otimes \mathcal{F}_{CD}^{L}]=\mathcal{N}_{AB}^{\varepsilon}\otimes \mathcal{F}_{CD}^{L} $ from the construction. We now show that $ \Theta_{A'B'CD\to ABCD}\in \delta\mhyphen\seppsc (A'C:B'D\to AC:BD) $.
	For $ \mathcal{E}_{A'B'CD}\in\sepc(A'C:B'D) $, we have that
	\begin{align*}
	\Theta_{A'B'CD\to ABCD}[\mathcal{E}_{A'B'CD}]&= \tr \left( J^{\mathcal{F}_{A'B'}^{K}\otimes \mathcal{F}_{CD}^{L}} J^{\mathcal{E}_{A'B'CD}} \right) \mathcal{N}_{AB}^{\varepsilon}\otimes \mathcal{F}_{CD}^{L}\\
	&\quad + \tr \left\{ \left( I - J^{\mathcal{F}_{A'B'}^{K}\otimes \mathcal{F}_{CD}^{L}}\right) J^{\mathcal{E}_{A'B'CD}}  \right\} \mathcal{R}_{ABCD}\\
	&= q\,\dfrac{\mathcal{N}_{AB}^{\varepsilon} \otimes \mathcal{F}_{CD}^{L} + ( r - 1) \mathcal{R}_{ABCD} }{r} + (1-q) \mathcal{R}_{ABCD},
	\end{align*}
	where $ q = r\tr \left( J^{\mathcal{F}_{A'B'}^{K}\otimes \mathcal{F}_{CD}^{L}} J^{\mathcal{E}_{A'B'CD}} \right) \le \frac{r}{K^{2}L^{2}}$ due to Proposition~\ref{thm: max fidelity btw. ent. sep. states}. Setting $ K =  \left \lceil \frac{\sqrt{r}}{L} \right \rceil $, it is assured that $ q\le 1 $. Therefore, from the convexity of the robustness, it follows that
	\begin{align*}
	R\left(  \Theta_{A'B'CD\to ABCD}[\mathcal{E}_{A'B'CD}]\right)&\le q\,R\left( \dfrac{\mathcal{N}_{AB}^{\varepsilon} \otimes \mathcal{F}_{CD}^{L} + ( r - 1) \mathcal{R}_{ABCD} }{r} \right) + (1-q) R\left(\mathcal{R}_{ABCD}  \right)\\
	&\le R\left(\mathcal{R}_{ABCD}  \right).
	\end{align*}
	Furthermore, we have that
	\begin{equation}\label{key}
	R\left(\mathcal{R}_{ABCD}  \right)\le \dfrac{1}{R\left( \mathcal{N}_{AB}^{\varepsilon}\otimes \mathcal{F}_{CD}^{L} \right)}\le \dfrac{1}{R\left( \mathcal{F}_{CD}^{L} \right)} = \dfrac{1}{L^{2}-1},
	\end{equation}
	where the first inequality follows from equation \eqref{eq: robustness of N times F}, and the second inequality follows from the monotonicity of the robustness under \seppsc,\footnote{One can feed any product state $\rho_{A}\otimes \rho_{B}$ into the quantum channel and subsequently trace away some subsystems at the output.} that is, $ R\left( \mathcal{N}_{AB}^{\varepsilon}\otimes \mathcal{F}_{CD}^{L} \right)\ge R\left( \mathcal{F}_{CD}^{L} \right)$.
	Thus, if $ R\left(\mathcal{R}_{ABCD}  \right)\le \frac{1}{L^{2}-1}\le \delta $, or $ L^{2}\ge 1+\frac{1}{\delta} $, then for $ \mathcal{E}_{A'B'CD}\in\sepc(A'C:B'D) $, we have that
	\begin{equation}
	R\left(  \Theta_{A'B'CD\to ABCD}[\mathcal{E}_{A'B'CD}]\right)\le R\left(\mathcal{R}_{ABCD}  \right)\le \delta.
	\end{equation}
	So we have that $ \Theta_{A'B'CD\to ABCD} \in \delta\mhyphen\seppsc (A'C:B'D\to AC:BD)$ by setting $K$ as
	\begin{equation}\label{key}
	K =  \left \lceil \frac{\sqrt{r}}{L} \right \rceil =\left \lceil\dfrac{\sqrt{ 2^{LR^{\varepsilon'} (\mathcal{N}_{AB}\otimes \mathcal{F}_{CD}^{L}) -\log (1-2\varepsilon')}}}{L}\right \rceil,
	\end{equation}
	where $L\in \mathbb{N}$ is chosen to satisfy that $ L^{2}\ge 1+\frac{1}{\delta} $.
	Finally, we conclude that
	\begin{align*}
	\widetilde{E}_{C, \delta\mhyphen\seppsc}^{(1),\varepsilon}(\mathcal{N}_{AB})&\le \log K^{2}\\
	&\le LR^{\varepsilon'}\left( \mathcal{N}_{AB}\otimes \mathcal{F}_{CD}^{L} \right) - \log L^{2} - \log (1-2\varepsilon') + 2,
	\end{align*}
	where $ \varepsilon' = \varepsilon^{2}/\left( 2\abs{A_{0}}^{2}\abs{B_{0}}^{2} \right) $.
	
	\item For the lower bound, let $ \widetilde{E}_{C,\delta\mhyphen\seppsc}^{(1), \varepsilon}(\mathcal{N}_{AB}) = \log K^{2} $ for which a catalyst $ \mathcal{F}_{CD}^{L_{0}} $ is used as follows:
	\begin{equation}\label{key}
	\Theta_{A'B'CD\to ABCD}[\mathcal{F}_{A'B'}^{K}\otimes \mathcal{F}_{CD}^{L_{0}}] = \mathcal{N}_{AB}'\otimes \mathcal{F}_{CD}^{L_{0}}, \quad\mathcal{N}_{AB}'\approx_{\varepsilon} \mathcal{N}_{AB},
	\end{equation}
	where $ \Theta_{A'B'CD\to ABCD}\in \delta\mhyphen\seppsc(A'C:B'D\to AC:BD) $.
	It follows that
	\begin{align*}
	LR^{\varepsilon}\left( \mathcal{N}_{AB}\otimes \mathcal{F}_{CD}^{L_{0}} \right) &\le LR\left( \mathcal{N}_{AB}'\otimes \mathcal{F}_{CD}^{L_{0}} \right)\\
	&= LR\left( \Theta_{A'B'CD\to ABCD}[\mathcal{F}_{A'B'}^{K}\otimes \mathcal{F}_{CD}^{L_{0}}] \right)\\
	&\le LR\left( \mathcal{F}_{A'B'}^{K}\otimes \mathcal{F}_{CD}^{L_{0}} \right) + \log (1+\delta)\\
	& = \log K^{2} + \log L_{0}^{2} + \log (1+\delta).
	\end{align*}
    Moreover, we can choose a universal lower bound $ \widetilde{L}_{0} $ that satisfies the following:
    \begin{equation*}
	    LR^{\varepsilon}(\mathcal{N}_{AB}\otimes \mathcal{F}_{CD}^{\widetilde{L}_{0}})  - \log \widetilde{L}_{0}^{2} -\log (1+\delta) \le \widetilde{E}_{C,\delta\mhyphen\seppsc}^{(1), \varepsilon}(\mathcal{N}_{AB})\quad \forall \mathcal{N}_{AB}\in \cptp(AB).
	\end{equation*}
\end{enumerate}
Finally, regarding (i) and (ii), we can choose $L = \max\left\{\widetilde{L}_{0}, \left\lceil \sqrt{1+\frac{1}{\delta}} \right\rceil\right \}$ which provides both the upper and the lower bound on $ \widetilde{E}_{C,\delta\mhyphen\seppsc}^{(1), \varepsilon}(\mathcal{N}_{AB}) $ for any bipartite quantum channel $ \mathcal{N}_{AB}) $. This completes the proof.
\end{proof}

\section{Conclusion}
We found that entanglement of quantum channels can be naturally understood adopting the superchannel framework. Taking the separable channels as our free resource, we defined the separability-preserving superchannels as the resource non-generating superchannels. The $K$-swap channel $ \mathcal{F}_{AB}^{K} $ is chosen as the dynamic entanglement resource, mimicking the role of the $K$-maximally entangled state in the resource theory of static entanglement. In fact, these two objects are totally interchangeable, because a $K$-swap channel can be transformed into a pair of $K$-maximally entangled states under LOCC, and vice versa two $K$-maximally entangled states can be used to implement a $K$-swap channel with LOCC --- more precisely, by performing two times a teleportation protocol.
Our results provide an operational meaning to the standard and the generalized log-robustness of channels as well as the hypothesis-testing relative entropy of dynamic entanglement that we constructed from the hypothesis-testing relative entropy of channels with minimization over the set of separable channels: The one-shot dynamic entanglement cost can be bounded by the standard log-robustness of channels with respect to the separable channels. The one-shot distillable dynamic entanglement is bounded by the hypothesis-testing relative entropy of dynamic entanglement. When it comes to the catalytic scenario where additional dynamic entanglement resource is supplied and returned back after the free superchannel, we find that the one-shot catalytic dynamic  entanglement cost is bounded by the generalized log-robustness of channels with respect to the set of separable channels. Finally, in the appendices, we investigate the asymptotic scenario, using the liberal dynamic entanglement cost of a bipartite quantum channel, which features the liberal smoothing instead of the diamond norm smoothing. It is shown that the quantity is equal to the liberal regularized relative entropy of channels minimized over the separable channels.

\section*{Acknowledgement}
This research was supported by the National Research Foundation of Korea grant funded by the Ministry of Science and ICT (MSIT) (Grant no.\ NRF-2019R1A2C1006337) and (Grant no.\ NRF-2020M3E4A1079678). S.L.\ acknowledges support from the MSIT, Korea, under the Information Technology Research Center support program (IITP-2020-2018-0-01402) supervised by the Institute for Information \& Communications Technology Planning \& Evaluation. L.L.\ acknowledges support from the Alexander von Humboldt Foundation.

\section*{Note added}
During the completion of this manuscript, we became aware of two independent works on dynamic resource theories: B. Regula and R. Takagi \cite{regula2020OneshotManipulationDynamical} formulated one-shot manipulation of dynamic resources in a general setting, while X. Yuan, et al. \cite{xyuan2020OneshotDynamicalResource} also investigated one-shot distillation and dilution of dynamic resources in a general setting.

\appendix
\section{Appendix}
\subsection{Liberal Dynamic Entanglement Cost of a Bipartite Channel}
There are several alternative ways of smoothing in channel resource theories utilized in the study of the asymptotic equipartition properties \cite{gour2019HowQuantifyDynamical}. For a quantum channel $ \mathcal{N}_{A} $ and a quantum state $ \varphi_{A_{0}R_{0}} $, we denote the $ \varepsilon $-diamond ball and the $ \varepsilon $-liberal ball as
\begin{align*}
	B_{\varepsilon}(\mathcal{N}_{A}) &\coloneqq \left \{ \mathcal{N}_{A}'\in\cptp(A): \dfrac{1}{2}\dnorm{\mathcal{N}_{A}' - \mathcal{N}_{A}}\le \varepsilon \right \},\\
	B_{\varepsilon}^{\varphi}(\mathcal{N}_{A}) &\coloneqq \left \{ \mathcal{N}_{A}'\in\cptp(A): \dfrac{1}{2}\pnorm[1]{\mathcal{N}_{A}'(\varphi_{A_{0}R_{0}}) - \mathcal{N}_{A}(\varphi_{A_{0}R_{0}})}\le \varepsilon \right \}.
\end{align*}
Observe that $ B_{\varepsilon}(\mathcal{N}_{A})\subset \cap_{\varphi_{A_{0}R_{0}}} B_{\varepsilon}^{\varphi}(\mathcal{N}_{A}) $.
For a set of free resource $ \mathcal{F} $, the relevant liberal quantities are defined as follows:
\begin{align*}
	LR_{\mathcal{F}}^{\varepsilon,\varphi}(\mathcal{N}_{A}) &\coloneqq \min_{\mathcal{N}_{A}'\in B_{\varepsilon}^{\varphi}(\mathcal{N}_{A})}LR_{\mathcal{F}}(\mathcal{N}_{A}'),\\
	LR_{\mathcal{F}}^{\varepsilon}(\mathcal{N}_{A}) &\coloneqq \max_{\varphi_{A_{0}R_{0}}} \min_{\mathcal{N}_{A}'\in B_{\varepsilon}^{\varphi}(\mathcal{N}_{A})}LR_{\mathcal{F}}(\mathcal{N}_{A}'),\\
	LR_{\mathcal{F}}^{\varepsilon, n}(\mathcal{N}_{A}) &\coloneqq \dfrac{1}{n}\max_{\varphi_{A_{0}R_{0}}}LR_{\mathcal{F}}^{\varepsilon, \varphi^{\otimes n}}(\mathcal{N}_{A}^{\otimes n}),\\
	LR_{\mathcal{F}}^{(\infty)}(\mathcal{N}_{A}) &\coloneqq \lim_{\varepsilon \to 0^{+}} \liminf_{n\to \infty} LR_{\mathcal{F}}^{\varepsilon, n}(\mathcal{N}_{A}).
\end{align*}
The liberal regularized relative entropy of a channel $ \mathcal{N}_{A} $ with respect to a free resource $ \mathcal{F} $ is defined as
\begin{equation}\label{key}
	D_{\mathcal{F}}^{(\infty)}(\mathcal{N}_{A})\coloneqq \lim_{n\to \infty} \dfrac{1}{n} \max_{\varphi_{A_{0}R_{0}}} \min_{\mathcal{M}\in\mathcal{F}} D\left( \mathcal{N}_{A}^{\otimes n}\left (\varphi_{A_{0}R_{0}}^{\otimes n}\right )\Vert \mathcal{M}_{A}^{\otimes n}\left (\varphi_{A_{0}R_{0}}^{\otimes n}\right )\right).
\end{equation}
It is shown in \cite{gour2019HowQuantifyDynamical} that the asymptotic equipartition property holds as
\begin{equation}\label{key}
	LR_{\mathcal{F}}^{(\infty)} (\mathcal{N}_{A}) = D_{\mathcal{F}}^{(\infty)}(\mathcal{N}_{A}).
\end{equation}

\begin{definition}
	Given $\varepsilon\ge 0$, the $ \varepsilon $-liberal one-shot dynamic entanglement cost of a bipartite quantum channel $ \mathcal{N}_{AB} $ under SEPPSC is defined as
	\begin{equation}\label{key}
		E_{C_{l},\seppsc}^{(1), \varepsilon}(\mathcal{N}_{AB}) \coloneqq \max_{\varphi_{A_{0}A_{0}'B_{0}B_{0}'}} E_{C_{l},\seppsc}^{(1), \varepsilon, \varphi}(\mathcal{N}_{AB}),
	\end{equation}
	where
	\begin{equation}\label{key}
		E_{C_{l},\seppsc}^{(1), \varepsilon, \varphi}(\mathcal{N}_{AB}) \coloneqq \min_{\mathcal{N}_{AB}'\in B_{\varepsilon}^{\varphi}(\mathcal{N}_{AB})} E_{C,\seppsc}^{(1),0}(\mathcal{N}_{AB}').
	\end{equation}
	The liberal (asymptotic) dynamic entanglement cost of a bipartite quantum channel $ \mathcal{N}_{AB} $ under SEPPSC is defined as
	\begin{equation}\label{key}
		E_{C_{l},\seppsc}(\mathcal{N}_{AB}) \coloneqq \lim_{\varepsilon\to 0^{+}}\liminf_{n\to \infty} \dfrac{1}{n} \max_{\varphi_{A_{0}A_{0}'B_{0}B_{0}'}} E_{C_{l},\seppsc}^{(1), \varepsilon, \varphi^{\otimes n}}(\mathcal{N}_{AB}^{\otimes n}).
	\end{equation}
\end{definition}
While an operational meaning of the above quantity is missing yet, it is given by the liberal regularized relative entropy:
\begin{theorem}
	The liberal dynamic entanglement cost of a bipartite quantum channel $ \mathcal{N}_{AB} $ is given by
	\begin{equation}\label{key}
		E_{C_{l},\seppsc}(\mathcal{N}_{AB}) = D_{\sepc}^{(\infty)}(\mathcal{N}_{AB}).
	\end{equation}
\end{theorem}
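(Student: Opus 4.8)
The plan is to transfer the one-shot bound of Theorem~\ref{thm: one-shot dyn. ent. cost}, evaluated at zero error, through the regularization procedure, showing that its additive constant becomes asymptotically negligible. At $\varepsilon=0$ that theorem reads $LR_{s}(\mathcal{N}_{AB}') \le E_{C,\seppsc}^{(1),0}(\mathcal{N}_{AB}') \le LR_{s}(\mathcal{N}_{AB}') + 2$ for every bipartite channel $\mathcal{N}_{AB}'$, since $LR_{s}^{0}=LR_{s}$. First I would fix a reference state $\varphi$ and minimize this chain over the liberal ball $B_{\varepsilon}^{\varphi}(\mathcal{N}_{AB})$. Because the additive $2$ survives the minimization unchanged, this produces $LR_{s,\sepc}^{\varepsilon,\varphi}(\mathcal{N}_{AB}) \le E_{C_{l},\seppsc}^{(1),\varepsilon,\varphi}(\mathcal{N}_{AB}) \le LR_{s,\sepc}^{\varepsilon,\varphi}(\mathcal{N}_{AB}) + 2$, where $LR_{s,\sepc}^{\varepsilon,\varphi}(\mathcal{N}_{AB}) \coloneqq \min_{\mathcal{N}_{AB}'\in B_{\varepsilon}^{\varphi}(\mathcal{N}_{AB})} LR_{s}(\mathcal{N}_{AB}')$ is the standard log-robustness liberally smoothed over exactly the same ball $B_{\varepsilon}^{\varphi}$ that defines $E_{C_{l},\seppsc}^{(1),\varepsilon,\varphi}$. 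This deliberate matching of the smoothing balls is what makes cost and log-robustness directly comparable.

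Next I would apply the sandwich to the tensor powers $\mathcal{N}_{AB}^{\otimes n}$ with product references $\varphi^{\otimes n}$, take $\max_{\varphi}$, and divide by $n$. The pointwise inequalities are preserved under $\max_{\varphi}$, and $\max_{\varphi}$ commutes with the additive constant, so $\frac{1}{n}\max_{\varphi} LR_{s,\sepc}^{\varepsilon,\varphi^{\otimes n}}(\mathcal{N}_{AB}^{\otimes n}) \le \frac{1}{n}\max_{\varphi} E_{C_{l},\seppsc}^{(1),\varepsilon,\varphi^{\otimes n}}(\mathcal{N}_{AB}^{\otimes n}) \le \frac{1}{n}\max_{\varphi} LR_{s,\sepc}^{\varepsilon,\varphi^{\otimes n}}(\mathcal{N}_{AB}^{\otimes n}) + \frac{2}{n}$. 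Taking $\liminf_{n\to\infty}$ and then $\lim_{\varepsilon\to 0^{+}}$ makes $2/n$ vanish, so both outer quantities converge to the liberal regularized standard log-robustness $LR_{s,\sepc}^{(\infty)}(\mathcal{N}_{AB})$, squeezing the cost to $E_{C_{l},\seppsc}(\mathcal{N}_{AB}) = LR_{s,\sepc}^{(\infty)}(\mathcal{N}_{AB})$.

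It remains to identify this regularized standard log-robustness with $D_{\sepc}^{(\infty)}$, and this is the step I expect to be the main obstacle. The asymptotic equipartition property recalled above, $LR_{\mathcal{F}}^{(\infty)} = D_{\mathcal{F}}^{(\infty)}$ from \cite{gour2019HowQuantifyDynamical}, is phrased for the log-robustness generated by the \emph{generalized} robustness $R$, whereas my chain delivers the \emph{standard} version $LR_{s,\sepc}^{(\infty)}$. I therefore need the equality $LR_{s,\sepc}^{(\infty)} = LR_{\sepc}^{(\infty)}$. One inequality is immediate, since $R_{s}\ge R$ gives $LR_{s,\sepc}^{(\infty)} \ge LR_{\sepc}^{(\infty)} = D_{\sepc}^{(\infty)}$. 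The reverse inequality is the delicate part; here I would use that the separable channels form a closed, convex set of full affine dimension inside the trace-preserving maps (their Choi states contain a relative ball about the maximally mixed state), which renders the standard robustness finite and, via the same smoothing-and-regularization argument underlying the cited AEP, forces $LR_{s,\sepc}^{(\infty)} \le D_{\sepc}^{(\infty)}$. Combining the two gives $E_{C_{l},\seppsc}(\mathcal{N}_{AB}) = D_{\sepc}^{(\infty)}(\mathcal{N}_{AB})$. The rest is bookkeeping: maintaining the order of $\max_{\varphi}$, $\liminf_{n}$ and $\lim_{\varepsilon\to 0^{+}}$, and checking once more that $B_{\varepsilon}^{\varphi}$ is common to the cost and to the smoothed log-robustness.
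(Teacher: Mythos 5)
Your first two paragraphs coincide with the paper's own proof: the paper derives exactly the same sandwich $LR_{\sepc}^{\varepsilon,\varphi}(\mathcal{N}_{AB}) \le E_{C_{l},\seppsc}^{(1),\varepsilon,\varphi}(\mathcal{N}_{AB}) \le LR_{\sepc}^{\varepsilon,\varphi}(\mathcal{N}_{AB})+2$ from Theorem~\ref{thm: one-shot dyn. ent. cost} over matched liberal balls, lets the additive $2$ vanish under regularization, and then appeals to the AEP of \cite{gour2019HowQuantifyDynamical}.

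The divergence is your third paragraph, and there you have correctly identified the step that the paper's two-line proof hides inside the symbol $LR_{\sepc}$ --- but your attempt to fill it does not succeed. The sandwich inherited from Theorem~\ref{thm: one-shot dyn. ent. cost} involves the \emph{standard} log-robustness: its dilution construction mixes $\mathcal{N}_{AB}^{\varepsilon}$ with a \emph{separable} channel $\mathcal{M}_{AB}$, and separability of the output mixture on separable inputs is exactly what forces $R_{s}$ rather than $R$. The cited AEP, by contrast, concerns $\min_{\mathcal{M}\in\sepc}D_{\max}(\cdot\Vert\mathcal{M})$, i.e.\ the \emph{generalized} log-robustness. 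So, as you say, the missing link is $LR_{s,\sepc}^{(\infty)}\le D_{\sepc}^{(\infty)}$. Your sketch for it --- $\sepc$ has full affine dimension, hence $R_{s}$ is finite, hence ``the same smoothing-and-regularization argument'' applies --- fails for two reasons. First, interiority (Gurvits--Barnum-type balls around the maximally mixed Choi state) converts a generalized decomposition into a standard one only at a multiplicative, dimension-dependent price of the form $1+R_{s} \le \mathrm{poly}\pa{\abs{A_{0}}^{n}\abs{B_{0}}^{n}}(1+R)$; the logarithm of this overhead grows linearly in $n$, so it does \emph{not} disappear after dividing by $n$ and cannot collapse $LR_{s,\sepc}^{(\infty)}$ onto $LR_{\sepc}^{(\infty)}$. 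Second, the AEP machinery itself (the generalized quantum Stein's lemma as used by Gour--Winter) produces operator inequalities $\widetilde{\mathcal{N}}\le 2^{\lambda}\mathcal{M}$ with $\mathcal{M}$ free and the complementary part arbitrary; this certifies only the generalized robustness, and upgrading the complementary part to a separable channel is precisely the obstruction --- it is the very reason this paper (like Brand\~ao--Plenio for states) retreats to $\delta$-SEPPSC and the generalized robustness when treating cost in the catalytic section. The net effect is that your argument establishes $E_{C_{l},\seppsc}(\mathcal{N}_{AB})\ge D_{\sepc}^{(\infty)}(\mathcal{N}_{AB})$ but leaves the converse inequality resting on the unproven identification of the liberal-regularized standard and generalized log-robustnesses; you should not present that identification as a consequence of the quoted AEP (and, to be fair, the paper's own proof is silent on exactly the same point).
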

\begin{proof}
	From Theorem \ref{thm: one-shot dyn. ent. cost}, for any $ \varepsilon\ge 0 $ and $ \varphi $ it holds that
	\begin{equation}\label{key}
		LR_{\sepc}^{\varepsilon,\varphi}(\mathcal{N}_{AB}) \le E_{C_{l},\seppsc}^{(1),\varepsilon,\varphi}(\mathcal{N}_{AB})\le LR_{\sepc}^{\varepsilon,\varphi}(\mathcal{N}_{AB}) +2.
	\end{equation}
	The asymptotic equipartition property leads to the conclusion.
\end{proof}

\subsection{A Few Technical Results}
\begin{proposition}\label{thm: max fidelity btw. ent. sep. states}
   Let $ \ket{\Phi^{K}}_{A_{0}B_{0}}=\frac{1}{\sqrt{K}}\sum_{i=0}^{K-1}\ket{ii}_{A_{0}B_{0}} $ be a $ K $-maximally entangled state where $ \abs{A_{0}}\equiv\dim A_{0}\ge K $ and $ \abs{B_{0}}\equiv\dim B_{0}\ge K $. We have that
	\begin{equation}
	\max_{\sigma_{A_{0}B_{0}}\in \sepd(A_{0}:B_{0})} \tr \Phi_{A_{0}B_{0}}^{K} \sigma_{A_{0}B_{0}} = \dfrac{1}{K}.
	\end{equation}
\end{proposition}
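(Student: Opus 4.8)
The plan is to reduce the optimization to pure product states and then invoke the Cauchy--Schwarz inequality. Since the functional $\sigma_{A_{0}B_{0}}\mapsto \tr\pa{\Phi_{A_{0}B_{0}}^{K}\,\sigma_{A_{0}B_{0}}}$ is linear and $\sepd(A_{0}:B_{0})$ is a convex set whose extreme points are pure product states, the maximum is attained at some $\sigma_{A_{0}B_{0}}=\ket{\alpha}\!\bra{\alpha}_{A_{0}}\otimes\ket{\beta}\!\bra{\beta}_{B_{0}}$. The problem therefore becomes that of maximizing the squared overlap $\abs{\bracket{\Phi^{K}}{\alpha\beta}}^{2}$ over normalized vectors $\ket{\alpha}_{A_{0}}$ and $\ket{\beta}_{B_{0}}$.

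Next, I would expand this overlap in the computational basis. Writing $\alpha_{i}=\bracket{i}{\alpha}$ and $\beta_{i}=\bracket{i}{\beta}$, the definition of $\ket{\Phi^{K}}$ yields
\begin{equation*}
\bracket{\Phi^{K}}{\alpha\beta}=\frac{1}{\sqrt{K}}\sum_{i=0}^{K-1}\alpha_{i}\beta_{i}.
\end{equation*}
Applying the Cauchy--Schwarz inequality to the two finite sequences $(\alpha_{i})$ and $(\beta_{i})$, and using the normalization conditions $\sum_{i}\abs{\alpha_{i}}^{2}=\sum_{i}\abs{\beta_{i}}^{2}=1$ (so that the truncated sums are no larger than $1$), I obtain
\begin{equation*}
\abs{\bracket{\Phi^{K}}{\alpha\beta}}^{2}=\frac{1}{K}\abs{\sum_{i=0}^{K-1}\alpha_{i}\beta_{i}}^{2}\le\frac{1}{K}\pa{\sum_{i=0}^{K-1}\abs{\alpha_{i}}^{2}}\pa{\sum_{i=0}^{K-1}\abs{\beta_{i}}^{2}}\le\frac{1}{K},
\end{equation*}
which establishes the upper bound.

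Finally, I would verify that the bound is tight by exhibiting a saturating separable state. The choice $\ket{\alpha}=\ket{\beta}=\frac{1}{\sqrt{K}}\sum_{i=0}^{K-1}\ket{i}$ --- which is legitimate precisely because $\abs{A_{0}}\ge K$ and $\abs{B_{0}}\ge K$ --- gives $\sum_{i=0}^{K-1}\alpha_{i}\beta_{i}=\sum_{i=0}^{K-1}\frac{1}{K}=1$, so the overlap equals exactly $1/K$. There is no genuine obstacle in this argument; the only points that require a little care are the reduction to the extreme points of the separable set and the bookkeeping in the Cauchy--Schwarz step, both of which are routine.
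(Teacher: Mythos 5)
Your proof is correct and follows essentially the same route as the paper's: reduce to pure product states and bound the overlap $\abs{\bracket{\Phi^{K}}{\alpha\beta}}^{2}$ via the Cauchy--Schwarz inequality. The only difference is cosmetic --- you invoke extremality of pure product states while the paper carries the convex decomposition through the estimate --- and you additionally exhibit the saturating state $\ket{\alpha}=\ket{\beta}=\frac{1}{\sqrt{K}}\sum_{i}\ket{i}$, an achievability step the paper's proof leaves implicit even though the statement asserts equality.
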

\begin{proof}
	A separable state $ \sigma_{A_{0}B_{0}} $ can be written as a convex sum of pure product states $ \sigma_{A_{0}B_{0}} = \sum_{\alpha}p_{\alpha}\phi_{\alpha}\otimes \psi_{\alpha} $:
	\begin{align*}
	\tr \Phi_{A_{0}B_{0}}^{K} \sigma_{A_{0}B_{0}} &= \dfrac{1}{K}\sum_{i,j=0}^{K-1}\sum_{\alpha} p_{\alpha} \bracket{i}{\phi_{\alpha}}\bracket{i}{\psi_{\alpha}}\bracket{\phi_{\alpha}}{j}\bracket{\psi_{\alpha}}{j}\\
	&=\dfrac{1}{K}\sum_{\alpha}p_{\alpha} \left\lvert \sum_{i=0}^{K-1} \bracket{i}{\phi_{\alpha}}\bracket{i}{\psi_{\alpha}} \right\rvert^{2}\\
	&\le \dfrac{1}{K}\sum_{\alpha} p_{\alpha} \left\{ \sum_{i=0}^{K-1} \abs{\bracket{i}{\phi_{\alpha}}}^{2} \right\} \left\{ \sum_{i=0}^{K-1} \abs{\bracket{i}{\psi_{\alpha}}}^{2} \right\}\\
	&\le \dfrac{1}{K},
	\end{align*}
	where the Cauchy-Schwarz inequality is used for the first inequality.
\end{proof}

\begin{theorem}[\cite{diamond_trace_dist_rel}]\label{thm: diamond distance - Choi trace distance}
	Let $ \mathcal{N}_{A} $ and $ \mathcal{M}_{A} $ be quantum channels, and $ J^{\mathcal{N}_{A}} $ and $ J^{\mathcal{M}_{A}} $ be their (normalized) Choi matrices, respectively. It holds that
	\begin{equation}\label{key}
	\dfrac{1}{\abs{A}}\pnorm[\diamond]{\mathcal{N}_{A} - \mathcal{M}_{A}}\le \pnorm[1]{J^{\mathcal{N}_{A}} - J^{\mathcal{M}_{A}}}\le \pnorm[\diamond]{\mathcal{N}_{A} - \mathcal{M}_{A}}.
	\end{equation}
\end{theorem}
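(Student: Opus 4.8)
The plan is to prove the two inequalities separately, working with the difference map $ \Delta \coloneqq \mathcal{N}_{A} - \mathcal{M}_{A} $, which is Hermiticity-preserving and trace-annihilating. I write $ \ket{\Omega}_{A_{0}\widetilde{A}_{0}} = \sum_{i}\ket{ii} $ for the \emph{unnormalized} maximally entangled vector, so that the unnormalized Choi operator is $ C_{\Delta} = (\Delta \otimes \mathsf{id})(\ket{\Omega}\!\bra{\Omega}) $ and the normalized Choi matrix of the statement satisfies $ J^{\mathcal{N}_{A}} - J^{\mathcal{M}_{A}} = \frac{1}{\abs{A}}C_{\Delta} $, where $ \abs{A} = \dim A_{0} $ is the input dimension. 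The whole argument rests on converting between these two normalizations at the right moment.

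The upper bound is immediate from the definition of the diamond norm. The normalized Choi matrix is exactly the output of $ \Delta $ on the single input $ \Phi_{A_{0}\widetilde{A}_{0}} $, the maximally entangled state acted on by the identity on the reference $ \widetilde{A}_{0} $. Since $ \dnorm{\Delta} = \sup_{\rho}\pnorm[1]{(\Delta \otimes \mathsf{id})(\rho)} $ is a supremum over \emph{all} inputs $ \rho $ with an arbitrary reference system, evaluating at the particular choice $ \rho = \Phi_{A_{0}\widetilde{A}_{0}} $ can only give a smaller or equal value, so $ \pnorm[1]{J^{\mathcal{N}_{A}} - J^{\mathcal{M}_{A}}} \le \dnorm{\mathcal{N}_{A} - \mathcal{M}_{A}} $.

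For the lower bound I would invoke two standard reductions. First, by convexity of the trace norm it suffices to evaluate the diamond norm on pure inputs, and a reference system of dimension $ \abs{A} $ is known to suffice; fix such a normalized $ \ket{\psi}_{A_{0}R} $. Second, I use the transpose (``ricochet'') trick to write $ \ket{\psi}_{A_{0}R} = (I_{A_{0}}\otimes M_{\widetilde{A}_{0}\to R})\ket{\Omega}_{A_{0}\widetilde{A}_{0}} $ for a suitable operator $ M $ obeying $ \pnorm[2]{M}^{2} = \tr(M^{\dag}M) = \bracket{\psi}{\psi} = 1 $. Because $ \Delta $ acts on $ A_{0} $ while $ M $ acts only on the reference factor, the two commute through the map and yield the key identity $ (\Delta \otimes \mathsf{id}_{R})(\ket{\psi}\!\bra{\psi}) = (I_{A_{1}}\otimes M)\,C_{\Delta}\,(I_{A_{1}}\otimes M^{\dag}) $.

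From here the estimate is a one-line application of H\"older's inequality $ \pnorm[1]{AXB} \le \opnorm{A}\,\pnorm[1]{X}\,\opnorm{B} $ combined with $ \opnorm{M} \le \pnorm[2]{M} = 1 $: it gives $ \pnorm[1]{(\Delta \otimes \mathsf{id})(\ket{\psi}\!\bra{\psi})} \le \pnorm[1]{C_{\Delta}} = \abs{A}\,\pnorm[1]{J^{\mathcal{N}_{A}} - J^{\mathcal{M}_{A}}} $. Taking the supremum over $ \ket{\psi} $ delivers $ \dnorm{\mathcal{N}_{A} - \mathcal{M}_{A}} \le \abs{A}\,\pnorm[1]{J^{\mathcal{N}_{A}} - J^{\mathcal{M}_{A}}} $, which is the claimed lower bound after dividing by $ \abs{A} $. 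The only genuinely nontrivial ingredient, and hence the step to get exactly right, is the ricochet identity together with the observation that estimating $ \opnorm{M} $ by the Hilbert--Schmidt norm $ \pnorm[2]{M} $ is precisely what turns the factor $ \abs{A} $ concealed in the unnormalized Choi operator into the dimensional prefactor of the statement; the remaining work is just bookkeeping of normalizations.
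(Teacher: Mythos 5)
Your proof is correct and takes essentially the same route as the paper's: the upper bound by evaluating the diamond norm at the maximally entangled input, and the lower bound by writing the optimal pure input as $ \pa{I\otimes M}\ket{\Omega} $ (the paper's $ X_{R_{0}} $) and applying H\"older's inequality together with $ \opnorm{M}\le \pnorm[2]{M}=1 $. The normalization bookkeeping converting the unnormalized Choi operator into the factor $ \abs{A} $ matches the paper's argument exactly.
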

\begin{proof}
	The second inequality follows from the definition of the diamond norm.	For the first inequality, let $ \Psi _{A_{0}R_{0}}$ be the optimal pure state for the diamond distance as
	\begin{equation}\label{key}
		\dnorm{\mathcal{N}_{A} - \mathcal{M}_{A}} = \pnorm[1]{\mathcal{N}_{A}\otimes \mathsf{id}_{R_{0}}(\Psi_{A_{0}R_{0}}) - \mathcal{M}_{A}\otimes \mathsf{id}_{R_{0}}(\Psi_{A_{0}R_{0}})},
	\end{equation}
	where $ \abs{A_{0}}  = \abs{R_{0}}$. One can denote $ \ket{\Psi}_{A_{0}R_{0}} = I_{A_{0}}\otimes X_{R_{0}}\ket{\phi^{+}}_{A_{0}R_{0}} $, where $ \ket{\phi^{+}}_{A_{0}R_{0}} = \sum_{i=0}^{\abs{A_{0}}-1}\ket{ii}_{A_{0}R_{0}} $ and the operator $ X_{R_{0}} $ satisfies $ \tr_{A_{0}R_{0}}\Psi_{A_{0}R_{0}} = 1 = \tr_{R_{0}}X_{R_{0}}^{\dag}X_{R_{0}}=\pnorm[2]{X_{R_{0}}}^{2} $. With $ \Phi_{A_{0}R_{0}}^{+} = \frac{\phi_{A_{0}R_{0}}^{+}}{\abs{A_{0}}} $, we have
	\begin{align*}
		\dnorm{\mathcal{N}_{A} - \mathcal{M}_{A}} & = \pnorm[1]{\mathcal{N}_{A}\otimes \mathsf{id}_{R_{0}}(\Psi_{A_{0}R_{0}}) - \mathcal{M}_{A}\otimes \mathsf{id}_{R_{0}}(\Psi_{A_{0}R_{0}})}\\
		&=\pnorm[1]{\left( \mathcal{N}_{A} - \mathcal{M}_{A} \right)\otimes \mathsf{id}_{R_{0}} \left( I_{A_{0}}\otimes X_{R_{0}}\cdot \abs{A_{0}}\Phi_{A_{0}R_{0}}^{+} \cdot I_{A_{0}}\otimes X_{R_{0}}^{\dag} \right)}\\
		&=\abs{A_{0}} \pnorm[1]{ I_{A_{0}}\otimes X_{R_{0}}\cdot \left( J^{\mathcal{N}_{A}} - J^{\mathcal{M}_{A}} \right) \cdot I_{A_{0}}\otimes X_{R_{0}}^{\dag} }\\
		&\le \abs{A_{0}} \pnorm[\infty]{X_{R_{0}}}\pnorm[\infty]{X_{R_{0}}^{\dag}}\pnorm[1]{J^{\mathcal{N}_{A}} - J^{\mathcal{M}_{A}}}\\
		&\le \abs{A_{0}} \pnorm[2]{X_{R_{0}}}\pnorm[2]{X_{R_{0}}^{\dag}}\pnorm[1]{J^{\mathcal{N}_{A}} - J^{\mathcal{M}_{A}}}\\
		&\le \abs{A_{0}}\pnorm[1]{J^{\mathcal{N}_{A}} - J^{\mathcal{M}_{A}}},
	\end{align*}
where we used the H\"older inequality for the first inequality.
\end{proof}

\begin{theorem}\label{thm: diamond dist. to fidelity}
	Let $ \mathcal{N}_{A} $ and $ \mathcal{M}_{A} $ be quantum channels. Given $ \varepsilon \ge 0 $, we have
	\begin{equation}\label{key}
	\dfrac{1}{2}\pnorm[\diamond]{\mathcal{N}_{A}-\mathcal{M}_{A}} \le \varepsilon \implies \min_{\Psi_{A_{0}R_{0}}} F(\mathcal{N}_{A}(\Psi_{A_{0}R_{0}}), \mathcal{M}_{A}(\Psi_{A_{0}R_{0}}))  \ge (1-\varepsilon)^{2}\ge 1-2\varepsilon.
	\end{equation}
	Conversely, it follows that
	\begin{equation}\label{key}
	\min_{\Psi_{A_{0}R_{0}}}F(\mathcal{N}_{A}(\Psi_{A_{0}R_{0}}), \mathcal{M}_{A}(\Psi_{A_{0}R_{0}}))  \ge 1-\varepsilon \implies \dfrac{1}{2}\pnorm[\diamond]{\mathcal{N}_{A}-\mathcal{M}_{A}} \le \sqrt{\varepsilon}.
	\end{equation}
\end{theorem}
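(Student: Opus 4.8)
The plan is to reduce both implications to the state-level Fuchs--van de Graaf inequalities, exploiting the fact that the diamond norm is computed as a maximization of the output trace distance over pure input states on the doubled system. Recall that $\dnorm{\mathcal{N}_{A}-\mathcal{M}_{A}} = \max_{\Psi_{A_{0}R_{0}}}\pnorm[1]{\mathcal{N}_{A}\otimes\mathsf{id}_{R_{0}}(\Psi_{A_{0}R_{0}}) - \mathcal{M}_{A}\otimes\mathsf{id}_{R_{0}}(\Psi_{A_{0}R_{0}})}$, where the maximum runs over pure states with $\abs{R_{0}}=\abs{A_{0}}$. Since both the fidelity minimization in the statement and this trace-distance maximization quantify over the same family of test states, the two implications become pointwise statements about pairs of output states, promoted to the optimized quantities in the appropriate direction.

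For the forward direction, I would fix an arbitrary pure state $\Psi_{A_{0}R_{0}}$ and write $\rho = \mathcal{N}_{A}\otimes\mathsf{id}_{R_{0}}(\Psi_{A_{0}R_{0}})$ and $\sigma = \mathcal{M}_{A}\otimes\mathsf{id}_{R_{0}}(\Psi_{A_{0}R_{0}})$. From the hypothesis $\frac{1}{2}\dnorm{\mathcal{N}_{A}-\mathcal{M}_{A}}\le\varepsilon$ and the definition of the diamond norm, we obtain $\frac{1}{2}\pnorm[1]{\rho-\sigma}\le\varepsilon$ for this particular $\Psi_{A_{0}R_{0}}$. The left-hand Fuchs--van de Graaf inequality $1-\sqrt{F(\rho,\sigma)}\le\frac{1}{2}\pnorm[1]{\rho-\sigma}$ then yields $\sqrt{F(\rho,\sigma)}\ge 1-\varepsilon$, hence $F(\rho,\sigma)\ge(1-\varepsilon)^{2}$. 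Because this bound holds uniformly over all input states, taking the minimum preserves it, giving $\min_{\Psi_{A_{0}R_{0}}}F\ge(1-\varepsilon)^{2}$; the elementary estimate $(1-\varepsilon)^{2}=1-2\varepsilon+\varepsilon^{2}\ge1-2\varepsilon$ closes this part.

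For the converse, I would start from $\min_{\Psi_{A_{0}R_{0}}}F(\mathcal{N}_{A}(\Psi_{A_{0}R_{0}}),\mathcal{M}_{A}(\Psi_{A_{0}R_{0}}))\ge 1-\varepsilon$, so that every input state satisfies $F\ge 1-\varepsilon$. Applying the right-hand Fuchs--van de Graaf inequality $\frac{1}{2}\pnorm[1]{\rho-\sigma}\le\sqrt{1-F(\rho,\sigma)}$ to each output pair gives $\frac{1}{2}\pnorm[1]{\mathcal{N}_{A}(\Psi_{A_{0}R_{0}})-\mathcal{M}_{A}(\Psi_{A_{0}R_{0}})}\le\sqrt{\varepsilon}$ for every $\Psi_{A_{0}R_{0}}$. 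Taking the maximum over input states on the left-hand side recovers the diamond norm, so $\frac{1}{2}\dnorm{\mathcal{N}_{A}-\mathcal{M}_{A}}\le\sqrt{\varepsilon}$, as claimed.

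I do not anticipate a genuine obstacle here; both directions follow at once from the two halves of Fuchs--van de Graaf, applied pointwise. The only subtlety worth flagging is that the optimizers need not coincide --- the state maximizing the output trace distance is generally not the one minimizing the output fidelity --- but this is irrelevant, since in the forward direction the bound $F\ge(1-\varepsilon)^{2}$ holds for \emph{every} $\Psi_{A_{0}R_{0}}$ and hence survives the minimization, while in the converse the bound $\frac{1}{2}\pnorm[1]{\cdot}\le\sqrt{\varepsilon}$ holds for \emph{every} $\Psi_{A_{0}R_{0}}$ and hence survives the maximization defining $\dnorm{\cdot}$. One should also recall that the diamond norm is attained on pure input states with a reference of dimension $\abs{A_{0}}$, so that the two optimization domains genuinely agree and no larger ancilla need be introduced.
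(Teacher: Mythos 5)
Your proposal is correct and is essentially the paper's own argument: the paper proves this theorem in one line by invoking the two Fuchs--van de Graaf inequalities together with $(1-\varepsilon)^{2}\ge 1-2\varepsilon$, exactly as you do. Your write-up merely makes explicit the (valid) pointwise-over-inputs reasoning that lets the bounds survive the minimization and maximization, which the paper leaves implicit.
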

\begin{proof}
	Both follow from the Fuchs-van de Graaf inequality, while $ (1-\varepsilon)^{2}=1-2\varepsilon +\varepsilon^{2}\ge 1-2\varepsilon$ for $ \varepsilon \ge 0 $.
\end{proof}
\begin{proposition}
	Let $ \mathcal{N}_{A} $ and $ \mathcal{M}_{A} $ be quantum channels, and $ J^{\mathcal{N}_{A}} $ and $ J^{\mathcal{M}_{A}} $ be their (normalized) Choi matrices, respectively. If $ F\left( J^{\mathcal{N}_{A}},J^{\mathcal{M}_{A}} \right) \ge 1-\varepsilon $, then $ \frac{1}{2}\pnorm[\diamond]{\mathcal{N}_{A}-\mathcal{M}_{A}}\le n\sqrt{\varepsilon} $.
\end{proposition}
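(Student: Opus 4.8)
The plan is to obtain the claimed bound by chaining together two ingredients that are already at our disposal in the excerpt: Theorem~\ref{thm: diamond distance - Choi trace distance}, which relates the diamond distance to the trace distance between the normalized Choi matrices, and the Fuchs--van de Graaf inequality, which converts the fidelity hypothesis into a trace-distance estimate. The constant $n$ appearing in the statement will be identified with the input dimension $\abs{A}$.

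First I would invoke the left-hand inequality of Theorem~\ref{thm: diamond distance - Choi trace distance}, namely $\frac{1}{\abs{A}}\dnorm{\mathcal{N}_{A} - \mathcal{M}_{A}} \le \pnorm[1]{J^{\mathcal{N}_{A}} - J^{\mathcal{M}_{A}}}$. Rearranging and halving both sides yields
\[
\frac{1}{2}\dnorm{\mathcal{N}_{A} - \mathcal{M}_{A}} \le \abs{A}\cdot \frac{1}{2}\pnorm[1]{J^{\mathcal{N}_{A}} - J^{\mathcal{M}_{A}}},
\]
so the entire problem is reduced to controlling the trace distance between the two Choi states.

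Next I would apply the Fuchs--van de Graaf inequality in the form $\frac{1}{2}\pnorm[1]{\rho - \sigma} \le \sqrt{1 - F(\rho,\sigma)}$ to the pair $J^{\mathcal{N}_{A}}, J^{\mathcal{M}_{A}}$, and then substitute the hypothesis $F\!\left(J^{\mathcal{N}_{A}}, J^{\mathcal{M}_{A}}\right) \ge 1-\varepsilon$, which gives $1 - F \le \varepsilon$ and hence $\frac{1}{2}\pnorm[1]{J^{\mathcal{N}_{A}} - J^{\mathcal{M}_{A}}} \le \sqrt{\varepsilon}$. Combining this with the previous display produces
\[
\frac{1}{2}\dnorm{\mathcal{N}_{A} - \mathcal{M}_{A}} \le \abs{A}\,\sqrt{\varepsilon},
\]
which is exactly the asserted estimate once we read $n = \abs{A}$.

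Since both the diamond-to-Choi conversion and the Fuchs--van de Graaf inequality are already established, there is no genuine analytic obstacle here; the proof is a direct two-step chaining. The only point requiring care is bookkeeping of the dimensional prefactor: one must make sure that the factor $\abs{A}$ arising from Theorem~\ref{thm: diamond distance - Choi trace distance} is propagated correctly through the fidelity bound and matched against the constant $n$ in the statement.
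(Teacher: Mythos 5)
Your proposal is correct and is essentially identical to the paper's own proof: both chain the left-hand inequality of Theorem~\ref{thm: diamond distance - Choi trace distance} with the Fuchs--van de Graaf inequality applied to the Choi states, reading the constant $n$ as the input dimension $\abs{A}$. No gaps; the dimensional prefactor is handled exactly as in the paper.
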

\begin{proof}
	From Theorem \ref{thm: diamond distance - Choi trace distance} and the Fuchs-van der Graaf inequality, it follows that
	\begin{align*}
	\dfrac{1}{2}\pnorm[\diamond]{\mathcal{N}_{A}-\mathcal{M}_{A}}&\le n\, \dfrac{1}{2}\pnorm[1]{J^{\mathcal{N}_{A}} - J^{\mathcal{M}_{A}}}\\
	&\le n \sqrt{1 - F\left( J^{\mathcal{N}_{A}},  J^{\mathcal{M}_{A}} \right)}\\
	&\le n\sqrt{\varepsilon}.
	\end{align*}
\end{proof}

\printbibliography
\end{document}